\renewcommand\footnotetextcopyrightpermission[1]{} 
\Crefname{figure}{Fig.}{Figures}
\newcommand{\IR}{\mathcal{R}}
\newcommand{\R}{\mathbb{R}}
\newcommand{\bool}{\mathbb{B}}
\newcommand{\nat}{\mathbb{N}}
\newcommand{\clang}{\ensuremath{\lambda_C}}
\newcommand{\dlang}{\ensuremath{\lambda_S}}
\newcommand{\cend}[2]{\int_{#1} #2}
\newcommand{\coend}[2]{\int^{#1} #2}
\newcommand{\Comment}[1]{}
\newcommand{\gor}{\ \ \mid\ \ }
\newcommand{\kint}[2]{[#1, #2]}
\newcommand{\denote}[1]{\llbracket #1 \rrbracket}
\newcommand{\sto}{\leadsto}
\newcommand{\cat}[1]{\textbf{\textup{#1}}}
\newcommand{\stimes}{\times}
\newcommand{\Cvx}{\mathbf{C}}
\newcommand{\y}[1]{\overline{#1}}
\newcommand{\codethensec}{\vspace{-2em}}
\newcommand{\codethenpar}{\vspace{-1em}}
\definecolor{dblue}{rgb}{0, 0, 0.8}
\definecolor{dred}{rgb}{0.8, 0, 0}
\definecolor{contcolor}{rgb}{0,0,0}
\definecolor{smoothcolor}{rgb}{0,0,0}
\newcommand{\Yes}{\textcolor{dblue}{\ding{51}}}
\newcommand{\No}{\textcolor{dred}{{\ding{55}}}}
\newcommand{\smooth}[1]{\textcolor{smoothcolor}{#1}}
\newcommand{\cont}[1]{\textcolor{contcolor}{#1}}
\newcommand\restr[2]{{
  \left.\kern-\nulldelimiterspace 
  #1 
  \vphantom{|} 
  \right|_{#2} 
  }}
\newenvironment{smoothc}{\color{smoothcolor}}{\ignorespacesafterend}
\newenvironment{contc}{\color{contcolor}}{\ignorespacesafterend}
\newenvironment{qsmooth}{\left\llbracket \begingroup\color{smoothcolor}}{\endgroup\right\rrbracket_\cat{HAD}\ignorespacesafterend}
\newenvironment{qvect}{\left\llbracket \begingroup\color{smoothcolor}}{\endgroup\right\rrbracket_\cat{AD}\ignorespacesafterend}
\newenvironment{uqcont}{\left\lfloor \begingroup\color{contcolor}}{\endgroup\right\rfloor\ignorespacesafterend}
\newenvironment{uq}{\left\lfloor}{\right\rfloor\ignorespacesafterend}
\newmdenv[backgroundcolor=lightgray!20, linewidth=0pt, skipabove=0em]{repl}
\newmdenv[backgroundcolor=orange!20, linewidth=0pt, skipbelow=0em]{prompt}
\newcommand*{\ESHRequirePackage}[2][]
  {\IfFileExists{#2.sty}
    {\RequirePackage[#1]{#2}}
    {\PackageError{ESH}{Missing LaTeX dependency: please install #2}
      {ESH requires relsize, ulem, and xcolor to work properly.  On Debian derivatives, these packages are provided by texlive-recommended and texlive-latex-extra.}}}
\newcommand*{\ESHFontSize}{}
\newcommand*{\ESHFontFamily}{\ttfamily}
\newcommand*{\ESHInlineFontSize}{\ESHFontSize}
\newcommand*{\ESHInlineFontFamily}{\ESHFontFamily}
\newcommand*{\ESHInlineFont}{\ESHInlineFontSize\ESHInlineFontFamily}
\newcommand*{\ESHInlineBlockFontSize}{\ESHFontSize}
\newcommand*{\ESHInlineBlockFontFamily}{\ESHFontFamily}
\newcommand*{\ESHInlineBlockFont}{\ESHInlineBlockFontSize\ESHInlineBlockFontFamily}
\newcommand*{\ESHBlockFontSize}{\ESHFontSize}
\newcommand*{\ESHBlockFontFamily}{\ESHFontFamily}
\newcommand*{\ESHBlockFont}{\ESHBlockFontSize\ESHBlockFontFamily}
\newcommand*{\ESHFallbackFontFamily}{\ESHFontFamily}
\newcommand*{\ESHFallbackFont}{\ESHFallbackFontFamily}
\newcommand*{\ESHNoHyphens}{\hyphenpenalty=10000}
\newcommand*{\ESHConstantSpace}{\spaceskip=\fontdimen2\font\xspaceskip=0pt}
\newdimen\ESHtempdim%
\newcommand*{\ESHCenterInWidthOf}[2]
  {\settowidth\ESHtempdim{#1}%
   \makebox[\ESHtempdim][c]{#2}}
\DeclareRobustCommand*{\ESHText}[1]{\ifmmode{\textnormal{#1}}\else{#1}\fi}
\newcommand*{\ESHIfFontChar}[1]
  {\iffontchar\font`#1{#1}\else{\ESHFallbackFont#1}\fi}
      \def\ESHWithFallback#1{\ESHFallbackFont#1}}
      \def\ESHWithFallback#1{\ESHIfFontChar{#1}}}}
   \def\ESHWithFallback#1{%
     \ifnum\XeTeXfonttype\font>0
       \ESHIfFontChar{#1}%
     \else
       \setbox0=\hbox{\tracinglostchars=0\kern1sp#1\expandafter}%
       \ifnum\lastkern=1{\ESHFallbackFont#1}\else{#1}\fi
     \fi}}
\DeclareRobustCommand*{\ESHInlineSpecialChar}[1]
  {{\ESHInlineFontFamily\ESHWithFallback{#1}}}
\DeclareRobustCommand*{\ESHBlockSpecialChar}[1]
  {{\ESHCenterInWidthOf{\ESHBlockFontFamily{a}}{\ESHBlockFontFamily\ESHWithFallback{#1}}}}
\DeclareRobustCommand*{\ESHInlineUnicodeSubstitution}[1]
  {{\ESHInlineFontFamily#1}}
\DeclareRobustCommand*{\ESHBlockUnicodeSubstitution}[1]
  {{\ESHCenterInWidthOf{\ESHBlockFontFamily{a}}{\ESHBlockFontFamily#1}}}
\DeclareRobustCommand*{\ESHMathSymbol}[1]{\ensuremath{#1}}
\newlength{\ESHCurFontSize}
\newcommand*{\ESHSetCurFontSize}{\setlength{\ESHCurFontSize}{\f@size pt}}
\DeclareRobustCommand*{\ESHInlineRaise}[2]
  {\ESHSetCurFontSize\raisebox{#1\ESHCurFontSize}{\relsize{-2}#2}}
\DeclareRobustCommand*{\ESHBlockRaise}[2]
  {\rlap{\ESHInlineRaise{#1}{#2}}\hphantom{#2}}
\newlength{\ESHBaselineskip}
\DeclareRobustCommand*{\ESHBlockStrut}[1]
  {\rule{0pt}{#1\ESHBaselineskip}}
\DeclareRobustCommand*{\ESHUnderline}[1]
  {\bgroup\UL@setULdepth\markoverwith{#1\rule[-\ULdepth]{2pt}{0.4pt}}\ULon}
\DeclareRobustCommand*{\ESHUnderwave}[1]
  {\bgroup\UL@setULdepth\markoverwith{#1\raisebox{-\ULdepth}{\raisebox{-.5\height}{\ESHSmallWaveFont\char58}}}\ULon}
\font\ESHSmallWaveFont=lasyb10 scaled 400
\newlength{\ESHBoxSep}
\newdimen\ESHBoxTempDim%
\newcommand*{\ESHInlineSlantItalic}[1]{\textit{#1}}
\newcommand*{\ESHBlockSlantItalic}[1]{{\itshape{#1}}} 
\newcommand*{\ESHBreakingSpace}{\ }
\newcommand*{\ESHNonbreakingSpace}{\nobreakspace}
\let\ESHSpecialChar\ignorespaces%
\let\ESHUnicodeSubstitution\ignorespaces%
\let\ESHRaise\ignorespaces%
\let\ESHBol\ignorespaces%
\let\ESHEmptyLine\ignorespaces%
\let\ESHEol\ignorespaces%
\let\ESHSpace\ignorespaces%
\let\ESHDash\ignorespaces%
\let\ESHSlantItalic\ignorespaces%
\DeclareRobustCommand*{\ESHInlineInternalSetup}
  {\def\ESHSpecialChar{\ESHInlineSpecialChar}\def\ESHUnicodeSubstitution{\ESHInlineUnicodeSubstitution}%
   \def\ESHRaise{\ESHInlineRaise}\def\ESHSlantItalic{\ESHInlineSlantItalic}%
   \def\ESHStrut{\relax}\def\ESHBol{\relax}\def\ESHEmptyLine{\mbox{}}\def\ESHEol{\newline}%
   \def\ESHSpace{\ESHBreakingSpace}%
   \def\ESHDash{-}}
\DeclareRobustCommand*{\ESHInlineBlockInternalSetup}
  {\def\arraystretch{1}%
   \def\ESHSpecialChar{\ESHBlockSpecialChar}\def\ESHUnicodeSubstitution{\ESHBlockUnicodeSubstitution}%
   \def\ESHRaise{\ESHBlockRaise}\def\ESHSlantItalic{\ESHBlockSlantItalic}%
   \setlength{\ESHBaselineskip}{\baselineskip}\def\ESHStrut{\ESHBlockStrut}%
   \def\ESHBol{\-}\def\ESHEmptyLine{\mbox{}}\def\ESHEol{\cr}%
   \def\ESHSpace{\ESHNonbreakingSpace}\def\ESHDash{\hbox{-}\nobreak}}
\DeclareRobustCommand*{\ESHBlockInternalSetup}
  {\def\ESHSpecialChar{\ESHBlockSpecialChar}\def\ESHUnicodeSubstitution{\ESHBlockUnicodeSubstitution}%
   \def\ESHRaise{\ESHBlockRaise}\def\ESHSlantItalic{\ESHBlockSlantItalic}%
   \setlength{\ESHBaselineskip}{\baselineskip}\def\ESHStrut{\ESHBlockStrut}%
   \def\ESHBol{\-}\def\ESHEmptyLine{\mbox{}}\def\ESHEol{\newline}%
   \def\ESHSpace{\ESHNonbreakingSpace}\def\ESHDash{\hbox{-}\nobreak}}
\newcommand*{\ESHInlineBasicSetup}
  {\leavevmode\ESHNoHyphens\ESHInlineFont}
\newcommand*{\ESHInlineBlockBasicSetup}
  {\ESHNoHyphens\ESHInlineBlockFont\ESHConstantSpace}
\newcommand*{\ESHBlockBasicSetup}
  {\setlength{\parindent}{0pt}\raggedright\ESHNoHyphens%
   \ESHBlockFont\ESHConstantSpace}
\newcommand*{\ESHHook}{}
\newcommand*{\ESHInlineHook}{\ESHHook}
\newcommand*{\ESHInlineBlockHook}{\ESHHook}
\newcommand*{\ESHBlockHook}{\ESHHook}
\DeclareRobustCommand*{\ESHInline}[1]
  {\bgroup\ESHText{\ESHInlineInternalSetup\ESHInlineBasicSetup\ESHInlineHook#1}\egroup}
\newcommand*{\ESHInlineBlockVerticalAlignment}{c}
\newenvironment{ESHInlineBlock}[1][\ESHInlineBlockVerticalAlignment]
  {\bgroup\ESHInlineBlockInternalSetup\ESHInlineBlockBasicSetup\ESHInlineBlockHook\begin{tabular}[#1]{@{}l@{}}}
  {\end{tabular}\egroup}
\newlength{\ESHSkip}
\newcommand*{\ESHNoBreakAddVSpace}[1]{\addpenalty{\@M}\addvspace{#1}}
\newenvironment{ESHBlock}
  {\par\ESHNoBreakAddVSpace{\ESHSkip}\bgroup\ESHBlockInternalSetup\ESHBlockBasicSetup\ESHBlockHook}
  {\par\egroup\addvspace{\ESHSkip}}
\def\ESHpvEnterVerbMode{\let\do\@makeother\dospecials}
\def\ESHpvNotFoundMessage#1{No highlighting found for "#1"; falling back to verbatim.}
\def\ESHpvNotFound#1{\PackageWarning{ESH}{\ESHpvNotFoundMessage{#1}}}
\def\ESHpvDefineSubstitution#1#2{\expandafter\def\csname #1-\detokenize{#2}\endcsname}
\def\ESHpvSubstitute#1#2#3#4{%
  \expandafter\ifx\csname #1-\detokenize{#2}\endcsname\relax
  \ESHpvNotFound{#3}\texttt{#4}%
  \else
  \csname #1-\detokenize{#2}\expandafter\endcsname
  \fi}
\newcommand*{\ESHSaveBoxName}[1]{ESHSaveBox:#1}
\newcommand{\ESHEnsureSaveBox}[1]
  {\@ifundefined{#1}{\expandafter\newsavebox\csname#1\endcsname}{}}
\newcommand{\ESHUseVerb}[1]{\expandafter\usebox\csname\ESHSaveBoxName{#1}\endcsname}
\DeclareRobustCommand*{\ESHInlineRaise}[2]
  {\ESHSetCurFontSize\raisebox{#1\ESHCurFontSize}{\relsize{-1}\hspace{0.1em}#2}}
\DeclareRobustCommand*{\ESHBlockRaise}[2]
  {\ESHInlineRaise{#1}{#2}}
  \newfontfamily{\XITSMath}[Path=fonts/,
                            UprightFont=*.otf,
                            BoldFont=*bold.otf,
                            Mapping=tex-ansi]{xits-math}
  \newfontfamily{\UbuntuMono}[Path=fonts/,
                              UprightFont=*-R.ttf,
                              BoldFont=*-B.ttf,
                              ItalicFont=*-RI.ttf,
                              BoldItalicFont=*-BI.ttf,
                              Mapping=tex-ansi]{UbuntuMono}
  \renewcommand{\ESHFontFamily}{\UbuntuMono}
  \renewcommand{\ESHFallbackFontFamily}{\XITSMath}
\renewenvironment{ESHBlock}
  {\trivlist\item\ESHNoBreakAddVSpace{\ESHSkip}\bgroup\ESHBlockInternalSetup\ESHBlockBasicSetup\ESHBlockHook}
  {\endtrivlist\egroup\addvspace{\ESHSkip}\noindent\hspace{-0.23em}}
\newcommand{\smoothpaperend}{\bibliographystyle{plainnat}
\bibliography{refs}

\appendix

\pagebreak
}
\newcommand{\smoothappendixsection}[1]{\section{#1}}
\newcommand{\paperonly}[1]{#1}
\newcommand{\thesisonly}[1]{}
\title{\dlang{}: Computable Semantics for Differentiable Programming with Higher-Order Functions and Datatypes}
\author{Benjamin Sherman}
\affiliation{%
  \institution{MIT}
  \country{USA}
}
\email{sherman@csail.mit.edu}
\author{Jesse Michel}
\affiliation{%
  \institution{MIT}
  \country{USA}
}
\email{jmmichel@mit.edu}
\author{Michael Carbin}
\affiliation{%
  \institution{MIT}
  \country{USA}
}
\email{mcarbin@csail.mit.edu}
\keywords{Constructive Analysis, Diffeological Spaces, Automatic Differentiation}
\begin{document}

\begin{abstract}
Deep learning is moving towards increasingly sophisticated optimization objectives that employ higher-order functions, such as integration, continuous optimization, and root-finding.
Since differentiable programming frameworks such as PyTorch and TensorFlow do not have first-class representations of these functions, developers must reason about the semantics of such objectives and manually translate them to differentiable~code.

We present a differentiable programming language, \dlang{}, that is the first to deliver a semantics for higher-order functions, higher-order derivatives, and Lipschitz but nondifferentiable functions.
Together, these features enable \dlang{} to expose differentiable, higher-order functions for integration, optimization, and root-finding as first-class functions with automatically computed derivatives.
\dlang{}'s semantics is computable, meaning that values can be computed to arbitrary precision,
and we implement \dlang{} as an embedded language in Haskell.

We use \dlang{} to construct novel differentiable libraries for representing probability distributions, implicit surfaces, and generalized parametric surfaces -- all as instances of higher-order datatypes --
and present case studies that rely on computing the derivatives of these higher-order functions and datatypes.
In addition to modeling existing differentiable algorithms, such as a differentiable ray tracer for implicit surfaces, without requiring any user-level differentiation code, we demonstrate new differentiable algorithms, such as the Hausdorff distance of generalized parametric surfaces.
\end{abstract}

\maketitle

\section{Introduction}
\label{smooth:introduction}

Deep learning is centered on optimizing objectives $\smooth{\ell : \Theta \to \R}$ over some parameter space $\Theta$ by \emph{gradient descent}, following the derivative of $\ell$ at some particular value $\theta \in \Theta$ to move in a direction that decreases $\ell(\theta)$.
Before \NA{deep-learning practitioners} adopted frameworks such as TensorFlow and PyTorch,
creating a new model (i.e., parameter space and objective) was a laborious and error-prone endeavor,
since it involved manually determining and computing the derivative of the objective.
The advent of deep-learning frameworks that provide \emph{automatic differentiation} (\emph{AD})---the automated computation of derivatives of a function given just the definition of the function itself---has made creating and modifying models much easier:
a user simply writes the objective and its derivative is computed automatically.
As a result, progress in deep learning has rapidly accelerated -- a testament to the value of programming-language abstractions.

However, the creativity of deep-learning practitioners has exceeded the capabilities of current AD frameworks:
practitioners have devised objectives that current AD frameworks cannot handle directly.
A simple example is an objective including an expectation over a probability distribution whose parameters may vary, like this:
\[
\ell(\theta) = \mathbb{E}_{x \sim \mathcal{N}(\mu(\theta), \sigma^2(\theta))}[f(x)].
\]
If this $\ell$ is translated naïvely to PyTorch, by approximating the expectation with Monte Carlo sampling, the automatically generated derivative will be incorrect.
Numerous algorithms have been proposed to compute the derivatives of objectives that average over parameterized probability distributions \citep{naesseth2016reparameterization, NIPS2018_7326, jang2016categorical, jankowiak2018pathwise}.
How does one compute derivatives of objectives like these in general?
No existing differentiable-programming semantics has tackled the problem of differentiating through expectations such as these.

Other objectives are sufficiently complex that they do not even beg an incorrect naïve implementation. Objectives $\ell$ that optimize over compact sets $\Delta$
\[
\ell(\theta) = \max_{\delta \in \Delta} f(\theta, \delta)
\]
arise in \NA{adversarial contexts, including adversarial training and generative adversarial networks (GANs)}.
Conceptually, optimizing this objective with gradient-based techniques requires a semantics for a differentiable max operation over a compact set, which, to date, has not been covered in the literature on the semantics of differentiable programs.
Devising the appropriate derivative for these kinds of objectives is an object of current study \citep{wang2020on, lorraine2019optimizing}.

Sometimes, an objective involves \emph{root-finding},
\[
\ell(\theta) = \text{let } x \text{ be such that } g(\theta, x) = 0 \text{ in } f(\theta, x).
\]
This arises in learning implicit surfaces, with applications both to learning the decision boundaries of classifiers as well as to reconstructing surfaces from point-cloud data or other visual data.
How to compute the derivative of objectives like this is a key contribution of several papers \citep{atzmon2019controlling, niemeyer2019differentiable, bai2019deep}.

What do these objectives all have in common? They all involve \emph{higher-order functions}: their definitions introduce variables that are subject to integration, optimization, or root-finding.
Not only are these three operations troublesome in practice, but no semantics of differentiable programming has yet addressed them.

\paragraph{Approach.} We present \dlang{}, a differentiable programming language that includes higher-order functions for integration, optimization, and root-finding.
A key technical challenge is that these functions are higher-order and our semantic approach must wed higher-order functions with higher-order derivatives and nonsmooth functions to encompass these and other modern deep learning objectives.
As a toy example\fTBD{consider an example that differentiates more from Edalat}, consider computing the derivative $f'(0.6)$ of the function
\Comment{
\footnote{
\TBD{Remove green coloring}
Mathematical expressions written in \smooth{green} are to be interpreted in the internal language of \dlang{}, meaning that everything is implicitly \emph{smoothish} (defined in \Cref{ho-semantics}).
\citet[Chapter VI]{sheaves} provide an overview of the internal language of a topos.
}
}
\begin{align*}
f(c) &\triangleq \int_0^1 \text{ReLU}(x - c)\ dx,
\end{align*}
where $\text{ReLU}(x) = \max(0, x)$.
We can compute $f'(0.6) = -0.4$ with the \dlang{} expression
\begin{center}
\begin{minipage}[t]{0.8\textwidth}
\begin{prompt}
\ESHInline{\ESHBol{}eps=1e\ESHDash{}2{>}\ESHSpace{}deriv\ESHSpace{}(\textcolor[HTML]{346604}{\ESHUnicodeSubstitution{\ESHMathSymbol{\lambda}}}\ESHSpace{}c\ESHSpace{}\ESHUnicodeSubstitution{\ESHMathSymbol{\Rightarrow}}\ESHSpace{}integral01\ESHSpace{}(\textcolor[HTML]{346604}{\ESHUnicodeSubstitution{\ESHMathSymbol{\lambda}}}\ESHSpace{}x\ESHSpace{}\ESHUnicodeSubstitution{\ESHMathSymbol{\Rightarrow}}\ESHSpace{}relu\ESHSpace{}(x\ESHSpace{}\ESHDash{}\ESHSpace{}c)))\ESHSpace{}0.6}
\end{prompt}
\begin{repl}
\ESHInline{\ESHBol{}[\ESHDash{}0.407,\ESHSpace{}\ESHDash{}0.398]}
\end{repl}
\end{minipage}
\end{center}

\vspace{.5em}
{\noindent}where there is a type \ESHInline{\ESHBol{}\textcolor[HTML]{204A87}{\ESHUnicodeSubstitution{\ESHMathSymbol{\Re}}}} for real numbers,
a function \ESHInline{\ESHBol{}relu\ESHSpace{}:\ESHSpace{}\textcolor[HTML]{204A87}{\ESHUnicodeSubstitution{\ESHMathSymbol{\Re}}}\ESHSpace{}\ESHUnicodeSubstitution{\ESHMathSymbol{\rightarrow}}\ESHSpace{}\textcolor[HTML]{204A87}{\ESHUnicodeSubstitution{\ESHMathSymbol{\Re}}}} for ReLU,
a higher-order function \ESHInline{\ESHBol{}integral01\ESHSpace{}:\ESHSpace{}(\textcolor[HTML]{204A87}{\ESHUnicodeSubstitution{\ESHMathSymbol{\Re}}}\ESHSpace{}\ESHUnicodeSubstitution{\ESHMathSymbol{\rightarrow}}\ESHSpace{}\textcolor[HTML]{204A87}{\ESHUnicodeSubstitution{\ESHMathSymbol{\Re}}})\ESHSpace{}\ESHUnicodeSubstitution{\ESHMathSymbol{\rightarrow}}\ESHSpace{}\textcolor[HTML]{204A87}{\ESHUnicodeSubstitution{\ESHMathSymbol{\Re}}}} for integration over the unit interval [0,1],
and a higher-order function for differentiation of real-valued functions \ESHInline{\ESHBol{}deriv\ESHSpace{}:\ESHSpace{}(\textcolor[HTML]{204A87}{\ESHUnicodeSubstitution{\ESHMathSymbol{\Re}}}\ESHSpace{}\ESHUnicodeSubstitution{\ESHMathSymbol{\rightarrow}}\ESHSpace{}\textcolor[HTML]{204A87}{\ESHUnicodeSubstitution{\ESHMathSymbol{\Re}}})\ESHSpace{}\ESHUnicodeSubstitution{\ESHMathSymbol{\rightarrow}}\ESHSpace{}\textcolor[HTML]{204A87}{\ESHUnicodeSubstitution{\ESHMathSymbol{\Re}}}\ESHSpace{}\ESHUnicodeSubstitution{\ESHMathSymbol{\rightarrow}}\ESHSpace{}\textcolor[HTML]{204A87}{\ESHUnicodeSubstitution{\ESHMathSymbol{\Re}}}}.
The result can be queried to any precision, returning an interval guaranteed to include the true answer.
Here, the precision is specified in the prompt as \ESHInline{\ESHBol{}eps=1e\ESHDash{}2}.

\dlang{} is the first language that gives semantics to such an operation
and moreover is the first to support its computation to arbitrary precision.
Note that, in order to determine this derivative, we must evaluate the derivative of the ReLU function everywhere from -0.6 to 0.4, which includes 0, where ReLU is not (classically) differentiable.

Our work is unique compared to related work in supporting higher-order functions, higher-order derivatives, and nondifferentiable functions.
We combine the use of Clarke derivatives in \citet{edalat2013} to support nondifferentiable functions, the diffeological approach of \citet{vakar} to support higher-order functions, and the derivative towers of \citet{elliott-higher-ad} to support higher-order derivatives.
\Cref{smooth:related-work} covers related work in more detail.
Merging these techniques gives us a platform to accomplish the contributions described below.

\paragraph{Contributions.}
\label{smooth:contributions}

We present \emph{\dlang{}}, a differentiable programming language whose types are (generalized) smooth spaces and whose functions are (generalized) smooth maps.
Our contributions are:
\begin{enumerate}
\item The first semantics for a differentiable programming language that admits all of the~following: 1) higher-order functions (\Cref{ho-semantics}), 2) higher-order derivatives (\Cref{fo-semantics}), and 3) Lipschitz but nonsmooth functions, such as \texttt{min}, \texttt{max}, and ReLU (\Cref{fo-semantics}).
\item The first semantics for differentiable integration, optimization, and root-finding (\Cref{ho-semantics}), enabled by the features above.

\item An implementation of this semantics, including implementations for higher-order functions such as integration (\Cref{smooth:implementation}).
	Our implementation is based directly on a constructive categorical semantics that demonstrates how these constructs can be computed to arbitrary precision.

\item New smooth libraries for constructing and computing on three higher-order datatypes: probability distributions, implicit surfaces, and generalized parametric surfaces (\Cref{smooth:applications}).
\end{enumerate}

\dlang{}'s semantics allows computation with and reasoning about the derivatives of higher-order functions, such as integration, optimization, and root-finding.
\dlang{} elucidates foundational principles for how to program with smooth values in a sound, arbitrarily precise manner, including which operations are possible to compute soundly and which are not.
While in many cases \dlang{}  is not practically efficient, in some cases, programs can serve as executable specifications to guide programming in other frameworks, to validate separately developed systems, and to suggest new functionality that could be added to other differentiable programming frameworks.

\section{An Introduction to \dlang{}}
\label{dlang-intro}

\begin{figure}[htbp]

\begin{subfigure}[t]{0.29\textwidth}
\begin{center}
\begin{tikzpicture}
\draw[fill=black] (1.3, 1) circle(0.2em);
\draw[fill=black] (0, 0) circle(0.2em);
\draw[thick, fill=black, fill opacity=0.1] (1.3, -3/4) circle (1);
\draw[dashed, thick, red] (0, 0) -- (0.63856, 0) -- (1.3, 1);
\draw[dashed, thick, red, opacity=0.3] (0, 0) -- (0.45, 0) -- (1.3, 1);
\draw[thick, fill=black, opacity=0.3, fill opacity=0.03] (1.3, -0.55) circle (1);
\draw[thick,->] (1.3, -3/4) -- (1.3, -0.55);
\end{tikzpicture}
\end{center}
\caption{A ray of light from a source above bounces off a circle before hitting a camera. How does the brightness change when the circle is moved up?}
\label{fig:ray-tracing-intro}
\end{subfigure}
\hfill
\begin{subfigure}{0.66\textwidth}
\small
\begin{ESHBlock}
\ESHBol{}\textcolor[HTML]{204A87}{type}\ESHSpace{}Surface\ESHSpace{}A\ESHSpace{}=\ESHSpace{}A\ESHSpace{}\ESHUnicodeSubstitution{\ESHMathSymbol{\rightarrow}}\ESHSpace{}\textcolor[HTML]{204A87}{\ESHUnicodeSubstitution{\ESHMathSymbol{\Re}}}\ESHEol
\ESHBol{}\ESHEmptyLine{}\ESHEol
\ESHBol{}firstRoot\ESHSpace{}:\ESHSpace{}(\textcolor[HTML]{204A87}{\ESHUnicodeSubstitution{\ESHMathSymbol{\Re}}}\ESHSpace{}\ESHUnicodeSubstitution{\ESHMathSymbol{\rightarrow}}\ESHSpace{}\textcolor[HTML]{204A87}{\ESHUnicodeSubstitution{\ESHMathSymbol{\Re}}})\ESHSpace{}\ESHUnicodeSubstitution{\ESHMathSymbol{\rightarrow}}\ESHSpace{}\textcolor[HTML]{204A87}{\ESHUnicodeSubstitution{\ESHMathSymbol{\Re}}}\ESHSpace{}\ESHSlantItalic{\textcolor[HTML]{5F615C}{!\ESHSpace{}language\ESHSpace{}primitive}}\ESHEol
\ESHBol{}\ESHEmptyLine{}\ESHEol
\ESHBol{}\textcolor[HTML]{346604}{let}\ESHSpace{}dot\ESHSpace{}(x\ESHSpace{}y\ESHSpace{}:\ESHSpace{}\textcolor[HTML]{204A87}{\ESHUnicodeSubstitution{\ESHMathSymbol{\Re}}}\ESHRaise{0.30}{2})\ESHSpace{}:\ESHSpace{}\textcolor[HTML]{204A87}{\ESHUnicodeSubstitution{\ESHMathSymbol{\Re}}}\ESHSpace{}=\ESHSpace{}x[0]\ESHSpace{}*\ESHSpace{}y[0]\ESHSpace{}+\ESHSpace{}x[1]\ESHSpace{}*\ESHSpace{}y[1]\ESHEol
\ESHBol{}\textcolor[HTML]{346604}{let}\ESHSpace{}scale\ESHSpace{}(c\ESHSpace{}:\ESHSpace{}\textcolor[HTML]{204A87}{\ESHUnicodeSubstitution{\ESHMathSymbol{\Re}}})\ESHSpace{}(x\ESHSpace{}:\ESHSpace{}\textcolor[HTML]{204A87}{\ESHUnicodeSubstitution{\ESHMathSymbol{\Re}}}\ESHRaise{0.30}{2})\ESHSpace{}:\ESHSpace{}\textcolor[HTML]{204A87}{\ESHUnicodeSubstitution{\ESHMathSymbol{\Re}}}\ESHRaise{0.30}{2}\ESHSpace{}=\ESHSpace{}(c\ESHSpace{}*\ESHSpace{}x[0],\ESHSpace{}c\ESHSpace{}*\ESHSpace{}x[1])\ESHEol
\ESHBol{}\textcolor[HTML]{346604}{let}\ESHSpace{}norm2\ESHSpace{}(x\ESHSpace{}:\ESHSpace{}\textcolor[HTML]{204A87}{\ESHUnicodeSubstitution{\ESHMathSymbol{\Re}}}\ESHRaise{0.30}{2})\ESHSpace{}:\ESHSpace{}\textcolor[HTML]{204A87}{\ESHUnicodeSubstitution{\ESHMathSymbol{\Re}}}\ESHSpace{}=\ESHSpace{}x[0]\ESHRaise{0.30}{2}\ESHSpace{}+\ESHSpace{}x[1]\ESHRaise{0.30}{2}\ESHEol
\ESHBol{}\textcolor[HTML]{346604}{let}\ESHSpace{}normalize\ESHSpace{}(x\ESHSpace{}:\ESHSpace{}\textcolor[HTML]{204A87}{\ESHUnicodeSubstitution{\ESHMathSymbol{\Re}}}\ESHRaise{0.30}{2})\ESHSpace{}:\ESHSpace{}\textcolor[HTML]{204A87}{\ESHUnicodeSubstitution{\ESHMathSymbol{\Re}}}\ESHRaise{0.30}{2}\ESHSpace{}=\ESHSpace{}scale\ESHSpace{}(1\ESHSpace{}/\ESHSpace{}sqrt\ESHSpace{}(norm2\ESHSpace{}x))\ESHSpace{}x\ESHEol
\ESHBol{}\ESHEmptyLine{}\ESHEol
\ESHBol{}deriv\ESHSpace{}:\ESHSpace{}(\textcolor[HTML]{204A87}{\ESHUnicodeSubstitution{\ESHMathSymbol{\Re}}}\ESHSpace{}\ESHUnicodeSubstitution{\ESHMathSymbol{\rightarrow}}\ESHSpace{}\textcolor[HTML]{204A87}{\ESHUnicodeSubstitution{\ESHMathSymbol{\Re}}})\ESHSpace{}\ESHUnicodeSubstitution{\ESHMathSymbol{\rightarrow}}\ESHSpace{}(\textcolor[HTML]{204A87}{\ESHUnicodeSubstitution{\ESHMathSymbol{\Re}}}\ESHSpace{}\ESHUnicodeSubstitution{\ESHMathSymbol{\rightarrow}}\ESHSpace{}\textcolor[HTML]{204A87}{\ESHUnicodeSubstitution{\ESHMathSymbol{\Re}}})\ESHSpace{}\ESHSlantItalic{\textcolor[HTML]{5F615C}{!\ESHSpace{}library\ESHSpace{}function}}\ESHEol
\ESHBol{}\textcolor[HTML]{346604}{let}\ESHSpace{}gradient\ESHSpace{}(f\ESHSpace{}:\ESHSpace{}\textcolor[HTML]{204A87}{\ESHUnicodeSubstitution{\ESHMathSymbol{\Re}}}\ESHRaise{0.30}{2}\ESHSpace{}\ESHUnicodeSubstitution{\ESHMathSymbol{\rightarrow}}\ESHSpace{}\textcolor[HTML]{204A87}{\ESHUnicodeSubstitution{\ESHMathSymbol{\Re}}})\ESHSpace{}(x\ESHSpace{}:\ESHSpace{}\textcolor[HTML]{204A87}{\ESHUnicodeSubstitution{\ESHMathSymbol{\Re}}}\ESHRaise{0.30}{2})\ESHSpace{}:\ESHSpace{}\textcolor[HTML]{204A87}{\ESHUnicodeSubstitution{\ESHMathSymbol{\Re}}}\ESHRaise{0.30}{2}\ESHSpace{}=\ESHEol
\ESHBol{}\ESHSpace{}\ESHSpace{}(deriv\ESHSpace{}(\textcolor[HTML]{346604}{\ESHUnicodeSubstitution{\ESHMathSymbol{\lambda}}}\ESHSpace{}z\ESHSpace{}:\ESHSpace{}\textcolor[HTML]{204A87}{\ESHUnicodeSubstitution{\ESHMathSymbol{\Re}}}\ESHSpace{}\ESHUnicodeSubstitution{\ESHMathSymbol{\Rightarrow}}\ESHSpace{}f\ESHSpace{}(z,\ESHSpace{}x[1]))\ESHSpace{}x[0],\ESHEol
\ESHBol{}\ESHSpace{}\ESHSpace{}\ESHSpace{}deriv\ESHSpace{}(\textcolor[HTML]{346604}{\ESHUnicodeSubstitution{\ESHMathSymbol{\lambda}}}\ESHSpace{}z\ESHSpace{}:\ESHSpace{}\textcolor[HTML]{204A87}{\ESHUnicodeSubstitution{\ESHMathSymbol{\Re}}}\ESHSpace{}\ESHUnicodeSubstitution{\ESHMathSymbol{\Rightarrow}}\ESHSpace{}f\ESHSpace{}(x[0],\ESHSpace{}z))\ESHSpace{}x[1])
\end{ESHBlock}
\codethenpar
\caption{Basic definitions used in \ESHInline{\ESHBol{}raytrace} below.}
\label{code:surfaces}
\end{subfigure}

\vspace{2em}


\begin{subfigure}{\textwidth}
\small
\begin{ESHBlock}
\ESHBol{}\ESHSlantItalic{\textcolor[HTML]{5F615C}{!\ESHSpace{}camera\ESHSpace{}assumed\ESHSpace{}to\ESHSpace{}be\ESHSpace{}at\ESHSpace{}the\ESHSpace{}origin}}\ESHEol
\ESHBol{}\textcolor[HTML]{346604}{let}\ESHSpace{}raytrace\ESHSpace{}(s\ESHSpace{}:\ESHSpace{}Surface\ESHSpace{}(\textcolor[HTML]{204A87}{\ESHUnicodeSubstitution{\ESHMathSymbol{\Re}}}\ESHRaise{0.30}{2}))\ESHSpace{}(lightPos\ESHSpace{}:\ESHSpace{}\textcolor[HTML]{204A87}{\ESHUnicodeSubstitution{\ESHMathSymbol{\Re}}}\ESHRaise{0.30}{2})\ESHSpace{}(rayDirection\ESHSpace{}:\ESHSpace{}\textcolor[HTML]{204A87}{\ESHUnicodeSubstitution{\ESHMathSymbol{\Re}}}\ESHRaise{0.30}{2})\ESHSpace{}:\ESHSpace{}\textcolor[HTML]{204A87}{\ESHUnicodeSubstitution{\ESHMathSymbol{\Re}}}\ESHSpace{}=\ESHEol
\ESHBol{}\ESHSpace{}\ESHSpace{}\textcolor[HTML]{346604}{let}\ESHSpace{}tStar\ESHSpace{}=\ESHSpace{}firstRoot\ESHSpace{}(\textcolor[HTML]{346604}{\ESHUnicodeSubstitution{\ESHMathSymbol{\lambda}}}\ESHSpace{}t\ESHSpace{}:\ESHSpace{}\textcolor[HTML]{204A87}{\ESHUnicodeSubstitution{\ESHMathSymbol{\Re}}}\ESHSpace{}\ESHUnicodeSubstitution{\ESHMathSymbol{\Rightarrow}}\ESHSpace{}s\ESHSpace{}(scale\ESHSpace{}t\ESHSpace{}rayDirection))\ESHSpace{}\textcolor[HTML]{346604}{in}\ESHEol
\ESHBol{}\ESHSpace{}\ESHSpace{}\textcolor[HTML]{346604}{let}\ESHSpace{}y\ESHSpace{}=\ESHSpace{}scale\ESHSpace{}tStar\ESHSpace{}rayDirection\ESHSpace{}\textcolor[HTML]{346604}{in}\ESHSpace{}\textcolor[HTML]{346604}{let}\ESHSpace{}normal\ESHSpace{}:\ESHSpace{}\textcolor[HTML]{204A87}{\ESHUnicodeSubstitution{\ESHMathSymbol{\Re}}}\ESHRaise{0.30}{2}\ESHSpace{}=\ESHSpace{}\ESHDash{}\ESHSpace{}gradient\ESHSpace{}s\ESHSpace{}y\ESHSpace{}\textcolor[HTML]{346604}{in}\ESHEol
\ESHBol{}\ESHSpace{}\ESHSpace{}\textcolor[HTML]{346604}{let}\ESHSpace{}lightToSurf\ESHSpace{}=\ESHSpace{}y\ESHSpace{}\ESHDash{}\ESHSpace{}lightPos\ESHSpace{}\textcolor[HTML]{346604}{in}\ESHEol
\ESHBol{}\ESHSpace{}\ESHSpace{}max\ESHSpace{}0\ESHSpace{}(dot\ESHSpace{}(normalize\ESHSpace{}normal)\ESHSpace{}(normalize\ESHSpace{}lightToSurf))\ESHEol
\ESHBol{}\ESHSpace{}\ESHSpace{}/\ESHSpace{}(norm2\ESHSpace{}y\ESHSpace{}*\ESHSpace{}norm2\ESHSpace{}lightToSurf)
\end{ESHBlock}
\codethenpar
\caption{A \dlang{} function for differentiable ray tracing of implicit surfaces.}
\label{code:raytracer}
\end{subfigure}

\caption{A library for differentiable ray tracing and scene representation.}
\label{fig:diff-ray-tracing}
\end{figure}

We demonstrate \dlang{}'s core functionality by implementing a simple differentiable \emph{ray tracer}, an algorithm that generates an image of a scene as viewed by a camera by tracing how rays of light emanate from a light source, bounce off the scene, and then enter the camera's aperture.
\emph{Differentiable ray tracing} is a new deep-learning technique that propagates derivatives through image rendering algorithms, permitting the use of inverse graphics to solve computer-vision tasks \citep{diffray, niemeyer2019differentiable}. These techniques optimize the parameters of a scene representation to make the image generated by the ray tracer more closely match a target image.\fTBD{MC: could use a little more grounding, such as explain what a task is}

As a simple example, consider computing the brightness of a particular scene at a particular direction,
using the  \dlang{} library for representing scenes and a function for performing ray tracing, both of which we present in \Cref{fig:diff-ray-tracing}:

\vspace{-0.8em}
\begin{center}
\begin{minipage}[t]{0.88\textwidth}
\begin{prompt}
\ESHInline{\ESHBol{}eps=1e\ESHDash{}5{>}\ESHSpace{}raytrace\ESHSpace{}(circle\ESHSpace{}(1,\ESHSpace{}\ESHDash{}3/4)\ESHSpace{}1)\ESHSpace{}(1,\ESHSpace{}1)\ESHSpace{}(1,\ESHSpace{}0)}
\end{prompt}
\begin{repl}
\ESHInline{\ESHBol{}[2.587289,\ESHSpace{}2.587299]}
\end{repl}
\end{minipage}
\end{center}
\Cref{fig:ray-tracing-intro} depicts the computation at hand. The camera is located at the origin \ESHInline{\ESHBol{}(0,\ESHSpace{}0)}, the circle is centered at \ESHInline{\ESHBol{}(1,\ESHSpace{}\ESHDash{}3/4)} and has radius \ESHInline{\ESHBol{}1}, the light source is at \ESHInline{\ESHBol{}(1,\ESHSpace{}1)}, and we consider a ray pointing horizontally to the right from the camera, in the direction \ESHInline{\ESHBol{}(1,\ESHSpace{}0)}.
The computation returns an interval and the \ESHInline{\ESHBol{}eps=1e\ESHDash{}5} specifies the precision tolerance, such that the interval-valued result, \ESHInline{\ESHBol{}[2.587289,\ESHSpace{}2.587299]}, has a width at most $10^{-5}$.
Our implementation guarantees that whenever it returns a finite-width interval, the true, real-valued result is contained within that interval.

\dlang{} permits differentiation of any functions in the language, so we can compute how the brightness would change if the circle were moved up by an infinitesimal amount:

\vspace{-.75em}
\begin{center}
\begin{minipage}[t]{\textwidth}
\begin{prompt}
\ESHInline{\ESHBol{}eps=1e\ESHDash{}3{>}\ESHSpace{}deriv\ESHSpace{}(\textcolor[HTML]{346604}{\ESHUnicodeSubstitution{\ESHMathSymbol{\lambda}}}\ESHSpace{}y\ESHSpace{}:\ESHSpace{}\textcolor[HTML]{204A87}{\ESHUnicodeSubstitution{\ESHMathSymbol{\Re}}}\ESHSpace{}\ESHUnicodeSubstitution{\ESHMathSymbol{\Rightarrow}}\ESHSpace{}raytrace\ESHSpace{}(circle\ESHSpace{}(0,\ESHSpace{}y)\ESHSpace{}1)\ESHSpace{}(1,\ESHSpace{}1)\ESHSpace{}(1,\ESHSpace{}0))\ESHSpace{}(\ESHDash{}3/4)}
\end{prompt}
\begin{repl}
\ESHInline{\ESHBol{}[1.3477,\ESHSpace{}1.3484]}
\end{repl}
\end{minipage}
\end{center}
The \dlang{} function \ESHInline{\ESHBol{}deriv\ESHSpace{}:\ESHSpace{}(\textcolor[HTML]{204A87}{\ESHUnicodeSubstitution{\ESHMathSymbol{\Re}}}\ESHSpace{}\ESHUnicodeSubstitution{\ESHMathSymbol{\rightarrow}}\ESHSpace{}\textcolor[HTML]{204A87}{\ESHUnicodeSubstitution{\ESHMathSymbol{\Re}}})\ESHSpace{}\ESHUnicodeSubstitution{\ESHMathSymbol{\rightarrow}}\ESHSpace{}(\textcolor[HTML]{204A87}{\ESHUnicodeSubstitution{\ESHMathSymbol{\Re}}}\ESHSpace{}\ESHUnicodeSubstitution{\ESHMathSymbol{\rightarrow}}\ESHSpace{}\textcolor[HTML]{204A87}{\ESHUnicodeSubstitution{\ESHMathSymbol{\Re}}})} computes the derivative of a scalar-valued real function.
The result indicates that when the circle is moved up infinitesimally from its current location, the brightness increases infinitesimally at a rate of $\sim$1.35 units brightness per unit distance the circle is moved up.

Several changes occur when the circle is moved up that affect the image brightness.
The point at which the light ray bounces off the circle moves closer to the camera,
decreasing the distance from the camera to the circle (increasing brightness) but increasing the distance from the light to the camera (decreasing brightness).
Both the direction of the surface normal of the circle at the point where the light deflects and the direction from the light source to that point change, increasing the angle between the surface normal of the circle and the light ray (decreasing brightness).
Automatic differentiation automatically takes all of these effects into account.

Figure \ref{code:raytracer} shows the implementation of the differentiable ray tracing in \dlang{}.
The function \ESHInline{\ESHBol{}firstRoot\ESHSpace{}:\ESHSpace{}(\textcolor[HTML]{204A87}{\ESHUnicodeSubstitution{\ESHMathSymbol{\Re}}}\ESHSpace{}\ESHUnicodeSubstitution{\ESHMathSymbol{\rightarrow}}\ESHSpace{}\textcolor[HTML]{204A87}{\ESHUnicodeSubstitution{\ESHMathSymbol{\Re}}})\ESHSpace{}\ESHUnicodeSubstitution{\ESHMathSymbol{\rightarrow}}\ESHSpace{}\textcolor[HTML]{204A87}{\ESHUnicodeSubstitution{\ESHMathSymbol{\Re}}}} in the definition of \ESHInline{\ESHBol{}raytrace} computes the distance that the light travels from the scene to the camera.
Given a function \ESHInline{\ESHBol{}f\ESHSpace{}:\ESHSpace{}\textcolor[HTML]{204A87}{\ESHUnicodeSubstitution{\ESHMathSymbol{\Re}}}\ESHSpace{}\ESHUnicodeSubstitution{\ESHMathSymbol{\rightarrow}}\ESHSpace{}\textcolor[HTML]{204A87}{\ESHUnicodeSubstitution{\ESHMathSymbol{\Re}}}}, \ESHInline{\ESHBol{}firstRoot\ESHSpace{}f} performs \emph{root finding}, computing $\min \{ x \in [0, 1] \mid \ESHInline{\ESHBol{}f}(x) = 0 \}$.
\dlang{}'s higher-order functions for root finding are novel, and accordingly,
\dlang{}'s ability to express differentiable ray tracing of implicit surfaces (embodied in \ESHInline{\ESHBol{}raytrace}) without needing any custom code for specifying derivatives.

The differentiable ray tracer \ESHInline{\ESHBol{}raytrace} critically depends on \dlang{}'s unique support for higher-order functions, higher-order derivatives, and Lipschitz but nondifferentiable functions such as min, max, and ReLU.
We now provide a brief introduction to these three features.

\subsection{Higher-Order Functions}

The \ESHInline{\ESHBol{}raytrace} function must compute the distance the ray of light travels from the scene to the camera, represented by the let-definition \ESHInline{\ESHBol{}tStar} in \ESHInline{\ESHBol{}raytrace}.
When applied to the scene \ESHInline{\ESHBol{}circle\ESHSpace{}(1,\ESHSpace{}y)\ESHSpace{}1}, the definition reduces to
\begin{ESHBlock}
\ESHBol{}\textcolor[HTML]{346604}{let}\ESHSpace{}tStar\ESHSpace{}y\ESHSpace{}=\ESHSpace{}firstRoot\ESHSpace{}(\textcolor[HTML]{346604}{\ESHUnicodeSubstitution{\ESHMathSymbol{\lambda}}}\ESHSpace{}t\ESHSpace{}:\ESHSpace{}\textcolor[HTML]{204A87}{\ESHUnicodeSubstitution{\ESHMathSymbol{\Re}}}\ESHSpace{}\ESHUnicodeSubstitution{\ESHMathSymbol{\Rightarrow}}\ESHSpace{}1\ESHSpace{}\ESHDash{}\ESHSpace{}y\ESHRaise{0.30}{2}\ESHSpace{}\ESHDash{}\ESHSpace{}(t\ESHSpace{}\ESHDash{}\ESHSpace{}1)\ESHRaise{0.30}{2})
\end{ESHBlock}
The function \ESHInline{\ESHBol{}firstRoot\ESHSpace{}:\ESHSpace{}(\textcolor[HTML]{204A87}{\ESHUnicodeSubstitution{\ESHMathSymbol{\Re}}}\ESHSpace{}\ESHUnicodeSubstitution{\ESHMathSymbol{\rightarrow}}\ESHSpace{}\textcolor[HTML]{204A87}{\ESHUnicodeSubstitution{\ESHMathSymbol{\Re}}})\ESHSpace{}\ESHUnicodeSubstitution{\ESHMathSymbol{\rightarrow}}\ESHSpace{}\textcolor[HTML]{204A87}{\ESHUnicodeSubstitution{\ESHMathSymbol{\Re}}}} is a higher-order function since it takes a function as input.
In order to admit a function like this in a differentiable programming language, the language must be able to compute how the result of \ESHInline{\ESHBol{}firstRoot} changes when there is an infinitesimal perturbation to its input function.
In this example, we want to know how \ESHInline{\ESHBol{}tStar} changes when \ESHInline{\ESHBol{}y} changes.
To answer this, define \ESHInline{\ESHBol{}f\ESHSpace{}t\ESHSpace{}y\ESHSpace{}=\ESHSpace{}1\ESHSpace{}\ESHDash{}\ESHSpace{}y\ESHRaise{0.30}{2}\ESHSpace{}\ESHDash{}\ESHSpace{}(t\ESHSpace{}\ESHDash{}\ESHSpace{}1)\ESHRaise{0.30}{2}}.
Then \ESHInline{\ESHBol{}tStar} finds a solution for the variable \ESHInline{\ESHBol{}t} to the equation \ESHInline{\ESHBol{}f\ESHSpace{}t\ESHSpace{}y\ESHSpace{}=\ESHSpace{}0}.
So whatever change is induced by changing \ESHInline{\ESHBol{}y} must be counterbalanced by changing \ESHInline{\ESHBol{}tStar}.
\dlang{}'s semantics validate the equation (for values of \ESHInline{\ESHBol{}y} giving well-defined roots)
\begin{ESHBlock}
\ESHBol{}deriv\ESHSpace{}tStar\ESHSpace{}y\ESHSpace{}=\ESHSpace{}\ESHDash{}\ESHSpace{}deriv\ESHSpace{}(\textcolor[HTML]{346604}{\ESHUnicodeSubstitution{\ESHMathSymbol{\lambda}}}\ESHSpace{}y0\ESHSpace{}:\ESHSpace{}\textcolor[HTML]{204A87}{\ESHUnicodeSubstitution{\ESHMathSymbol{\Re}}}\ESHSpace{}\ESHUnicodeSubstitution{\ESHMathSymbol{\Rightarrow}}\ESHSpace{}f\ESHSpace{}(tStar\ESHSpace{}y)\ESHSpace{}y0)\ESHSpace{}y\ESHSpace{}/\ESHEol
\ESHBol{}\ESHSpace{}\ESHSpace{}\ESHSpace{}\ESHSpace{}\ESHSpace{}\ESHSpace{}\ESHSpace{}\ESHSpace{}\ESHSpace{}\ESHSpace{}\ESHSpace{}\ESHSpace{}\ESHSpace{}\ESHSpace{}\ESHSpace{}\ESHSpace{}\ESHSpace{}\ESHSpace{}\ESHSpace{}deriv\ESHSpace{}(\textcolor[HTML]{346604}{\ESHUnicodeSubstitution{\ESHMathSymbol{\lambda}}}\ESHSpace{}t\ESHSpace{}:\ESHSpace{}\textcolor[HTML]{204A87}{\ESHUnicodeSubstitution{\ESHMathSymbol{\Re}}}\ESHSpace{}\ESHUnicodeSubstitution{\ESHMathSymbol{\Rightarrow}}\ESHSpace{}f\ESHSpace{}t\ESHSpace{}y)\ESHSpace{}(tStar\ESHSpace{}y)
\end{ESHBlock}
This equation for the derivative of root finding is known as the \emph{implicit function theorem}.
By the rules of calculus, we can further simplify this to
\[
\ESHInline{\ESHBol{}deriv\ESHSpace{}tStar\ESHSpace{}y\ESHSpace{}=\ESHSpace{}\ESHDash{}\ESHSpace{}y\ESHSpace{}/\ESHSpace{}(tStar\ESHSpace{}y\ESHSpace{}\ESHDash{}\ESHSpace{}1)}.
\]
Note that the semantics of \dlang{} ensures that these equations are indeed program equivalences: one can substitute one expression for the other within the context of a larger expression without affecting its meaning.
Indeed, taking \ESHInline{\ESHBol{}y\ESHSpace{}=\ESHSpace{}\ESHDash{}3/4}, and evaluating both sides of the expression above in \dlang{} produces compatible answers, roughly $-1.1$, which indicates that moving the circle up decreases the distance that the light travels from the circle to the camera.

We implement the \ESHInline{\ESHBol{}firstRoot} function as a language primitive by specifying not only how \ESHInline{\ESHBol{}firstRoot} acts on values but also how derivatives propagate through it, via the implicit function theorem (see \Cref{higher-order-primitives} for more detail).

\subsection{Higher-Order Derivatives}

The brightness of the image computed by the \ESHInline{\ESHBol{}raytrace} function depends on the angle at which the ray of light deflects as it bounces off the circle, so we need to know which direction the circle faces where the light hits it, which is known as the \emph{surface normal}.
In the code for \ESHInline{\ESHBol{}raytrace}, the surface normal is computed as
\begin{ESHBlock}
\ESHBol{}\textcolor[HTML]{346604}{let}\ESHSpace{}normal\ESHSpace{}:\ESHSpace{}\textcolor[HTML]{204A87}{\ESHUnicodeSubstitution{\ESHMathSymbol{\Re}}}\ESHRaise{0.30}{2}\ESHSpace{}=\ESHSpace{}\ESHDash{}\ESHSpace{}gradient\ESHSpace{}s\ESHSpace{}y
\end{ESHBlock}
Consider, for instance, the unit circle centered at $(0,0)$, i.e., \ESHInline{\ESHBol{}circle\ESHSpace{}(0,\ESHSpace{}0)\ESHSpace{}1}, given by the function $f(x, y) = 1 - x^2 - y^2$.
The surface normal is given by the negative gradient,
\[
- \nabla f (x, y)
= - \left( \frac{\partial f}{\partial x}, \frac{\partial f}{\partial y}\right)
= (2x, 2y)
\]
So, for instance, the point $(1/\sqrt{2}, 1/\sqrt{2})$ on the upper-right of the circle has a surface normal that points up and to the right, in the direction $(2/\sqrt{2}, 2/\sqrt{2})$.

Note that, in the \ESHInline{\ESHBol{}raytrace} code itself, this gradient computation requires the computation of derivatives of the implicitly defined surface in order to compute the image brightness.
Accordingly, computing the derivative of the image brightness with respect to an infinitesimal perturbation in the scene requires computing the second derivatives of the implicitly defined surface with respect to its arguments. Thus, higher-order differentiation is a valuable language feature.

In \dlang{}, differentiation is a first-class programming construct,
so higher-order differentiation is naturally supported,
as we can compute higher-order derivatives by applying the \ESHInline{\ESHBol{}deriv\ESHSpace{}:\ESHSpace{}(\textcolor[HTML]{204A87}{\ESHUnicodeSubstitution{\ESHMathSymbol{\Re}}}\ESHSpace{}\ESHUnicodeSubstitution{\ESHMathSymbol{\rightarrow}}\ESHSpace{}\textcolor[HTML]{204A87}{\ESHUnicodeSubstitution{\ESHMathSymbol{\Re}}})\ESHSpace{}\ESHUnicodeSubstitution{\ESHMathSymbol{\rightarrow}}\ESHSpace{}\textcolor[HTML]{204A87}{\ESHUnicodeSubstitution{\ESHMathSymbol{\Re}}}\ESHSpace{}\ESHUnicodeSubstitution{\ESHMathSymbol{\rightarrow}}\ESHSpace{}\textcolor[HTML]{204A87}{\ESHUnicodeSubstitution{\ESHMathSymbol{\Re}}}} function multiple times.
Note that some approaches to differentiable programming do not support higher-order differentiation (see \Cref{ad-semantics}) and thus do not have differentiation as a first-class construct.
Higher-order derivatives are also used for numerical integration, in optimization algorithms, and in other contexts.

The requirement to support higher-order derivatives means that language primitives, such as \ESHInline{\ESHBol{}firstRoot}, must specify not only how they act on values but also how derivatives \emph{of all orders} propagate through them.

\subsection{Nondifferentiability}

Note that the \ESHInline{\ESHBol{}raytrace} code uses the built-in function \ESHInline{\ESHBol{}max\ESHSpace{}:\ESHSpace{}\textcolor[HTML]{204A87}{\ESHUnicodeSubstitution{\ESHMathSymbol{\Re}}}\ESHSpace{}\ESHUnicodeSubstitution{\ESHMathSymbol{\rightarrow}}\ESHSpace{}\textcolor[HTML]{204A87}{\ESHUnicodeSubstitution{\ESHMathSymbol{\Re}}}\ESHSpace{}\ESHUnicodeSubstitution{\ESHMathSymbol{\rightarrow}}\ESHSpace{}\textcolor[HTML]{204A87}{\ESHUnicodeSubstitution{\ESHMathSymbol{\Re}}}} in computing the image brightness.
If the light source is behind the scene, the dot product of the surface normal and the vector from the light to the surface will be negative, but the brightness should be 0, rather than this negative value.
Hence, we clamp the value to be at least zero by applying \ESHInline{\ESHBol{}max\ESHSpace{}0}.
Note that this function is exactly the rectified linear unit (ReLU) that is common in deep learning:
\begin{ESHBlock}
\ESHBol{}\textcolor[HTML]{346604}{let}\ESHSpace{}relu\ESHSpace{}(x\ESHSpace{}:\ESHSpace{}\textcolor[HTML]{204A87}{\ESHUnicodeSubstitution{\ESHMathSymbol{\Re}}})\ESHSpace{}:\ESHSpace{}\textcolor[HTML]{204A87}{\ESHUnicodeSubstitution{\ESHMathSymbol{\Re}}}\ESHSpace{}=\ESHSpace{}max\ESHSpace{}0\ESHSpace{}x
\end{ESHBlock}
ReLU is not differentiable at 0.
When we compute its derivative at 0 in \dlang{}, we get a \emph{nonmaximal} result.
That means that, for sufficiently fine ($\le 1$) precision tolerances, we get nontermination:
\begin{center}
\begin{minipage}[t]{0.48\textwidth}
\begin{prompt}
\ESHInline{\ESHBol{}eps=1e\ESHDash{}1{>}\ESHSpace{}deriv\ESHSpace{}relu\ESHSpace{}0}
\end{prompt}
\begin{repl}
(nontermination)
\end{repl}
\end{minipage}
\begin{minipage}[t]{0.48\textwidth}
\begin{prompt}
\ESHInline{\ESHBol{}eps=2{>}\ESHSpace{}deriv\ESHSpace{}relu\ESHSpace{}0}
\end{prompt}
\begin{repl}
\ESHInline{\ESHBol{}[0.0,\ESHSpace{}1.0]}
\end{repl}
\end{minipage}
\end{center}
\vspace{.5em}
The interval approximations never converge to intervals smaller than $[0, 1]$.
The type \ESHInline{\ESHBol{}\textcolor[HTML]{204A87}{\ESHUnicodeSubstitution{\ESHMathSymbol{\Re}}}} contains, in addition to the real numbers, nonmaximal elements such as this one, which we name $[0, 1]$, e.g., we find that $\text{ReLU}'(0) = [0, 1]$.

Differentiable programming frameworks such as PyTorch admit min and max operations, but they are unsound, in the sense that one can define $f(x) = \max(x, 0) + \min(0, x)$, which is the identity function, but compute in PyTorch that $f'(0) = 2$, whereas it should be $f'(0) = 1$.
Because of this issue, most differentiable programming semantics leave the derivative of $\max$ undefined at~0.

However, \dlang{}'s interval-valued semantics for functions like $\max$ enables productive computational functionality that the partiality approach would not permit.
For instance, suppose rather than having a point light source for ray-tracing, we instead have a line light source, so we integrate over the entire line, using the primitive higher-order function \ESHInline{\ESHBol{}integral01\ESHSpace{}:\ESHSpace{}(\textcolor[HTML]{204A87}{\ESHUnicodeSubstitution{\ESHMathSymbol{\Re}}}\ESHSpace{}\ESHUnicodeSubstitution{\ESHMathSymbol{\rightarrow}}\ESHSpace{}\textcolor[HTML]{204A87}{\ESHUnicodeSubstitution{\ESHMathSymbol{\Re}}})\ESHSpace{}\ESHUnicodeSubstitution{\ESHMathSymbol{\rightarrow}}\ESHSpace{}\textcolor[HTML]{204A87}{\ESHUnicodeSubstitution{\ESHMathSymbol{\Re}}}}, where for \ESHInline{\ESHBol{}f\ESHSpace{}:\ESHSpace{}\textcolor[HTML]{204A87}{\ESHUnicodeSubstitution{\ESHMathSymbol{\Re}}}\ESHSpace{}\ESHUnicodeSubstitution{\ESHMathSymbol{\rightarrow}}\ESHSpace{}\textcolor[HTML]{204A87}{\ESHUnicodeSubstitution{\ESHMathSymbol{\Re}}}}, \ESHInline{\ESHBol{}integral01\ESHSpace{}f} computes the integral of \ESHInline{\ESHBol{}f} over the unit interval, $\int_0^1 \ESHInline{\ESHBol{}f}(x)\ dx$.
For simplicity, consider a camera located at $(0, 1)$ pointing downwards at a flat surface that stretches from $(-1, y)$ to $(1, y)$, with a light source stretching from $(1, 0)$ to $(1, 1)$.
Furthermore, let us disregard the effect of brightness decreasing when the light travels longer distances, such that the brightness is
\begin{ESHBlock}
\ESHBol{}\textcolor[HTML]{346604}{let}\ESHSpace{}brightness\ESHSpace{}(y\ESHSpace{}:\ESHSpace{}\textcolor[HTML]{204A87}{\ESHUnicodeSubstitution{\ESHMathSymbol{\Re}}})\ESHSpace{}:\ESHSpace{}\textcolor[HTML]{204A87}{\ESHUnicodeSubstitution{\ESHMathSymbol{\Re}}}\ESHSpace{}=\ESHEol
\ESHBol{}\ESHSpace{}\ESHSpace{}integral01\ESHSpace{}(\textcolor[HTML]{346604}{\ESHUnicodeSubstitution{\ESHMathSymbol{\lambda}}}\ESHSpace{}y0\ESHSpace{}:\ESHSpace{}\textcolor[HTML]{204A87}{\ESHUnicodeSubstitution{\ESHMathSymbol{\Re}}}\ESHSpace{}\ESHUnicodeSubstitution{\ESHMathSymbol{\Rightarrow}}\ESHSpace{}max\ESHSpace{}0\ESHSpace{}((y0\ESHSpace{}\ESHDash{}\ESHSpace{}y)\ESHSpace{}/\ESHSpace{}sqrt\ESHSpace{}(1\ESHSpace{}+\ESHSpace{}(y0\ESHSpace{}\ESHDash{}\ESHSpace{}y)\ESHRaise{0.30}{2})))
\end{ESHBlock}
When $0 \le \ESHInline{\ESHBol{}y} \le 1$, the integrand will be nondifferentiable with respect to \ESHInline{\ESHBol{}y} at the point where \ESHInline{\ESHBol{}y0\ESHSpace{}=\ESHSpace{}y}.
For instance, taking \ESHInline{\ESHBol{}y0\ESHSpace{}=\ESHSpace{}y\ESHSpace{}=\ESHSpace{}1/2}, we find that the derivative of the integrand is
\[
\ESHInline{\ESHBol{}deriv\ESHSpace{}(\textcolor[HTML]{346604}{\ESHUnicodeSubstitution{\ESHMathSymbol{\lambda}}}\ESHSpace{}y\ESHSpace{}:\ESHSpace{}\textcolor[HTML]{204A87}{\ESHUnicodeSubstitution{\ESHMathSymbol{\Re}}}\ESHSpace{}\ESHUnicodeSubstitution{\ESHMathSymbol{\Rightarrow}}\ESHSpace{}max\ESHSpace{}0\ESHSpace{}((1/2\ESHSpace{}\ESHDash{}\ESHSpace{}y)\ESHSpace{}/\ESHSpace{}sqrt\ESHSpace{}(1\ESHSpace{}+\ESHSpace{}(1/2\ESHSpace{}\ESHDash{}\ESHSpace{}y)\ESHRaise{0.30}{2}))))\ESHSpace{}(1/2)\ESHSpace{}=\ESHSpace{}[\ESHDash{}1,\ESHSpace{}0]}.
\]
When \ESHInline{\ESHBol{}y0} is just greater than \ESHInline{\ESHBol{}y}, the derivative will be near $-1$, but when \ESHInline{\ESHBol{}y0} is just less than \ESHInline{\ESHBol{}y}, the derivative will be near $0$.
Because the derivative at this point is a bounded interval, rather than a completely undefined result, it ends up being soundly neglected when it is integrated over:
\vspace{-.75em}
\begin{center}
\begin{minipage}[t]{0.8\textwidth}
\begin{prompt}
\ESHInline{\ESHBol{}eps=1e\ESHDash{}3{>}\ESHSpace{}deriv\ESHSpace{}brightness\ESHSpace{}(1/2)}
\end{prompt}
\begin{repl}
\ESHInline{\ESHBol{}[\ESHDash{}0.4476,\ESHSpace{}\ESHDash{}0.4469]}
\end{repl}
\end{minipage}
\end{center}

{\noindent}The expression \ESHInline{\ESHBol{}deriv\ESHSpace{}brightness\ESHSpace{}(1/2)} is indeed maximal, meaning that it can be evaluated to arbitrary precision.
Were the derivative of the integrand to be undefined rather than interval-valued, \ESHInline{\ESHBol{}deriv\ESHSpace{}brightness\ESHSpace{}(1/2)} would necessarily need to be undefined as well, but with these semantics, we can soundly compute the correct derivative.

This generalized notion of derivative that works for ReLU is based on \emph{Clarke's generalized derivative} \citep{clarke1990}.
The basic idea can be motivated by the desire for continuity and robustness in the numerical computation.
The derivative of ReLU is 1 for numbers imperceptibly greater than 0, and the derivative is 0 for numbers imperceptibly smaller than 0, so the derivative of ReLU at 0 should be consistent with those nearby answers.
The \emph{specialization relation} $\sqsubseteq$ on \ESHInline{\ESHBol{}\textcolor[HTML]{204A87}{\ESHUnicodeSubstitution{\ESHMathSymbol{\Re}}}} formalizes this notion of compatible behavior, where we have $[0, 1] \sqsubseteq 0$ and $[0, 1] \sqsubseteq 1$.
We will prove a consistency theorem for our language (\Cref{ho-soundness}) that says that derivatives are always compatible, i.e., related by $\sqsubseteq$, with the infinitesimal rates of change indicated by its value-level operation.

\section{Syntax and Semantics of \dlang{}}
\label{sec:syntax}

\begin{figure}[htbp]
\small
\begin{subfigure}[t]{0.42\textwidth}
\begin{ESHBlock}
\ESHBol{}0,\ESHSpace{}1,\ESHSpace{}2,\ESHSpace{}...\ESHSpace{}:\ESHSpace{}\textcolor[HTML]{204A87}{\ESHUnicodeSubstitution{\ESHMathSymbol{\Re}}}\ESHEol
\ESHBol{}(+),\ESHSpace{}(\ESHDash{}),\ESHSpace{}(*),\ESHSpace{}(/)\ESHSpace{}:\ESHSpace{}\textcolor[HTML]{204A87}{\ESHUnicodeSubstitution{\ESHMathSymbol{\Re}}}\ESHSpace{}\ESHUnicodeSubstitution{\ESHMathSymbol{\rightarrow}}\ESHSpace{}\textcolor[HTML]{204A87}{\ESHUnicodeSubstitution{\ESHMathSymbol{\Re}}}\ESHSpace{}\ESHUnicodeSubstitution{\ESHMathSymbol{\rightarrow}}\ESHSpace{}\textcolor[HTML]{204A87}{\ESHUnicodeSubstitution{\ESHMathSymbol{\Re}}}\ESHEol
\ESHBol{}max\ESHSpace{}:\ESHSpace{}\textcolor[HTML]{204A87}{\ESHUnicodeSubstitution{\ESHMathSymbol{\Re}}}\ESHSpace{}\ESHUnicodeSubstitution{\ESHMathSymbol{\rightarrow}}\ESHSpace{}\textcolor[HTML]{204A87}{\ESHUnicodeSubstitution{\ESHMathSymbol{\Re}}}\ESHSpace{}\ESHUnicodeSubstitution{\ESHMathSymbol{\rightarrow}}\ESHSpace{}\textcolor[HTML]{204A87}{\ESHUnicodeSubstitution{\ESHMathSymbol{\Re}}}\ESHEol
\ESHBol{}sin,\ESHSpace{}exp\ESHSpace{}:\ESHSpace{}\textcolor[HTML]{204A87}{\ESHUnicodeSubstitution{\ESHMathSymbol{\Re}}}\ESHSpace{}\ESHUnicodeSubstitution{\ESHMathSymbol{\rightarrow}}\ESHSpace{}\textcolor[HTML]{204A87}{\ESHUnicodeSubstitution{\ESHMathSymbol{\Re}}}\ESHEol
\ESHBol{}\ESHEmptyLine{}\ESHEol
\ESHBol{}integral01\ESHSpace{}:\ESHSpace{}(\textcolor[HTML]{204A87}{\ESHUnicodeSubstitution{\ESHMathSymbol{\Re}}}\ESHSpace{}\ESHUnicodeSubstitution{\ESHMathSymbol{\rightarrow}}\ESHSpace{}\textcolor[HTML]{204A87}{\ESHUnicodeSubstitution{\ESHMathSymbol{\Re}}})\ESHSpace{}\ESHUnicodeSubstitution{\ESHMathSymbol{\rightarrow}}\ESHSpace{}\textcolor[HTML]{204A87}{\ESHUnicodeSubstitution{\ESHMathSymbol{\Re}}}\ESHEol
\ESHBol{}cutRoot\ESHSpace{}:\ESHSpace{}(\textcolor[HTML]{204A87}{\ESHUnicodeSubstitution{\ESHMathSymbol{\Re}}}\ESHSpace{}\ESHUnicodeSubstitution{\ESHMathSymbol{\rightarrow}}\ESHSpace{}\textcolor[HTML]{204A87}{\ESHUnicodeSubstitution{\ESHMathSymbol{\Re}}})\ESHSpace{}\ESHUnicodeSubstitution{\ESHMathSymbol{\rightarrow}}\ESHSpace{}\textcolor[HTML]{204A87}{\ESHUnicodeSubstitution{\ESHMathSymbol{\Re}}}\ESHEol
\ESHBol{}firstRoot\ESHSpace{}:\ESHSpace{}(\textcolor[HTML]{204A87}{\ESHUnicodeSubstitution{\ESHMathSymbol{\Re}}}\ESHSpace{}\ESHUnicodeSubstitution{\ESHMathSymbol{\rightarrow}}\ESHSpace{}\textcolor[HTML]{204A87}{\ESHUnicodeSubstitution{\ESHMathSymbol{\Re}}})\ESHSpace{}\ESHUnicodeSubstitution{\ESHMathSymbol{\rightarrow}}\ESHSpace{}\textcolor[HTML]{204A87}{\ESHUnicodeSubstitution{\ESHMathSymbol{\Re}}}\ESHEol
\ESHBol{}max01\ESHSpace{}:\ESHSpace{}(\textcolor[HTML]{204A87}{\ESHUnicodeSubstitution{\ESHMathSymbol{\Re}}}\ESHSpace{}\ESHUnicodeSubstitution{\ESHMathSymbol{\rightarrow}}\ESHSpace{}\textcolor[HTML]{204A87}{\ESHUnicodeSubstitution{\ESHMathSymbol{\Re}}})\ESHSpace{}\ESHUnicodeSubstitution{\ESHMathSymbol{\rightarrow}}\ESHSpace{}\textcolor[HTML]{204A87}{\ESHUnicodeSubstitution{\ESHMathSymbol{\Re}}}\ESHEol
\ESHBol{}argmax01\ESHSpace{}:\ESHSpace{}(\textcolor[HTML]{204A87}{\ESHUnicodeSubstitution{\ESHMathSymbol{\Re}}}\ESHSpace{}\ESHUnicodeSubstitution{\ESHMathSymbol{\rightarrow}}\ESHSpace{}\textcolor[HTML]{204A87}{\ESHUnicodeSubstitution{\ESHMathSymbol{\Re}}})\ESHSpace{}\ESHUnicodeSubstitution{\ESHMathSymbol{\rightarrow}}\ESHSpace{}\textcolor[HTML]{204A87}{\ESHUnicodeSubstitution{\ESHMathSymbol{\Re}}}
\end{ESHBlock}
\end{subfigure}
\hfill
\begin{subfigure}[t]{0.55\textwidth}
\begin{ESHBlock}
\ESHBol{}tangent\ESHSpace{}A\ESHSpace{}B\ESHSpace{}:\ESHSpace{}(A\ESHSpace{}\ESHUnicodeSubstitution{\ESHMathSymbol{\rightarrow}}\ESHSpace{}B)\ESHSpace{}\ESHUnicodeSubstitution{\ESHMathSymbol{\rightarrow}}\ESHSpace{}Tan\ESHSpace{}A\ESHSpace{}\ESHUnicodeSubstitution{\ESHMathSymbol{\rightarrow}}\ESHSpace{}Tan\ESHSpace{}B\ESHEol
\ESHBol{}tangentValue\ESHSpace{}A\ESHSpace{}:\ESHSpace{}Tan\ESHSpace{}A\ESHSpace{}\ESHUnicodeSubstitution{\ESHMathSymbol{\rightarrow}}\ESHSpace{}A\ESHEol
\ESHBol{}\ESHEmptyLine{}\ESHEol
\ESHBol{}record\ESHSpace{}(\ESHUnicodeSubstitution{\ESHMathSymbol{\cong}})\ESHSpace{}A\ESHSpace{}B\ESHSpace{}=\ESHSpace{}\{\ESHSpace{}to\ESHSpace{}:\ESHSpace{}A\ESHSpace{}\ESHUnicodeSubstitution{\ESHMathSymbol{\rightarrow}}\ESHSpace{}B,\ESHEol
\ESHBol{}\ESHSpace{}\ESHSpace{}\ESHSpace{}\ESHSpace{}\ESHSpace{}\ESHSpace{}\ESHSpace{}\ESHSpace{}\ESHSpace{}\ESHSpace{}\ESHSpace{}\ESHSpace{}\ESHSpace{}\ESHSpace{}\ESHSpace{}\ESHSpace{}\ESHSpace{}\ESHSpace{}\ESHSpace{}from\ESHSpace{}:\ESHSpace{}B\ESHSpace{}\ESHUnicodeSubstitution{\ESHMathSymbol{\rightarrow}}\ESHSpace{}A\ESHSpace{}\}\ESHEol
\ESHBol{}tangent\_R\ESHSpace{}:\ESHSpace{}Tan\ESHSpace{}\textcolor[HTML]{204A87}{\ESHUnicodeSubstitution{\ESHMathSymbol{\Re}}}\ESHSpace{}\ESHUnicodeSubstitution{\ESHMathSymbol{\cong}}\ESHSpace{}\textcolor[HTML]{204A87}{\ESHUnicodeSubstitution{\ESHMathSymbol{\Re}}}\ESHSpace{}*\ESHSpace{}\textcolor[HTML]{204A87}{\ESHUnicodeSubstitution{\ESHMathSymbol{\Re}}}\ESHEol
\ESHBol{}tangentProd\ESHSpace{}A\ESHSpace{}B\ESHSpace{}:\ESHSpace{}Tan\ESHSpace{}(A\ESHSpace{}*\ESHSpace{}B)\ESHSpace{}\ESHUnicodeSubstitution{\ESHMathSymbol{\cong}}\ESHSpace{}Tan\ESHSpace{}A\ESHSpace{}*\ESHSpace{}Tan\ESHSpace{}B\ESHEol
\ESHBol{}tangentTo\_R\ESHSpace{}A\ESHSpace{}:\ESHSpace{}Tan\ESHSpace{}(A\ESHSpace{}\ESHUnicodeSubstitution{\ESHMathSymbol{\rightarrow}}\ESHSpace{}\textcolor[HTML]{204A87}{\ESHUnicodeSubstitution{\ESHMathSymbol{\Re}}})\ESHSpace{}\ESHUnicodeSubstitution{\ESHMathSymbol{\cong}}\ESHSpace{}(A\ESHSpace{}\ESHUnicodeSubstitution{\ESHMathSymbol{\rightarrow}}\ESHSpace{}\textcolor[HTML]{204A87}{\ESHUnicodeSubstitution{\ESHMathSymbol{\Re}}})\ESHSpace{}*\ESHSpace{}(A\ESHSpace{}\ESHUnicodeSubstitution{\ESHMathSymbol{\rightarrow}}\ESHSpace{}\textcolor[HTML]{204A87}{\ESHUnicodeSubstitution{\ESHMathSymbol{\Re}}})
\end{ESHBlock}
\end{subfigure}

\vspace{-2em}
\caption[\dlang{} constants and their types.]{\dlang{} constants and their types.}
\label{dlang-constants}
\end{figure}

\dlang{} is the simply-typed lambda calculus with the constants shown in \Cref{dlang-constants}.
These include basic operators, such as arithmetic and trigonometric operators, higher-order operators, and primitives to compute derivatives.
The syntax permits polymorphic type signatures, but semantically we treat polymorphism at the metatheoretic level.

\paragraph{Higher-Order Operators.} The function \ESHInline{\ESHBol{}integral01} gives the Riemannian integral of a function on the interval $[0,1]$.
\ESHInline{\ESHBol{}max01} maximizes a function over the interval $[0, 1]$,
and \ESHInline{\ESHBol{}argmax01} finds its maximizing argument.
\ESHInline{\ESHBol{}cutRoot} finds the root of a function \ESHInline{\ESHBol{}f\ESHSpace{}:\ESHSpace{}\textcolor[HTML]{204A87}{\ESHUnicodeSubstitution{\ESHMathSymbol{\Re}}}\ESHSpace{}\ESHUnicodeSubstitution{\ESHMathSymbol{\rightarrow}}\ESHSpace{}\textcolor[HTML]{204A87}{\ESHUnicodeSubstitution{\ESHMathSymbol{\Re}}}}, assuming that it has a single root and is negative for smaller values and positive for larger values.
\ESHInline{\ESHBol{}firstRoot}, on input \ESHInline{\ESHBol{}f\ESHSpace{}:\ESHSpace{}\textcolor[HTML]{204A87}{\ESHUnicodeSubstitution{\ESHMathSymbol{\Re}}}\ESHSpace{}\ESHUnicodeSubstitution{\ESHMathSymbol{\rightarrow}}\ESHSpace{}\textcolor[HTML]{204A87}{\ESHUnicodeSubstitution{\ESHMathSymbol{\Re}}}},
finds the first root of \ESHInline{\ESHBol{}f} on a region starting at 0.

\paragraph{Derivatives.}
\TBD{Depends on knowledge of forward-mode derivatives and tangent spaces.}
\ESHInline{\ESHBol{}tangent} is a first-class function that computes derivatives, where
the type function \ESHInline{\ESHBol{}Tan} gives the space of \NA{tangent bundles} over a space; conceptually, a space
of pairs of values and derivatives.
The function \ESHInline{\ESHBol{}tangentValue} projects the value part of this tangent bundle.
The isomorphisms of tangent bundles --  i.e., \ESHInline{\ESHBol{}tangent\_R}, \ESHInline{\ESHBol{}tangentProd}, and \ESHInline{\ESHBol{}tangentTo\_R} -- assist with manipulating the information that corresponds to the derivative part of the tangent bundle when it is possible for certain spaces.
To concretize the concept behind these isomorphisms, we now present the implementation of \ESHInline{\ESHBol{}deriv} from~\Cref{fig:diff-ray-tracing},
which uses \ESHInline{\ESHBol{}tangent} and these isomorphisms:
\begin{ESHBlock}
\ESHBol{}\textcolor[HTML]{346604}{let}\ESHSpace{}deriv\ESHSpace{}(f\ESHSpace{}:\ESHSpace{}\textcolor[HTML]{204A87}{\ESHUnicodeSubstitution{\ESHMathSymbol{\Re}}}\ESHSpace{}\ESHUnicodeSubstitution{\ESHMathSymbol{\rightarrow}}\ESHSpace{}\textcolor[HTML]{204A87}{\ESHUnicodeSubstitution{\ESHMathSymbol{\Re}}})\ESHSpace{}(x\ESHSpace{}:\ESHSpace{}\textcolor[HTML]{204A87}{\ESHUnicodeSubstitution{\ESHMathSymbol{\Re}}})\ESHSpace{}:\ESHSpace{}\textcolor[HTML]{204A87}{\ESHUnicodeSubstitution{\ESHMathSymbol{\Re}}}\ESHSpace{}=\ESHEol
\ESHBol{}\ESHSpace{}\ESHSpace{}snd\ESHSpace{}(tangent\_R.to\ESHSpace{}(tangent\ESHSpace{}f\ESHSpace{}(tangent\_R.from\ESHSpace{}(x,\ESHSpace{}1))))
\end{ESHBlock}
This implementation calls \ESHInline{\ESHBol{}tangent} with $f$ and a query for the derivative of $f$ at $x$ in the direction $1$.
The query is a tangent bundle constructed with the isomorphism \ESHInline{\ESHBol{}tangent\_R} from the pair $(x, 1)$.
\ESHInline{\ESHBol{}deriv} then projects out the derivative part of \ESHInline{\ESHBol{}tangent}'s result, using \ESHInline{\ESHBol{}tangent\_R} in the opposite direction and the standard second projection on binary products.


\paragraph{Semantics.} Over the next sections, we develop the full syntax and semantics of \dlang{}. In \Cref{fo-semantics}, we describe a first-order (i.e., no higher-order functions) differentiable language that supports Clarke semantics and higher-order derivatives, by defining a Cartesian monoidal category \cat{AD}.
In \Cref{ho-semantics}, we will define semantics for the higher-order language \dlang{} by taking a category of presheaves, \cat{HAD}, over \cat{AD}.
We defer computability concerns to \Cref{smooth:implementation}.

\section{Semantics of a First-Order Differentiable Language (\cat{AD})}
\label{fo-semantics}

In this section, we describe a first-order (i.e., no higher-order functions) differentiable language that supports Clarke semantics and higher-order derivatives, by defining a Cartesian monoidal category \cat{AD}.
\Cref{smooth:fo-syntax} presents the syntax and typing rules for the language for \cat{AD}.
The $\ast$ type represents the unit type, having a single value $!$ in it.
Given any object $K \in \cat{AD}$, the type expression $\begin{uq}K\end{uq}$ represents the type whose semantics is $K$.
Given any arrow $f : \denote{\tau_1} \sto \denote{\tau_2}$ of \cat{AD} and given some expression $\Gamma \vdash e : \tau_1$ the syntax $\begin{uq}f\end{uq}(e)$ applies the map $f$ to the result of $e$.
When the constants are binary operators like $+$ and $\times$, we permit syntactic sugar to write them infix, such that, e.g., $e_1 + e_2$ is shorthand for $+(e_1, e_2)$.
The syntax $\frac{\partial e_y}{\partial x}\mid_{x = e_x} \cdot e_{dx}$ computes the directional derivative of $e_y$ with respect to $x$ at $x = e_x$ in the direction of infinitesimal perturbation $e_{dx}$.

\providecommand{\rulesep}{\unskip\ \vrule\ }

\begin{figure}
\begin{subfigure}[t]{0.38\textwidth}
\center{\emph{Syntax}}
\begin{align*}
\text{variables } x
\\ \text{types } \tau &\mathbin{::=} \ast \gor \tau_1 \times \tau_2 \gor \begin{uq}K\end{uq}
\\ \text{contexts } \Gamma &\mathbin{::=} \cdot \gor \Gamma, x : \tau
\\ \text{functions } f &\in \mathsf{Arr}(\cat{AD})
\\ \text{expressions } e &\mathbin{::=} x \gor \begin{uq}f\end{uq}(e)
\\ &\gor ! \gor (e, e)
\\ &\gor \frac{\partial e}{\partial x} \mid_{x = e} \cdot e
\\ &\gor \text{let } x \triangleq e \text{ in } e
\end{align*}
\end{subfigure}
\rulesep
\begin{subfigure}[t]{0.58\textwidth}
\center{\emph{Typing rules}}
\begin{mathpar}
\inferrule*
{(x : \tau) \in \Gamma}
{\Gamma \vdash x : \tau}

\inferrule*
{\Gamma \vdash e : \tau_1
\\ f : \denote{\tau_1} \sto \denote{\tau_2}}
{\Gamma \vdash \begin{uq}f\end{uq}(e) : \tau_2}

\inferrule*
{ }
{\Gamma \vdash \,! : \ast}

\inferrule*
{\Gamma \vdash e_1 : \tau_1
\\ \Gamma \vdash e_2 : \tau_2}
{\Gamma \vdash (e_1, e_2) : \tau_1 \times \tau_2}

\inferrule*
{\Gamma \vdash e_1 : \tau_1
\\ \Gamma, x : \tau_1 \vdash e_2 : \tau_2}
{\Gamma \vdash \text{let } x \triangleq e_1 \text{ in } e_2 : \tau_2}

\inferrule*
{\Gamma, x : \tau_1 \vdash e_y : \tau_2
\\ \Gamma \vdash e_x : \tau_1
\\ \Gamma \vdash e_{dx} : \tau_1
}
{\Gamma \vdash \frac{\partial e_y}{\partial x}\mid_{x = e_x} \cdot e_{dx} : \tau_2}
\end{mathpar}
\end{subfigure}
\caption{Syntax and typing rules for the language for \cat{AD}.}
\label{smooth:fo-syntax}
\end{figure}

\begin{figure}
\begin{subfigure}[t]{0.25\textwidth}
\center{\emph{Types}}
\begin{mathpar}
\inferrule*
{\tau \text{ type} }
{\denote{\tau} \in \mathsf{Ob}(\cat{AD})}
\end{mathpar}
\begin{align*}
\denote{\ast} &\triangleq 1_\cat{AD}
\\ \denote{\tau_1 \times \tau_2} &\triangleq \denote{\tau_1} \times \denote{\tau_2}
\\ \denote{\begin{uq}K\end{uq}} &\triangleq K
\end{align*}
\end{subfigure}
\rulesep
\begin{subfigure}[t]{0.23\textwidth}
\center{\emph{Contexts}}
\begin{mathpar}
\inferrule*
{\Gamma \text{ context} }
{\denote{\Gamma} \in \mathsf{Ob}(\cat{AD})}
\end{mathpar}
\begin{align*}
\denote{\cdot} &\triangleq 1_\cat{AD}
\\ \denote{\Gamma, x : \tau} &\triangleq \denote{\Gamma} \times \denote{\tau}
\end{align*}
\end{subfigure}
\rulesep
\begin{subfigure}[t]{0.46\textwidth}
\center{\emph{Terms}}
\begin{mathpar}
\inferrule*
{\Gamma \vdash e : \tau}
{\denote{e} : \denote{\Gamma} \sto \denote{\tau}  }
\end{mathpar}
\begin{align*}
\denote{\begin{uq}f\end{uq}(e)} &\triangleq
  f \circ \denote{e}
\\ \denote{!} &\triangleq \,!
\\ \denote{(e_1, e_2)} &\triangleq \langle \denote{e_1}, \denote{e_2} \rangle
\\ \denote{\text{let } x \triangleq e_1 \text{ in } e_2 } &\triangleq \denote{e_2} \circ \langle \text{id}, \denote{e_1} \rangle
\\ \denote{\frac{\partial e_y}{\partial x}\mid_{x = e_x} \cdot e_{dx}} &\triangleq
\denote{e_y}' \circ \langle \langle \mathrm{id}, e_x \rangle , \langle 0, e_{dx} \rangle \rangle
\end{align*}

\end{subfigure}
\caption[Semantics of the language for \cat{AD}.]{Semantics of the language for \cat{AD}.}
\vspace{-0.8em}
\label{smooth:fig:fo-semantics}
\end{figure}

\Cref{smooth:fig:fo-semantics} presents the semantics for the language for \cat{AD}, which we explain in this section.
Our semantics of derivatives is phrased in terms of \emph{Clarke's generalized derivative} \citep{clarke1990},
which enables capturing differentiable properties of locally Lipschitz but nonsmooth functions such as $\max$, $\min$, and $\mathsf{ReLU}$.
We will now present the background material we use to define the semantics of the language for \cat{AD}.

\subsection{Preliminaries}

A \emph{domain} $D$ is a set with a partial-order structure $\sqsubseteq$ that supports \emph{directed joins} $\bigsqcup_{d \in S} d$, which are just joins of \emph{directed} subsets $S \subseteq D$, which are those subsets such that if $x, y \in D$, then there is some $z \in D$ such that $x \sqsubseteq z$ and $y \sqsubseteq z$.
We call the partial-order relation $\sqsubseteq$ \emph{specialization}.
The relation $x \sqsubseteq y$ intuitively means that $x$ behaves in a way that is compatible with how $y$ behaves.
An element $x \in D$ is \emph{maximal} if for any $y \in D$, if $x \sqsubseteq y$, then $y \sqsubseteq x$.

Define
\[
\IR \triangleq \{ [a, b] \mid a, b \in \R, a \le b \} \cup \{ \R \}
\]
as the domain of \emph{interval reals}, partially ordered ($\sqsubseteq$) by reverse set inclusion.
Its maximal elements are the intervals of the form $[a, a]$, which we often just write as $a$.
Arithmetic operations can be extended from $\R$ to $\IR$ (see, e.g., \citet{edalat2004}).
Note that $\R$ serves as a bottom element, and we refer to it with the symbol $\bot$.
For any vector space $V$ (over $\R$), let $\Cvx(V)$ be the set of nonempty convex sets in $V$, with an order relation $\sqsubseteq$ also corresponding to reverse inclusion.
Note that $V$ serves as a bottom element, and we refer to it with the symbol $\bot$.
Note that we have the sequence of embeddings $\R^n \hookrightarrow \IR^n \hookrightarrow \Cvx(\R^n)$: every vector $v \in \R^n$ can be treated as a tuple of singleton intervals $\IR^n$, and every element $x \in \IR^n$ can be treated as a (convex) hyperrectangle, where some dimensions of the hyperrectangle may be infinite.
We use the notation $\iota_{\R^n \hookrightarrow \IR^n}$ and $\iota_{\IR^n \hookrightarrow \Cvx(\R^n)}$ to denote these embeddings, respectively.

\paragraph{The Clarke Derivative.}

Let $f : \R^n \to \R^m$. If $f$ is locally Lipschitz on $X \subseteq U$, let $Z_f \subseteq X$ be the points of nondifferentiability of $f$. The \emph{Bouligand subdifferential} of $f$ at $x \in X$ is the \emph{set} of matrices
\[
\partial_Bf(x) \triangleq \left\{ H : \R^{m \times n} \mid
\begin{matrix}
H = \lim_{j \to \infty} Jf(x_j) \text{ for some sequence } (x_j)_{j \in \nat}
\\ \text{ where } x_j \in X \setminus Z_f \text{ for all } j \in \nat \text{ and } \lim_{j \to \infty} x_j = x
\end{matrix}\right\},
\]
where $J$ is the Jacobian operator defining the derivative of a function at a point where it is differentiable.
The \emph{Clarke Jacobian} of $f$ at $x$ is given by the convex hull
$
\partial f(x) \triangleq \text{hull}(\partial_B f(x)).
$
The Clarke Jacobian $\partial f(x) \in \Cvx(\R^{m \times n})$ is always compact (since $f$ is locally Lipschitz).

Given $f : \R^n \to {\R^m}_\bot$, let $U$ be the largest open set on which $f$ is both defined and locally Lipschitz. We can define the \emph{partial Clarke Jacobian} of $f$ to   be
\[
\partial_\bot f(x) = \begin{cases}
\partial f(x) & x \in U
\\ \bot & x \notin U
\end{cases}
\]
such that $\partial_\bot : (\R^n \to {\R^m}_\bot) \to \R^n \to \Cvx(\R^{m \times n})$.
We can map values of $\Cvx(A)$ to $A_\bot$ (for any $A$) by mapping maximal elements $\{x\} \in   \Cvx(A)$ to $x \in A_\bot$ and everything else to $\bot$.
Using this conversion, we can also give the partial Clarke Jacobian the type
$\partial_\bot : (\R^n \to \Cvx(\R^m)) \to \R^n \to \Cvx(\R^{m \times n})$,
and thus we can also iterate the partial Clarke Jacobian construction to get higher-order derivatives $\partial_\bot^k : (\R^n \to {\R^m}_\bot) \to \R^n \to \Cvx(\R^{m \times n^k})$.
Note that the $k+1$th-order Clarke Jacobian is $\bot$ unless the $k$th-order Clarke Jacobian is maximal; thus, when defined, higher-order Clarke Jacobians are just Clarke Jacobians of conventional higher-order derivatives.

When a function is differentiable, its partial Clarke Jacobian is a maximal element. When it is locally Lipschitz but not differentiable, the partial Clarke Jacobian is a compact convex set. When it is not locally Lipschitz, the partial Clarke Jacobian is the entire space, corresponding to $\bot$.

\subsection{Smoothish Maps}

\NA{We will now define \cat{AD}.
The objects of \cat{AD} are the natural numbers, where $n \in \nat$ corresponds to $n$-dimensional Euclidean space. To emphasize that we are thinking of Euclidean space, we write the object $n \in \nat$ as $\R^n$.} \fTBD{MC: I'd like to delete whatever isn't necessary here and instead just start with the sentence below.}
A morphism of \cat{AD} is a \emph{smoothish map}: a \emph{derivative tower} that is \emph{successively consistent}.
A \emph{derivative tower} $f$ between spaces $\R^n$ and $\R^m$, $f : \cont{\R^n \sto \R^m}$, is a collection of continuous maps (taking the Scott topology for $\IR$)
\[
f^{(k)} : \R^n \times (\R^n)^k \to \IR^m
\]
for each $k \in \nat$, where $f^{(k)}$ represents the $k$th-order derivative.
This defines a smoothish map as a power series, where the first $\R^n$ argument is the point where the map is evaluated, and the remaining $k$ arguments represent the inputs to a multilinear map representing the derivative.\footnote{This representation as \emph{derivative towers} is largely drawn from \citep{elliott-higher-ad}.
}
Given vectors $x \in \R^n$ and $y \in \R^k$, let $x \otimes y \in \R^{n \times k}$ denote the tensor product.
Define $\text{Mat}_k : (\R^n \times (\R^n)^k \to \IR^m) \to \R^n \to {\R^{m \times n^k}}_\bot$ at a point $x \in \R^n$ such that $\text{Mat}_k(f)(x) = M$ if there is a matrix $M \in \R^{m \times n^k}$ such that for all $dx_1, \ldots, dx_k \in \R^n$, we have
\begin{equation}
  f(x; dx_1, \ldots, dx_k) = \iota_{\R^m \hookrightarrow \IR^m} \left( M \cdot (dx_1 \otimes \ldots \otimes dx_k) \right),
  \label{eq:multilinearity}
\end{equation}
and $\text{Mat}_k(f)(x) = \bot$ if there is no such matrix ($M$).
Equation~\ref{eq:multilinearity} requires that $f$ is multilinear in its $dx_1, \ldots, dx_k$ arguments, which means that $f$ has a form that permits differentiation.

\newcommand{\hatdot}{\mathbin{\hat\cdot}}
\newcommand{\hatotimes}{\mathbin{\hat\otimes}}

\begin{definition}
\label{derivative-tower-consistent}
We define a consistency relation $\text{Cons}_k(g, f)$ for a function $g : \R^n \times (\R^n)^{k} \to \IR^m$ and a function $f : \R^n \to \Cvx(\R^{m \times n^k})$ to hold if
for all $x \in \R^n$ and for all $dx_1, \ldots, dx_{k} \in \R^n$,
\[
\iota_{\IR^m \hookrightarrow \Cvx(\R^m)}\left(
g(x; dx_1, \ldots, dx_{k})
\right)
\sqsubseteq
f(x) \cdot (dx_1 \otimes \ldots \otimes dx_{k}).
\]
A derivative tower $f$ is \emph{successively consistent} if for all $k \in \nat$, we have
\[
\text{Cons}_{k+1}(f^{(k+1)}, \partial_\bot \text{Mat}_k(f^{(k)})),
\]
meaning that each successive derivative $f^{(k + 1)}$ is consistent with the value-level behavior of $f^{(k)}$.
\end{definition}

A \emph{smoothish map} $f$ is a successively consistent derivative tower.
We call a smoothish map \emph{smooth} if $f^{(k)}$ is maximal for all $k$ (which agrees with the standard definition of a smooth map).
We will later show (in \Cref{smooth:categorical-operations}) that smoothish maps form a category \cat{AD}, and then by categorical semantics, that all expressions in the first-order language map to that category.

\subsection{Primitives}

Any first-order primitive may be implemented by giving its power-series representation.
We use the notation $f^{(k)}(x; \vec{v})$ to denote the $k$th derivative of $f$ at $x$ in directions $\vec{v}$; a smoothish map $f$ is defined by the collection of these functions for all $k \in \nat$.
These data provide power-series expansions around any input point.
There is a map $\cont{0 : \Gamma \sto A}$ (for any $\Gamma, A \in \cat{AD}$) that always returns zero regardless of its input.
A linear map $\cont{f : A \to B}$ determines a smooth map $\mathsf{linear}(f) : A \sto B$ by
\begin{align*}
\mathsf{linear}(f)^{(0)}(x) &\triangleq f(x)
\\
\mathsf{linear}(f)^{(1)}(x; v) &\triangleq f(v)
\\
\mathsf{linear}(f)^{(k + 2)}(x; \vec{v}) &\triangleq 0
\end{align*}

\paragraph{Derivative-Tower Construction.}

A derivative tower can be viewed as a stream of a function and all of its derivatives.
Streams are characterized by the isomorphism
\[
\mathsf{Stream}(A) \cong A \times \mathsf{Stream}(A)
\] that says that a stream $s : \mathsf{Stream}(A)$ is exactly composed of its head, $\mathsf{head}(s) : A$, and its tail, $\mathsf{tail}(s) : \mathsf{Stream}(A)$. To construct a derivative tower, we define the map
$
\cont{\mathsf{foldDer}}
$
as an analogue to the \emph{cons} operation on streams. For instance, given value-level definitions of sine and cosine, $\sin$ and $\cos$,
it is well-founded to define their derivative towers as
\begin{align*}
\begin{qvect}\ESHInline{\ESHBol{}sin}\end{qvect} &\triangleq \mathsf{foldDer}(\sin,\begin{qvect}x, dx \vdash \ESHInline{\ESHBol{}cos}(x) \:\ESHInline{\ESHBol{}*}\: dx \end{qvect})
\\ \begin{qvect}\ESHInline{\ESHBol{}cos}\end{qvect} &\triangleq \mathsf{foldDer}(\cos, \begin{qvect}x, dx \vdash -\ESHInline{\ESHBol{}sin}(x) \:\ESHInline{\ESHBol{}*}\: dx\end{qvect}),
\end{align*}
just as it would be to define two mutually recursive streams $\mathsf{evens} = \mathsf{cons}(0, \mathsf{map}(\lambda x.\ x + 1, \mathsf{odds}))$ and $\mathsf{odds} = \mathsf{cons}(1, \mathsf{map}(\lambda x.\ x + 3, \mathsf{evens})))$.

We define $\mathsf{foldDer}$ as follows, where $f : A \to B$ and $g : A \stimes A \sto B$, such that $\mathsf{foldDer}(f, g) : A \sto B$.
\begin{align*}
\mathsf{foldDer}(f, g)^{(0)}(x) &\triangleq f(x)
\\ \mathsf{foldDer}(f, g)^{(k+1)}(x; v_1, \ldots, v_{k + 1}) &\triangleq
g^{(k)}((x, v_1); (v_2, 0), \ldots, (v_{k+1}, 0)) \qquad\qquad \qquad (k \in \nat)
\end{align*}
One of the perturbations $\cont{v_1}$ is passed in as the value to $\cont{g}$, and then that perturbation is not considered to have any derivatives itself, hence the 0s in the second components of the perturbation passed to $g$.
Setting the first components of the derivatives to $v_2, \ldots, v_{k+1}$ establishes these as independent infinitesimal perturbations of the first value component, $x$.

\subsubsection{Arithmetic Operations}
\label{sec:arith}

The binary arithmetic operations are first-order functions and so can be represented in \cat{AD} as functions with the type $\begin{contc}\R \stimes \R \sto \R\end{contc}$.
Addition and subtraction are linear, so their semantics is simply
$\begin{qvect}\ESHInline{\ESHBol{}+}\end{qvect} \triangleq \mathsf{linear}(+)$
and $\begin{qvect}\ESHInline{\ESHBol{}\ESHDash{}}\end{qvect} \triangleq \mathsf{linear}(-)$.
We define the smooth multiplication operator by
\begin{align*}
\begin{qvect}\ESHInline{\ESHBol{}*}\end{qvect}
&
\triangleq
\mathsf{foldDer}(
\lambda (x, y).\ x \times y
,
\begin{qvect}
(x, y), (dx, dy) \vdash
x \:\ESHInline{\ESHBol{}*}\: dy
\:\ESHInline{\ESHBol{}+}\:
y \:\ESHInline{\ESHBol{}*}\: dx
\end{qvect}
),
\end{align*}
whose derivative is the familiar product rule.
Note that our definition of $\begin{qvect}\ESHInline{\ESHBol{}*}\end{qvect}$ has two recursive references to multiplication's own \emph{smooth} map.
This recursive reference is well-founded because the result is used in a way that does not demand any further differentiation.
This recursive pattern is similar to defining the stream of natural numbers $\mathsf{nats} : \mathsf{Stream}(\nat)$ by
\[
\mathsf{nats} \triangleq \mathsf{cons}(0, \mathsf{map}\ (\lambda x.\ x + 1)\ \mathsf{nats}),
\]
where mapping a function over $\mathsf{nats}$ does not demand any further calls to $\mathsf{tail}$.
Reciprocals (used for division) can be defined using $\mathsf{foldDer}$ as well, where all $k$th-order derivatives will return $\bot$ when the input is 0.

\subsubsection{Lipschitz but Nonsmooth Functions}
Many functions, such as $\max$, $\min$, and $\mathsf{ReLU}$, are locally Lipschitz but not smooth.
These functions are used pervasively in contexts that require differentiation,
so their admissibility in a differential-programming semantics is paramount.
Whereas most differential-programming semantics say that derivative of $\max$ is undefined when its arguments are equal, our use of Clarke derivatives permits a non-$\bot$ result.
\fTBD{JM: mention applications using, e.g., ReLU}

We define \ESHInline{\ESHBol{}max} as follows, where \text{hull} computes the interval corresponding to the convex hull of the union of a set of points.
\begin{align*}
\begin{qvect}\mathsf{max}\end{qvect}^{(0)}(x, y) &\triangleq \max(x, y)
\\
\begin{qvect}\mathsf{max}\end{qvect}^{(1)}((x, y); (dx, dy)) &\triangleq
\begin{cases}
dx & x > y
\\ dy & y < x
\\ \text{hull}(\{dx, dy\}) & x = y
\end{cases}
\\
\begin{qvect}\mathsf{max}\end{qvect}^{(k + 2)}((x, y); \vec{v}) &\triangleq
\begin{cases}
0 & x \neq y
\\ \bot & x = y
\end{cases}
\end{align*}

\subsubsection{Differentiation Operator}

To give a semantics to the syntax $\frac{\partial e_y}{\partial x}\mid_{x = e_{x}} \cdot e_{dx}$ for differentiation, we first define a differentiation operator, postfix $'$, on smoothish maps, where $f : A \sto B$ maps to $f' : A \stimes A \sto B$.
Defining this operator is nontrivial, because all the derivatives of $\cont{f'}$ must consider not only perturbations to the function value but also perturbations to the derivative argument,
which are not accounted for in the original derivative tower:
note that the $k$th derivative of $\cont{f}$ is a multilinear map from $\cont{A^k}$, whereas the $k$th derivative of $\cont{f'}$ is a multilinear map from $\cont{A^{2k}}$.
We show the value and first few derivatives; because $x$ will always be applied as the value argument to derivatives of $f$, we elide those arguments:
\begin{align*}
{f'}^{(0)}(x, v) &= f^{(1)}(v)
\\
{f'}^{(1)}((x, v); (dx_a, dv_a)) &=
f^{(2)}(v, dx_a)
+ f^{(1)}(dv_a)
\\
{f'}^{(2)}((x, v); (dx_a, dv_a), (dx_b, dv_b)) &=
f^{(3)}(v, dx_a, dx_b)
+ f^{(2)}(dv_a, dx_b)
+ f^{(2)}(dx_a, dv_b)
\end{align*}
The general formula is:
\begin{align*}
&{f'}^{(k)}((x, v); (dx_1,dv_1),\ldots,(dx_k,dv_k)) \triangleq
\\ & \qquad f^{(k + 1)}(x; v, dx_1, \ldots, dx_k)
  + \sum_{j = 1}^k f^{(k)}(x; dx_1, \ldots, dx_{j-1}, dv_j, dx_{j+1}, \ldots dx_k).
\end{align*}

\TBD{Explain differentiation syntax and give its semantics}

\subsubsection{Revisiting Derivative Tower Construction}

The $'$ operator is analogous to the $\mathsf{tail}$ operator of a stream,
in that derivative towers have the section-retraction pair
\begin{center}
\begin{tikzcd}[sep=7em]
A \sto B
\arrow[r, tail, "\lambda f. (f^{(0)}{,} f')", shift left]
& (A \to B) \times (A \stimes A \sto B)
\arrow[l, two heads, "\mathsf{foldDer}", shift left]
\end{tikzcd}
\end{center}
that characterizes a derivative tower $\cont{f : A \sto B}$ as a function $\cont{f^{(0)} : A \to B}$ for the evaluation map of $f$ together with a derivative tower $\cont{f' : A \stimes A \sto B}$ where $f'(x, v)$ represents the directional derivative of $f$ at $x$ in the direction $v$.


Given this observation, we may for convenience in the rest of the paper define a smoothish map $f$ by its value-level function $f^{(0)}$ and its smoothish derivative $f'$, denoting an implicit use of $\mathsf{foldDer}$. For example, we can equivalently define the smooth multiplication operator (\Cref{sec:arith}) by
\begin{align*}
\begin{qvect}\ESHInline{\ESHBol{}*}\end{qvect}^{(0)}
&
\triangleq
\lambda (x, y).\ x \times y
\\
\begin{qvect}\ESHInline{\ESHBol{}*}\end{qvect}'
&
\triangleq
\begin{qvect}
(x, y), (dx, dy) \vdash
x \:\ESHInline{\ESHBol{}*}\: dy
\:\ESHInline{\ESHBol{}+}\:
y \:\ESHInline{\ESHBol{}*}\: dx
\end{qvect}
.
\end{align*}

\subsection{Categorical Operations}
\label{smooth:categorical-operations}

\cat{AD} forms a Cartesian monoidal category.
We describe the categorical operations here, and prove that they satisfy the expected properties in \Cref{smooth:ad-category}.
The maps $\mathrm{id} : A \to A$ (for all $A$), $\begin{contc}! : \Gamma \to \ast\end{contc}$ (for all $\cont{\Gamma}$), $\begin{contc}\mathsf{fst} : A \stimes B \to A\end{contc}$ and $\begin{contc}\mathsf{snd} : A \stimes B \to B\end{contc}$ (for all $\begin{contc}A, B\end{contc}$) are all in fact linear maps and so can be made into smooth maps with the $\cont{\mathsf{linear}}$ operator described above.\fTBD{MC: missing fst and snd in syntax}
Given $\begin{contc}f : \Gamma \sto A\end{contc}$ and $\begin{contc}g : \Gamma \sto B\end{contc}$, we define their product $\begin{contc}\langle f, g \rangle : \Gamma \sto A \stimes B\end{contc}$ by
\[
\langle f, g \rangle^{(k)}(x; \vec{v}) \triangleq (f^{(k)}(x; \vec{v}), g^{(k)}(x; \vec{v}))
,
\]

It only remains to define composition.
Composition of smooth maps is given by Faà di Bruno's formula.
The definition is perhaps easier to understand by example for small $k$.
The following shows derivatives of $\begin{contc}g \circ f\end{contc}$ at $\cont{x}$; since $\cont{g}$ is always differentiated at $\cont{f(x)}$ and $\cont{f}$ is always differentiated at $\cont{x}$, we elide those arguments:
\begin{align*}
(g \circ f)^{(0)}() &= g^{(0)}()
\\
(g \circ f)^{(1)}(v_a) &= g^{(1)}(f^{(1)}(v_a))
\\
(g \circ f)^{(2)}(v_a, v_b) &= g^{(2)}(f^{(1)}(v_a), f^{(1)}(v_b)) + g^{(1)}(f^{(2)}(v_a, v_b))
\\
(g \circ f)^{(3)}(v_a, v_b, v_c) &=
g^{(3)}(f^{(1)}(v_a), f^{(1)}(v_b), f^{(1)}(v_c))
\\ &+ g^{(2)}(f^{(2)}(v_a, v_b), f^{(1)}(v_c)) + g^{(2)}(f^{(2)}(v_a, v_c), f^{(1)}(v_b))
\\ &+ g^{(2)}(f^{(2)}(v_b, v_c), f^{(1)}(v_a))
+ g^{(1)}(f^{(3)}(v_a, v_b, v_c))
\end{align*}
The general formula is
\begin{align*}
(g \circ f)^{(k)}(x; \vec{v}) &\triangleq
\sum_{\pi \in \mathcal{H}(\{1, \ldots, k\})}
\text{let } n \triangleq |\pi| \text{ in }
g^{(n)}\left(f(x);
\begin{matrix}
f^{(|\pi_1|)}(x; v_{{\pi_1}_1}, \ldots, v_{{\pi_1}_{|\pi_1|}}),
\\ \vdots,
\\ f^{(|\pi_{n}|)}(x; v_{{\pi_{n}}_1}, \ldots, v_{{\pi_{n}}_{\left|\pi_{n}\right|}})
\end{matrix}
\right)
,
\end{align*}
where $\mathcal{H}(S)$ is the set of partitions of a set $S$, and $|S|$ is the cardinality of a set.
Note that in the general case, the inputs to $g^{(n)}$ may be elements of $\IR^b$ rather than $\R^b$ (for some $b \in \nat$).
Given any $n$th derivative $g^{(n)} : \R^b \times (\R^b)^k \to \IR^m$, we extend it to apply to inputs $x \in \IR^b$ and $dx_1, \ldots, dx_k \in \IR^b$ by
\[
g^{(n)}(x; dx_1, \ldots, dx_k) \triangleq \mathrm{hull}\left\{ g^{(n)}(y; dy_1, \ldots, dy_k) \mid y \in x, dy_1 \in dx_1, \ldots, dy_k \in dx_k \right\}.
\]

Faà di Bruno's formula simplifies drastically in the case that either function is linear:
\begin{proposition}
\label{linear-postcompose}
For any $g : B \to C$ and any derivative tower $f : A \sto B$,
for any $k \in \nat$ and any $x \in A$ and $v_1, \ldots, v_k \in A$,
\[
(\mathsf{linear}(g) \circ f)^{(k)}(x; v_1, \ldots, v_k) = g(f^{(k)}(v_1, \ldots, v_k))
\]
\end{proposition}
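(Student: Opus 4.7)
The plan is to apply Faà di Bruno's formula directly and then exploit the fact that $\mathsf{linear}(g)$ has essentially trivial derivatives. Recall from the construction of $\mathsf{linear}$ that $\mathsf{linear}(g)^{(0)}(y) = g(y)$, $\mathsf{linear}(g)^{(1)}(y; w) = g(w)$, and $\mathsf{linear}(g)^{(n)}(y; \vec{w}) = 0$ for all $n \ge 2$. Consequently, in the Faà di Bruno sum defining $(\mathsf{linear}(g) \circ f)^{(k)}(x; v_1, \ldots, v_k)$, every term indexed by a partition $\pi \in \mathcal{H}(\{1, \ldots, k\})$ with $|\pi| \ge 2$ vanishes, and it suffices to analyze partitions of size at most one.

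Next, I would split on $k$. When $k = 0$, the only partition is the empty partition, with $|\pi| = 0$, which contributes $\mathsf{linear}(g)^{(0)}(f(x)) = g(f^{(0)}(x))$. When $k \ge 1$, the set $\{1, \ldots, k\}$ admits no partition of size $0$; the unique partition of size $1$ is $\pi = \{\{1, \ldots, k\}\}$, contributing
\[
\mathsf{linear}(g)^{(1)}\bigl(f(x);\, f^{(k)}(x; v_1, \ldots, v_k)\bigr) = g\bigl(f^{(k)}(x; v_1, \ldots, v_k)\bigr),
\]
with all remaining partitions having $|\pi| \ge 2$ and contributing $0$. In both cases one lands on the claimed right-hand side.

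The one subtlety I anticipate is the extension to interval-valued arguments: when $f^{(k)}(x; v_1, \ldots, v_k) \in \IR^m$ rather than $\R^m$, the value of $\mathsf{linear}(g)^{(1)}$ is defined by the hull formula $\mathrm{hull}\{ g(w) : w \in f^{(k)}(x; \vec{v})\}$. Since $g : B \to C$ is linear, it sends convex hulls to convex hulls and commutes with the hull extension; so the identification of $\mathsf{linear}(g)^{(1)}(f(x); f^{(k)}(x; \vec{v}))$ with $g(f^{(k)}(x; \vec{v}))$ remains valid. I would record this as a short lemma or remark before invoking it, since it is the only step where the interval-valued semantics needs a sentence of attention rather than a direct algebraic manipulation. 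Everything else is a routine case analysis in Faà di Bruno's formula.
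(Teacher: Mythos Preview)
Your proposal is correct and follows essentially the same approach as the paper: apply Fa\`a di Bruno's formula and observe that, since $\mathsf{linear}(g)^{(n)} = 0$ for $n \ge 2$, every term with $|\pi| > 1$ vanishes, leaving only the single surviving term. You are slightly more careful than the paper's sketch in explicitly separating the $k=0$ case (where the surviving partition is empty, $|\pi|=0$) and in flagging the interval-valued hull extension, but the core argument is identical.
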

\begin{proof}[Proof sketch]
Because $\mathsf{linear}(g)^{(j)}(v_1, \ldots, v_j) = 0$ whenever $j > 1$ by definition of $\mathsf{linear}$,
all terms in the sum given by the Faà di Bruno formula where $|\pi| > 1$ will be 0.
We can thus remove those terms, and the only term in the sum that will remain is the one where $|\pi| = 1$.
\end{proof}
\begin{proposition}
\label{linear-precompose}
For any consistent derivative tower $g : B \sto C$ and any $f : A \to B$ that maps maximal elements to maximal elements,
for any $k \in \nat$ and any maximal $x \in A$ and maximal $v_1, \ldots, v_k \in A$,
\[
(g \circ \mathsf{linear}(f))^{(k)}(x; v_1, \ldots, v_k) = g^{(k)}(f(v_1), \ldots, f(v_k)).
\]
\end{proposition}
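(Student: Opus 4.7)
The plan is to apply the Faà di Bruno formula from the preceding section to expand $(g \circ \mathsf{linear}(f))^{(k)}(x; v_1, \ldots, v_k)$ as a sum over set partitions $\pi$ of $\{1, \ldots, k\}$, then collapse that sum to a single surviving term. For each partition $\pi$ with blocks $\pi_1, \ldots, \pi_n$ (where $n = |\pi|$), the summand has the form $g^{(n)}(f(x); \mathsf{linear}(f)^{(|\pi_1|)}(x; v_{\pi_1}), \ldots, \mathsf{linear}(f)^{(|\pi_n|)}(x; v_{\pi_n}))$. Unfolding the definition of $\mathsf{linear}$ gives $\mathsf{linear}(f)^{(1)}(x; v) = f(v)$ and $\mathsf{linear}(f)^{(j)}(x; \vec{v}) = 0$ for all $j \geq 2$, so every block of size at least $2$ places the zero vector into the corresponding slot of $g^{(n)}$.

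The next step is to argue that only the all-singletons partition $\pi = \{\{1\}, \ldots, \{k\}\}$ contributes nontrivially. For this $\pi$, $n = k$, each argument becomes $f(v_i)$, and the summand is exactly the claimed right-hand side $g^{(k)}(f(x); f(v_1), \ldots, f(v_k))$. For any other partition, some block has size $\geq 2$, so at least one argument to $g^{(n)}$ is the zero vector; the rest of the argument reduces to showing that these terms vanish. This is the analogue, at the inner-argument position, of the outer-position vanishing used in the proof sketch of \Cref{linear-postcompose}, but it is more delicate, because here the zero enters as \emph{data} passed to $g^{(n)}$ rather than as a wholesale zero output of the outer function.

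The main obstacle is a lemma asserting that $g^{(n)}(y; u_1, \ldots, u_n) = 0$ whenever $y$ and each $u_j$ is maximal and at least one $u_i$ is zero. One direction is straightforward: since the Clarke object $\partial_\bot \text{Mat}_{n-1}(g^{(n-1)})(y)$, acting on the tensor $u_1 \otimes \ldots \otimes u_n$, is multilinear in each slot, the result is $\{0\}$ whenever any $u_i$ is zero, and successive consistency of $g$ then forces $g^{(n)}(y; u_1, \ldots, u_n)$ to contain $0$. The subtler direction is that this value contains \emph{nothing but} $0$, which is where the maximality hypotheses become essential: for non-maximal inputs the interval at level $n{-}1$ has positive width and so its Clarke derivative is genuinely set-valued, leaving room for slack that would prevent collapse to the singleton. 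I expect this direction to go by induction on $n$, with the inductive step leveraging that at maximal $y$ the tower's $(n-1)$st level is as informative as it can be and therefore leaves no slack in the zero slot. Once the lemma is in hand, the non-singleton Faà di Bruno terms drop out and only the singleton-partition term, which equals $g^{(k)}(f(x); f(v_1), \ldots, f(v_k))$, remains, establishing the proposition.
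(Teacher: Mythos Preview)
Your overall strategy---expand via Fa\`a di Bruno, identify the all-singletons partition as the right-hand side, and argue the remaining terms contribute nothing because $\mathsf{linear}(f)^{(j)} = 0$ for $j \geq 2$---matches the paper's proof sketch. The gap is in how you establish the vanishing: the lemma you propose is too strong and is false as stated.

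Concretely, your lemma claims $g^{(n)}(y; u_1, \ldots, u_n) = 0$ whenever all inputs are maximal and some $u_i = 0$. But take $g$ to be the derivative tower for $\max : \R^2 \sto \R$ and $y = (0,0)$, a kink. By the paper's own definition, $g^{(2)}((0,0); \vec{v}) = \bot$ for \emph{every} choice of $\vec{v}$, including the all-zero one; the inputs are maximal yet the output is $\bot$, not $0$. Your ``one direction'' is fine---consistency does force $0 \in g^{(n)}(y; \ldots, 0, \ldots)$, since even the bottom Clarke object applied to the zero tensor yields $\{0\}$---but the ``subtler direction'' cannot hold in general, and no induction on $n$ will rescue it at nondifferentiable base points. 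The paper handles this with a case split you are missing: if the singleton-partition term $g^{(k)}(f(x); f(v_1), \ldots, f(v_k))$ is already $\bot$, then the entire Fa\`a di Bruno sum is $\bot$ (because $\bot + a = \bot$ in $\IR$) and the identity holds trivially. Only under the additional hypothesis that this $k$th-order term is \emph{not} $\bot$ does the paper claim the lower-order terms vanish, and it is precisely that hypothesis which rules out kink-type counterexamples and lets the ``detailed technical argument'' go through.
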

\begin{proof}[Proof sketch]
Note that the term in the sum given by the Faà di Bruno formula where $|\pi| = k$ gives the right-hand side $g^{(k)}(f(v_1), \ldots, f(v_k))$.
For all other terms in the sum, where $|\pi| < k$, we have that one of the inputs to $g^{(|\pi|)}$ will be 0, because we have $\mathsf{linear}(f)^{(j)}(v_1, \ldots, v_j) = 0$ whenever $j > 1$ by definition of $\mathsf{linear}$.

We need to know that adding all these terms to the term $|\pi| = k$ makes no difference to the sum,
which can happen either if all of the terms are 0, or if already $g^{(k)}(f(v_1), \ldots, f(v_k)) = \bot$,
in which case the addition of any elements will not change the result.
Thus, it suffices to prove that if $g^{(k)}(f(v_1), \ldots, f(v_k)) \ne \bot$, then all of those other terms in the sum are 0.
A detailed technical argument can show that this is the case.
\end{proof}

The chain rule for Clarke derivatives is a specialization relation rather than an equality:
\begin{proposition}[Chain rule for $\partial_\bot$]
\label{chain-rule-partial-clarke}
Given $f : \R^n \to {\R^m}_\bot$,
and $g : \R^m \to {\R^k}_\bot$
for all $x \in \R^n$ and all $dx \in \R^n$,
\[
\mathrm{hull}\left(\left\{ G \cdot F \cdot dx \mid G \in \left(\partial_\bot g\right)(f(x)), F \in \partial_\bot f(x) \right\}\right)
\sqsubseteq
\partial_\bot(g \circ f)(x) \cdot dx.
\]
\end{proposition}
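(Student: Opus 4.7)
My plan is to unfold $\sqsubseteq$ on $\Cvx(\R^k)$ as reverse set inclusion, reducing the goal to showing
\[
\partial_\bot(g \circ f)(x) \cdot dx \;\subseteq\; \mathrm{hull}\left\{ G \cdot F \cdot dx \mid G \in \partial_\bot g(f(x)),\ F \in \partial_\bot f(x) \right\},
\]
and then to split on whether either $\partial_\bot$ value is $\bot$. The main, non-degenerate case reduces to the classical Clarke chain rule, while the $\bot$ cases are dispatched by observing that $\bot$ in $\Cvx(\R^{a \times b})$ is interpreted as the entire ambient space.

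For the non-degenerate case, I would assume both $\partial_\bot f(x)$ and $\partial_\bot g(f(x))$ are non-$\bot$, so $f$ is locally Lipschitz on a neighborhood of $x$ and $g$ is locally Lipschitz on a neighborhood of $f(x)$. Since a composition of locally Lipschitz maps is locally Lipschitz, $g \circ f$ is locally Lipschitz near $x$, and therefore $\partial_\bot$ agrees with the ordinary Clarke Jacobian $\partial$ at each of the three relevant points. The classical Clarke chain rule \citep[Thm.~2.6.6]{clarke1990} then yields
\[
\partial(g \circ f)(x) \;\subseteq\; \mathrm{hull}\left\{ G \cdot F \mid G \in \partial g(f(x)),\ F \in \partial f(x) \right\}.
\]
Right-multiplication by the fixed vector $dx$ is linear, so it preserves inclusions and commutes with convex hull, delivering the claim after applying $(-)\cdot dx$ to both sides.

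For the $\bot$ cases, suppose $\partial_\bot f(x) = \bot = \R^{m \times n}$ (the case $\partial_\bot g(f(x)) = \bot$ is symmetric). If $dx \ne 0$, then choosing $F$ so that $F \cdot dx$ realizes any prescribed vector of $\R^m$ shows that $\{ G \cdot F \cdot dx \mid F \in \R^{m \times n}\}$ contains the entire range of each $G \in \partial_\bot g(f(x))$; provided any such $G$ is nonzero, the convex hull over all $G, F$ fills $\R^k$, so the right-hand side is $\bot$ and the inclusion is trivial. The remaining sub-cases ($dx = 0$, or $\partial_\bot g(f(x)) = \{0\}$ which forces $g\circ f$ to be constant on the domain of $f$) collapse both sides directly to either $\{0\}$ or $\R^k$.

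The main obstacle I anticipate is the bookkeeping for the $\bot$ cases, since the $\sqsubseteq$-inequality must still hold when one side is the full space and the other is unexpectedly small; in particular, one must argue carefully that when $\partial_\bot f(x)$ is $\bot$ the convex hull of column spaces of all $G \in \partial_\bot g(f(x))$ really covers $\R^k$ (or that the composition is also non-Lipschitz). The non-degenerate case itself is essentially a citation of Clarke's chain rule combined with the elementary observation that post-multiplication by $dx$ commutes with $\mathrm{hull}$.
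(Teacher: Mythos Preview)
Your non-degenerate case is exactly the paper's proof: the paper's entire argument is the one-line citation of Clarke's chain rule (the corollary on p.~75 of Clarke 1990), and your reduction to that classical result---together with the observation that right-multiplication by a fixed $dx$ is linear and hence commutes with $\mathrm{hull}$---is correct and complete for that case.

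Your $\bot$-case analysis, however, has a genuine gap that is not mere bookkeeping. The claim that $\partial_\bot g(f(x)) = \{0\}$ ``forces $g\circ f$ to be constant on the domain of $f$'' is false: the Clarke Jacobian at a single point says nothing about $g$ away from that point. Concretely, take $n = m = k = 1$, $f(x) = x^{1/3}$ (defined everywhere but not locally Lipschitz at $0$, so $\partial_\bot f(0) = \bot = \R$) and $g(y) = y^3$ (smooth, with $\partial_\bot g(0) = \{0\}$). Then $g \circ f$ is the identity, so $\partial_\bot(g\circ f)(0) = \{1\}$. The left-hand side of the proposition is $\mathrm{hull}\{0 \cdot F \cdot dx \mid F \in \R\} = \{0\}$ while the right-hand side is $\{dx\}$, and for $dx \neq 0$ the required inclusion $\{dx\} \subseteq \{0\}$ fails; your fallback ``or that the composition is also non-Lipschitz'' does not apply here. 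The same obstruction arises whenever every $G \in \partial_\bot g(f(x))$ is rank-deficient. Thus the $\bot$ branches cannot be dispatched by the argument you sketch; one needs either an additional hypothesis (both generalized Jacobians non-$\bot$, which is precisely the setting Clarke's result covers) or an explicit convention that the left-hand side is declared $\bot$ whenever either factor is. The paper's one-line sketch does not address this either.
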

\begin{proof}[Proof sketch]
A minor variation of \cite[Corollary on page 75]{clarke1990}.
\end{proof}
For example, at the value level, \ESHInline{\ESHBol{}max\ESHSpace{}x\ESHSpace{}0\ESHSpace{}+\ESHSpace{}min\ESHSpace{}0\ESHSpace{}x\ESHSpace{}=\ESHSpace{}x}, but the derivative of the left-hand side at 0 is $[0, 2]$ while the derivative at the right-hand side is $1$, noting $[0,2] \sqsubseteq 1$.
This has important ramifications for \dlang{}, where we construct functions as compositions of others and need composition to be computable.
Because of the specialization relation, we know that any behavior of a function in \dlang{} (e.g., $[0, 2]$) will be \emph{compatible} with the ideal derivative of its value-level function (e.g., $1$), but it may not return the maximal such value.

\begin{proposition}
\label{smooth:ad-category}
These operations (identity, composition, pairing, projections) give \cat{AD} the structure of a Cartesian monoidal category.
Therefore, \cat{AD} admits the internal language described in \Cref{smooth:fo-syntax}.
\end{proposition}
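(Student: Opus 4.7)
The plan is to verify three groups of properties: (i) the stated operations are well-defined as smoothish maps, i.e., composition and pairing preserve successive consistency; (ii) the unitality and associativity of composition hold; (iii) the terminal and product objects satisfy their universal properties. Given these, admissibility of the internal language of \Cref{smooth:fo-syntax} is the standard consequence of \cat{AD} being Cartesian monoidal.

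For well-definedness, I would first observe that identity, projections, and the terminal map are obtained via $\mathsf{linear}(\cdot)$, and the resulting towers trivially satisfy \Cref{derivative-tower-consistent}: components above the first either vanish or equal the underlying linear map, and $\partial_\bot$ of a linear map is the constant matrix representing it. Pairing inherits consistency componentwise, using that the partial Clarke Jacobian of a pair is the pair of partial Clarke Jacobians. The substantive case is composition: if $f : A \sto B$ and $g : B \sto C$ are successively consistent, I would show that the tower $g \circ f$ given by Faà di Bruno satisfies $\mathrm{Cons}_{k+1}((g \circ f)^{(k+1)}, \partial_\bot \mathrm{Mat}_k((g \circ f)^{(k)}))$ for every $k$. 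This is precisely the step enabled by \Cref{chain-rule-partial-clarke}: applied at each derivative level, the chain rule for $\partial_\bot$ shows that the Faà di Bruno expansion at level $k+1$ is $\sqsubseteq$-compatible with the partial Clarke Jacobian of the matrix representation at level $k$.

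The identity and product axioms then reduce to \Cref{linear-postcompose,linear-precompose}. Since identity is $\mathsf{linear}(\mathrm{id})$, both composites $\mathsf{linear}(\mathrm{id}) \circ f$ and $f \circ \mathsf{linear}(\mathrm{id})$ collapse to $f$ by those propositions applied level by level (noting that $\mathrm{id}$ preserves maximal elements). Similarly, $\mathsf{fst}$ and $\mathsf{snd}$ are linear, so $\mathsf{fst} \circ \langle f, g \rangle = f$ and $\mathsf{snd} \circ \langle f, g \rangle = g$ at every derivative order. Uniqueness of the mediating map follows from the fact that a derivative tower into $A \stimes B$ is determined, level by level, by its two projections.

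The main obstacle will be associativity, $(h \circ g) \circ f = h \circ (g \circ f)$. Each side expands, via Faà di Bruno, into a double sum indexed by nested partitions of $\{1, \ldots, k\}$: in $(h \circ g) \circ f$ the outer partition distributes indices among copies of $g \circ f$ and each block is then partitioned again for the inner composition, while $h \circ (g \circ f)$ yields the symmetric structure. I would finish by exhibiting a natural bijection between the two indexing sets---both parametrize two-level refinements of $\{1, \ldots, k\}$---and checking that matching terms agree by multilinearity of each derivative. This is the lifting of the classical associativity of formal power-series composition to the interval-valued setting, handled by the hull-based extension of derivatives introduced prior to \Cref{linear-postcompose}. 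I expect this combinatorial identity to be the most technically involved step, though purely bookkeeping: the analytic content at each level has already been delegated to the chain rule.
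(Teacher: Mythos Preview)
Your proposal is correct and takes essentially the same approach as the paper: split into consistency preservation and algebraic laws, handle identity and projections via \Cref{linear-consistent}, reduce the unit and projection laws to \Cref{linear-precompose,linear-postcompose}, and argue associativity by the bijection of nested partitions. The paper is more explicit in two places you gloss over: for pairing, the relation between the product of Clarke Jacobians and the Clarke Jacobian of the pair is only $\sqsubseteq$, not equality (\Cref{pairing-consistency-one}); and for composition consistency, the argument is an induction on derivative order with \Cref{chain-rule-partial-clarke} as the base case and a separate tensor-product rule (\Cref{tensor-product-rule-partial-clarke}) carrying the inductive step, rather than a direct application of the chain rule ``at each level.''
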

\begin{proof}[Proof sketch]
There are two main classes of properties we must confirm about these categorical operations.
First, we must verify that all of the operations preserve successive consistency, taking consistent derivative towers to consistent derivative towers.
Second, we must confirm that the algebraic laws of a Cartesian monoidal category.
\begin{enumerate}[leftmargin=1em]
\item[1.] Operations preserve consistency. Because several of the categorical operations are of the form $\mathsf{linear}(f)$ we first prove a lemma that these maps are consistent:
\begin{lemma}
\label{linear-consistent}
Call a map $f : \R^n \to \IR^k$ \emph{linear} if it always outputs values in $\R^k$ and if it is linear in the traditional sense, i.e., $f(u) + f(v) = f(u + v)$ for all $u, v \in \R^n$ and $c \cdot f(v) = f(c \cdot v)$ for all $c \in \R$ and all $v \in \R^n$.
Whenever $f : \R^n \to \IR^k$ is linear, $\mathsf{linear}(f)$ is consistent.
\end{lemma}
\begin{proof}
Since $f$ is linear in the above-defined sense, it is smooth, and so its derivatives will always be maximal, and will coincide with the traditional derivatives, which is exactly what $\mathsf{linear}(f)$ computes.
\end{proof}
\begin{itemize}[leftmargin=0.5em]
\item \emph{Identity maps are consistent.}
Follows from \Cref{linear-consistent}.
\item \emph{Product projections are consistent.}
Also follows from \Cref{linear-consistent}.
\item \emph{Pairing preserves consistency.}
Essentially reduces to the following lemma:
\begin{proposition}
\label{pairing-consistency-one}
Given two maps $f : \R^n \to {\R^m}_\bot$ and $g : \R^n \to {\R^k}_\bot$,
for any $x \in \R^n$,
\[
\{ \begin{bmatrix} u & v \end{bmatrix} \mid u \in \partial_\bot f (x), v \in \partial_\bot g (x) \}
\sqsubseteq
\partial_\bot(\lambda z. (f(z), g(z)))(x),
\]
where the pairing operation $(\cdot, \cdot) : {\R^m}_\bot \times {\R^k}_\bot \to {\R^{m + k}}_\bot$ returns $\bot$ if either of its arguments it $\bot$, or the pair of values if both inputs are not $\bot$.
\end{proposition}
\begin{proof}[Proof sketch]
Note that the set defined by the set comprehension on the left-hand side of the relation is convex, since both $\partial_\bot f (x)$ and $\partial_\bot g (x)$ are.
Suppose $\begin{bmatrix} H & L \end{bmatrix}$ is in the Bouligand subdifferential of $\lambda z. (f(z), g(z))$ at $x$.
Then it must be the case that $H$ is in the Bouligand subdifferential for $f$ and that $L$ is in the Bouligand subdifferential for $g$.
\end{proof}
\item \emph{Composition preserves consistency.}
The full proof is quite detailed and technical.
At its core, the proof proceeds much like the proof of the conventional Faà di Bruno formula, which can proceed by induction on the order of differentiation. However, whereas conventionally there is an equality between the Faá di Bruno formula and the derivative, in our case, their is an order relation that the Faá di Bruno formula is at most the derivative.
The base case is the general chain rule of calculus, which in our case corresponds to the chain rule for Clarke derivatives, \Cref{chain-rule-partial-clarke}.
The key step in the inductive case is the tensor product rule:
\begin{proposition}[Tensor product rule for $\partial_\bot$]
\label{tensor-product-rule-partial-clarke}
Given $g : D \to {\R^{m \times n_j \times \ldots \times n_1}}_\bot$ and for all $i \in \{1, \ldots, j \}$,
$f_i : D \to {\R^{n_i}}_\bot$,
for all $x \in D$,
\begin{gather*}
\partial_\bot g(x) \cdot \left(\bigotimes_{k=1}^j f_k(x)\right)
+ g(x) \cdot \sum_{i = 1}^j \left(\bigotimes_{k=1}^{i-1} f_k(x)\right) \otimes \partial_\bot f_i(x) \otimes \left(\bigotimes_{k=i+1}^{j} f_k(x)\right)
\\ \sqsubseteq
\\ \partial_\bot(g \cdot (f_1 \otimes \ldots \otimes f_j))(x).
\end{gather*}
\end{proposition}
\begin{proof}
\TBD{Now need to show that the tensor products are the same as the nested products}
By repeated application of the product rule for Clarke derivatives.
\end{proof}
\end{itemize}

\item[2.] Algebraic laws hold.
\begin{itemize}[leftmargin=0.5em]
\item \emph{Composition is associative.}
Follows from associativity and commutativity of $+$ and the associativity of taking partitions of partitions (in the appropriate sense).
\item \emph{$f \circ \mathrm{id} = f = \mathrm{id} \circ f$}.
Follows from \Cref{linear-precompose} and \Cref{linear-postcompose}, since $\mathrm{id}$ is linear.
\item \emph{$\beta$ and $\eta$ laws for product projections}. Follows from the fact that $\mathsf{linear}$ commutes with pairing, i.e.,
 $
\mathsf{linear}(\langle f, g \rangle)
=
\langle \mathsf{linear}(f), \mathsf{linear}(g) \rangle,
$
and from \Cref{linear-precompose} and \Cref{linear-postcompose}. \popQED
\end{itemize}
\end{enumerate}
\end{proof}

\subsection{Consistency}

The derivatives that our semantics defines are \emph{consistent}: the behaviors of $k$th derivative that is computed, $\begin{qvect}e\end{qvect}^{(k)}$, are compatible with the derivatives that would be abstractly defined by looking at its value-level behavior, $\partial_\bot^k \text{Mat}_0(\begin{qvect}e\end{qvect}^{(0)})$.
This proposition follows by first demonstrating that derivative towers are \emph{successively consistent}.

\begin{proposition}
\label{terms-consistent}
Given any term $\Gamma \vdash e : \tau$, the derivative tower $\begin{qvect}e\end{qvect}$ is successively consistent.
\end{proposition}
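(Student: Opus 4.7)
The plan is to proceed by structural induction on the typing derivation of $\Gamma \vdash e : \tau$. The semantic interpretation $\begin{qvect}e\end{qvect}$ is built uniformly from categorical operations in \cat{AD} (identities, composition, pairing, projections) together with applications of primitive smoothish maps and the postfix differentiation operator $'$. By \Cref{smooth:ad-category}, the Cartesian-monoidal structure of \cat{AD} already guarantees that identities, projections, pairings, and compositions all preserve successive consistency, so the bulk of the induction reduces to invoking these closure properties.

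Concretely, the variable case $\denote{x}$ unfolds to a composition of the linear projections built via $\mathsf{linear}$, which are consistent by \Cref{linear-consistent} and closed under composition. The primitive-application case $\begin{uq}f\end{uq}(e) \mapsto f \circ \denote{e}$ combines the inductive hypothesis on $\denote{e}$ with the assumption that $f$ is a smoothish (hence successively consistent) map, using closure of consistency under composition. The unit case uses $\begin{contc}!\end{contc} = \mathsf{linear}(!)$ and is handled by \Cref{linear-consistent}. The pairing case uses closure of consistency under $\langle -,- \rangle$ (which reduces essentially to \Cref{pairing-consistency-one}). The let-binding case, whose semantics is $\denote{e_2} \circ \langle \mathrm{id}, \denote{e_1} \rangle$, is a combination of the pairing and composition cases.

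The main obstacle is the differentiation operator $\frac{\partial e_y}{\partial x}\mid_{x=e_x} \cdot e_{dx}$, whose semantics is built from the postfix $'$ operator on smoothish maps. The required lemma is: if $f : A \sto B$ is successively consistent, then so is $f' : A \stimes A \sto B$. To prove this, I would unfold the explicit formula
\[
(f')^{(k)}((x,v);(dx_1,dv_1),\ldots,(dx_k,dv_k)) = f^{(k+1)}(x;v,dx_1,\ldots,dx_k) + \sum_{j=1}^k f^{(k)}(x;dx_1,\ldots,dv_j,\ldots,dx_k),
\]
observe that the matrix representation $\text{Mat}_k((f')^{(k)})$ decomposes into blocks corresponding to the partial derivatives with respect to the $x$-coordinates and the $v$-coordinates, and then relate $\partial_\bot \text{Mat}_k((f')^{(k)})$ at a point $(x,v)$ to $\partial_\bot \text{Mat}_{k+1}(f^{(k+1)})(x)$ and $\partial_\bot \text{Mat}_k(f^{(k)})(x)$ using the chain rule (\Cref{chain-rule-partial-clarke}) and the tensor-product rule (\Cref{tensor-product-rule-partial-clarke}). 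The hypothesis $\text{Cons}_{k+2}(f^{(k+2)}, \partial_\bot \text{Mat}_{k+1}(f^{(k+1)}))$ and $\text{Cons}_{k+1}(f^{(k+1)}, \partial_\bot \text{Mat}_k(f^{(k)}))$ would then propagate through these identities to give $\text{Cons}_{k+1}((f')^{(k+1)}, \partial_\bot \text{Mat}_k((f')^{(k)}))$.

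The hard part is bookkeeping: because the Clarke-subdifferential operations produce only $\sqsubseteq$-lower bounds under composition and sums, I must be careful to match each summand in $(f')^{(k+1)}$ with exactly the right contribution in $\partial_\bot \text{Mat}_k((f')^{(k)})$ (one coming from perturbing an $x$-coordinate, one from perturbing a $v$-coordinate), and to verify that the $\bot$ cases align. Once this lemma is in hand, the differentiation case of the induction is immediate by applying it to the inductively consistent tower $\denote{e_y}$ and then composing with the pairing $\langle\langle \mathrm{id}, \denote{e_x}\rangle,\langle 0,\denote{e_{dx}}\rangle\rangle$, which preserves consistency by the already-established categorical closure.
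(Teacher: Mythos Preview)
Your proposal is correct and follows essentially the same approach as the paper's proof sketch: induction on the typing derivation, reducing each case to the fact that primitives are successively consistent and that the categorical operations (identities, projections, pairing, composition) preserve successive consistency, all of which are established in \Cref{smooth:ad-category}. You actually go further than the paper's sketch by explicitly treating the differentiation case $\frac{\partial e_y}{\partial x}\mid_{x=e_x}\cdot e_{dx}$ and isolating the needed lemma that $f \mapsto f'$ preserves successive consistency; the paper's sketch is silent on this case even though it is a typing rule, so your more careful bookkeeping here is a genuine (and welcome) addition rather than a deviation.
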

\begin{proof}[Proof sketch]
By induction on the typing derivation of $e$.
We then see that, to know the proposition is true, we must know that the derivative towers for all primitives are consistent (including product projections)
and that pairing and composition preserve successive consistency (proof sketch in \Cref{smooth:ad-category}).
\end{proof}

\begin{proposition}[Consistency of differentiation in the first-order language]
\label{fo-soundness}
Given any term $\Gamma \vdash e : \tau$, for all $k \in \nat$, $\mathrm{Cons}_{k+1}\left(\begin{qvect}e\end{qvect}^{(k+1)} , \partial_\bot^{k+1} \mathrm{Mat}_0(\begin{qvect}e\end{qvect}^{(0)}) \right)$.
\end{proposition}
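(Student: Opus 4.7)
The plan is to induct on $k$, chaining Proposition~\ref{terms-consistent} with a two-step unfolding of the iterated partial Clarke Jacobian. The base case $k=0$ is immediate: Proposition~\ref{terms-consistent} specialized at $k=0$ gives exactly $\mathrm{Cons}_1(\begin{qvect}e\end{qvect}^{(1)}, \partial_\bot \mathrm{Mat}_0(\begin{qvect}e\end{qvect}^{(0)}))$, which is the desired statement.

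For the inductive step, abbreviate $N \triangleq \partial_\bot^{k+1}\mathrm{Mat}_0(\begin{qvect}e\end{qvect}^{(0)})$ and let $\kappa : \Cvx(\R^{m \times n^{k+1}}) \to (\R^{m \times n^{k+1}})_\bot$ denote the maximal-to-value conversion described after the definition of $\partial_\bot$, so that $\partial_\bot^{k+2}\mathrm{Mat}_0(\begin{qvect}e\end{qvect}^{(0)}) = \partial_\bot(\kappa \circ N)$. By Proposition~\ref{terms-consistent} applied at level $k+1$, we already have $\mathrm{Cons}_{k+2}(\begin{qvect}e\end{qvect}^{(k+2)}, \partial_\bot \mathrm{Mat}_{k+1}(\begin{qvect}e\end{qvect}^{(k+1)}))$. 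It therefore suffices to prove the pointwise specialization $\partial_\bot \mathrm{Mat}_{k+1}(\begin{qvect}e\end{qvect}^{(k+1)})(x) \sqsubseteq \partial_\bot (\kappa \circ N)(x)$ for every $x \in \R^n$; composing this with the successive-consistency inequality yields the desired $\mathrm{Cons}_{k+2}$.

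The key observation is that the inductive hypothesis forces $\mathrm{Mat}_{k+1}(\begin{qvect}e\end{qvect}^{(k+1)})$ to agree with $\kappa \circ N$ wherever the latter is defined. Fix $x$ with $\partial_\bot(\kappa \circ N)(x) \neq \bot$ (otherwise the claim is vacuous, since $\bot$ is the top of $\sqsubseteq$). Then $\kappa \circ N$ is defined and locally Lipschitz on some open neighborhood $U \ni x$, so $N(y)$ is maximal at each $y \in U$, realized by a unique matrix $M_y \in \R^{m \times n^{k+1}}$. The inductive hypothesis applied at $y \in U$ with arbitrary directions $dx_1, \ldots, dx_{k+1}$ yields
\[
\iota_{\IR^m \hookrightarrow \Cvx(\R^m)}\bigl(\begin{qvect}e\end{qvect}^{(k+1)}(y; dx_1, \ldots, dx_{k+1})\bigr) \sqsubseteq \{ M_y \cdot (dx_1 \otimes \ldots \otimes dx_{k+1}) \}.
\]
Since the right-hand side is maximal in $\Cvx(\R^m)$, both sides must be equal; hence $\begin{qvect}e\end{qvect}^{(k+1)}(y; \cdot)$ coincides with the multilinear map encoded by $M_y$. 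Therefore $\mathrm{Mat}_{k+1}(\begin{qvect}e\end{qvect}^{(k+1)})(y) = M_y = (\kappa \circ N)(y)$ throughout $U$, and their partial Clarke Jacobians at $x$ agree as required.

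The main obstacle is the two-layer collapse involving $\kappa$ followed by $\partial_\bot$: one needs the inductive hypothesis to yield agreement of $\mathrm{Mat}_{k+1}(\begin{qvect}e\end{qvect}^{(k+1)})$ with $\kappa \circ N$ on an \emph{open neighborhood} of $x$, not merely at $x$, because the Clarke Jacobian depends on the germ of the function. The argument exploits the fact that maximality of $N$ on the full neighborhood $U$ propagates pointwise to the derivative tower via the $\mathrm{Cons}$ relation, which is exactly the bridge that the successive consistency of Proposition~\ref{terms-consistent} was engineered to provide.
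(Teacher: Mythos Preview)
Your overall strategy---invoke Proposition~\ref{terms-consistent} for successive consistency and induct on $k$---is exactly the paper's. However, the inductive step contains a genuine directional error. The sentence ``Since the right-hand side is maximal in $\Cvx(\R^m)$, both sides must be equal'' is backwards: $\sqsubseteq$ is \emph{reverse} inclusion, so the displayed relation only says $\iota(\begin{qvect}e\end{qvect}^{(k+1)}(y;\ldots)) \supseteq \{M_y \cdot (\ldots)\}$, and a singleton on the right does not force the left to be a singleton. Concretely, take $e(x)=\max(x,0)+\min(0,x)$: then $\begin{qvect}e\end{qvect}^{(0)}(x)=x$, so $N(0)=\{1\}$ is maximal, yet $\begin{qvect}e\end{qvect}^{(1)}(0;dx)=[0,2dx]$ for $dx>0$, whence $\mathrm{Mat}_1(\begin{qvect}e\end{qvect}^{(1)})(0)=\bot\neq\kappa(N(0))$. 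The same order confusion resurfaces in ``$\bot$ is the top of $\sqsubseteq$'': in fact $\bot$ is the bottom, so $F_1(x)\sqsubseteq\bot$ is not vacuous but forces $F_1(x)=\bot$.

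The repair is to run the bridging argument in the other direction. From the inductive hypothesis, argue that wherever $\mathrm{Mat}_{k+1}(\begin{qvect}e\end{qvect}^{(k+1)})(y)$ is \emph{defined}, say equal to a matrix $M$, the containment $\{M\cdot v\}=\iota(\begin{qvect}e\end{qvect}^{(k+1)}(y;v))\supseteq N(y)\cdot v$ for all $v$ forces $N(y)=\{M\}$. Hence $\mathrm{Mat}_{k+1}(\begin{qvect}e\end{qvect}^{(k+1)})$ agrees with $\kappa\circ N$ on its own domain of definition $U_1$, and $U_1$ is contained in the domain $U_2$ of $\kappa\circ N$. Locality of the Clarke Jacobian then gives $F_1(x)=F_2(x)$ for $x\in U_1$, while $F_1(x)=\bot\sqsubseteq F_2(x)$ for $x\notin U_1$; this is precisely the pointwise specialization you need, and the rest of your transitivity step goes through.
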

\begin{proof}[Proof sketch]
By \Cref{terms-consistent}, $\begin{qvect}e\end{qvect}$ is successively consistent.
Then the proof proceeds by a simple induction on $k$.
\end{proof}

\section{Higher-Order Semantics (\cat{HAD})}
\label{ho-semantics}

\begin{figure}
\begin{subfigure}[t]{0.4\textwidth}
\center{\emph{Syntax}}
\begin{align*}
\text{variables } x
\\ \text{constant types } K &\in \mathsf{Ob}(\cat{HAD})
\\ \text{types } \tau &\mathbin{::=} \ast \gor \tau_1 \times \tau_2 \gor \tau_1 \to \tau_2
\\ &\gor \begin{uq}K\end{uq}
\\ \text{contexts } \Gamma &\mathbin{::=} \cdot \gor \Gamma, x : \tau
\\ \text{constants } k &\in \mathsf{Arr}(\cat{HAD})
\\ \text{expressions } e &\mathbin{::=} x \gor \begin{uq}k\end{uq} \gor e\ e \gor \lambda x.\ e
\\ &\gor ! \gor (e, e)
\\ &\gor \text{let } x \triangleq e \text{ in } e
\end{align*}
\end{subfigure}
\rulesep
\begin{subfigure}[t]{0.5\textwidth}
\center{\emph{Typing rules}}
\begin{mathpar}
\inferrule*
{(x : \tau) \in \Gamma}
{\Gamma \vdash x : \tau}

\inferrule*
{\Gamma \vdash e_1 : \tau_1 \to \tau_2
\\ \Gamma \vdash e_2 : \tau_1}
{\Gamma \vdash e_1 \, e_2 : \tau_2}

\inferrule*
{\Gamma, x : \tau_1 \vdash e : \tau_2}
{\Gamma \vdash \lambda\ x : \tau_1.\ e : \tau_1 \to \tau_2}

\inferrule*{k \in \denote{\Gamma} \to_\cat{HAD} \denote{\tau}}
{\Gamma \vdash \begin{uq}k\end{uq} : \tau}

\inferrule*
{ }
{\Gamma \vdash \,! : \ast}

\inferrule*
{\Gamma \vdash e_1 : \tau_1
\\ \Gamma \vdash e_2 : \tau_2}
{\Gamma \vdash (e_1, e_2) : \tau_1 \times \tau_2}

\inferrule*
{\Gamma \vdash e_1 : \tau_1
\\ \Gamma, x : \tau_1 \vdash e_2 : \tau_2}
{\Gamma \vdash \text{let } x \triangleq e_1 \text{ in } e_2 : \tau_2}
\end{mathpar}
\end{subfigure}
\caption[Syntax and typing rules for \dlang{}.]{Syntax and typing rules for \dlang{}. The constants are those listed in \Cref{dlang-constants}.}
\label{smooth:ho-syntax}
\vspace{-.25cm}
\end{figure}

\begin{figure}
\begin{subfigure}[t]{0.6\textwidth}
\small
\center{\emph{Types}}
\begin{align*}
\denote{\ast}(\Gamma) &\triangleq 1_\cat{Set}
\\ \denote{\tau_1 \times \tau_2}(\Gamma) &\triangleq \denote{\tau_1}(\Gamma) \times \denote{\tau_2}(\Gamma)
\\ \denote{\begin{uq}K\end{uq}}(\Gamma) &\triangleq K(\Gamma)
\\ \denote{\tau_1 \to \tau_2}(\Gamma) &\triangleq \int_{\Delta \in \cat{AD}} (\Delta \sto \Gamma) \times  \denote{\tau_1}(\Delta) \to \denote{\tau_2}(\Delta)
\end{align*}
\end{subfigure}
\rulesep
\begin{subfigure}[t]{0.38\textwidth}
\small
\center{\emph{Terms}}
\begin{align*}
\denote{e_1 \, e_2}(\gamma)&\triangleq
 \denote{e_1}(\gamma)(\text{id}, \denote{e_2}(\gamma))
\\ \denote{\lambda\ x : \tau_1.\ e} &\triangleq \mathsf{abstract}(\denote{e})
\\ \denote{\begin{uq}k\end{uq}}(\gamma) &\triangleq k(\gamma)
\\ \denote{!} &\triangleq !
\\ \denote{(e_1, e_2)}(\gamma) &\triangleq (\denote{e_1}(\gamma), \denote{e_2}(\gamma))
\\ \denote{\text{let } x \triangleq e_1 \text{ in } e_2} &\triangleq \denote{(\lambda x.\ e_2)\ e_1}
\end{align*}
\end{subfigure}
\caption[Semantics of \dlang{}.]{The semantics of \dlang{}. }
\label{smooth:fig:ho-semantics}
\vspace{-0.8em}
\end{figure}

The category \cat{AD} does not admit exponentiation (function spaces), since its objects are limited to $\R^n$.
However, higher-order functions yield novel expressive power that is critical for \Cref{smooth:applications}.
To admit higher-order functions, \dlang{} uses a category \cat{HAD} of \emph{presheaves} over \cat{AD}, i.e., $\cat{HAD} = [\cat{AD}^\mathrm{op}, \cat{Set}]$.


\paragraph{Syntax and Semantics.} The basic syntax of \cat{HAD} is that of the simply-typed lambda calculus, shown in \Cref{smooth:ho-syntax}, where the constants are those listed in \Cref{dlang-constants}.
\Cref{smooth:fig:ho-semantics} presents the semantics of \dlang{} (generic to any Cartesian closed category of presheaves). However, the categorical semantics in \cat{HAD} means that \dlang{} is inherently extensible and not limited to just those constants in \Cref{dlang-constants}; any object or morphism in \cat{HAD} could be added to the language.

We now proceed to describe the semantics of the higher-order constants in \Cref{dlang-constants}.

\subsection{Ground Types and First-Order Primitives}
Any space $X \in \cat{AD}$ can be lifted into a presheaf \cat{HAD} by the \emph{Yoneda embedding}, $\y{X} \in \cat{HAD}$, which acts as $\y{X}(\Gamma) \triangleq \Gamma \sto X$.
Because the Yoneda embedding is full and faithful and preserves products,
ground types (and their products) in \cat{HAD} represent Cartesian spaces and first-order functions represent smoothish maps.
In particular, $\denote{\ESHInline{\ESHBol{}\textcolor[HTML]{204A87}{\ESHUnicodeSubstitution{\ESHMathSymbol{\Re}}}}} \triangleq \y{\IR}$.
Note that all first-order functions from \cat{AD} can be lifted into \cat{HAD} by the Yoneda embedding.

\subsection{Smoothish Higher-Order Primitive Functions}
\label{higher-order-primitives}

Each of the smoothish higher-order primitive functions has a type \ESHInline{\ESHBol{}(\textcolor[HTML]{204A87}{\ESHUnicodeSubstitution{\ESHMathSymbol{\Re}}}\ESHSpace{}\ESHUnicodeSubstitution{\ESHMathSymbol{\rightarrow}}\ESHSpace{}\textcolor[HTML]{204A87}{\ESHUnicodeSubstitution{\ESHMathSymbol{\Re}}})\ESHSpace{}\ESHUnicodeSubstitution{\ESHMathSymbol{\rightarrow}}\ESHSpace{}\textcolor[HTML]{204A87}{\ESHUnicodeSubstitution{\ESHMathSymbol{\Re}}}} in \dlang{}.
To construct primitives of this type, we can equivalently construct maps $\begin{contc}(\Gamma \stimes \IR \sto \IR) \to (\Gamma \sto \IR)\end{contc}$ for all $\cont{\Gamma \in \cat{AD}}$ in \clang{}.\footnote{
In any category of presheaves $[\mathcal{C}^\text{op}, \cat{Set}]$, letting $\y{\cdot}$ denote the Yoneda embedding and letting $\Rightarrow$ denote the internal hom,
then there is an equivalence between constants with the second-order type $(\y{A} \Rightarrow \y{B}) \Rightarrow \y{C}$ and the end\thesisonly{(see \Cref{preliminaries:presheaves})} $\int_\Gamma (\Gamma \times A \to_\mathcal{C} B) \to (\Gamma \to_\mathcal{C} C)$:
\[
\small
1 \to_{[\mathcal{C}^\text{op}, \cat{Set}]} (\y{A} \Rightarrow \y{B}) \Rightarrow \y{C}
\cong (\y{A} \Rightarrow \y{B}) \to_{[\mathcal{C}^\text{op}, \cat{Set}]}  \y{C}
= \int_\Gamma (\y{A} \Rightarrow \y{B})(\Gamma) \to  \y{C}(\Gamma)
\cong \int_\Gamma (\Gamma \times A \to_\mathcal{C} B) \to (\Gamma \to_\mathcal{C} C)
\]
}
Such a map takes as input $\cont{\IR}$-valued expression in a context $\cont{\Gamma \stimes \IR}$ and produce an $\cont{\IR}$-valued expression in the context $\cont{\Gamma}$ (for any $\cont{\Gamma}$).

Accordingly, we defined these second-order primitives with parametrically polymorphic mappings of derivative towers.
We must confirm that these definitions preserve successive consistency, i.e., they must map successively consistent derivative towers to successively consistent derivative towers.
In general, this boils down to confirming that taking the derivative of the value-level definitions of each of these primitives (when applied to any possible function $f : \Gamma \stimes \IR \sto \IR$) yields the definitions for the derivatives of these primitives.
It is possible to confirm for each definition that this is the case.

\subsubsection{Smooth integral}
The integral \ESHInline{\ESHBol{}integral01} is defined as follows for any $\begin{contc}f : \Gamma \stimes \R \sto \R\end{contc}$:
\[
\begin{qsmooth}\ESHInline{\ESHBol{}integral01}\end{qsmooth}(f)^{(k)}(\gamma; d\gamma_1, \ldots, d\gamma_k)
\triangleq
\int_0^1 f^{(k)}(\gamma, x; (d\gamma_1, 0), \ldots, (d\gamma_k, 0))\ dx
.
\]
Since integration is a linear operator, we essentially just integrate the first-order infinitesimal perturbations arising from $f$ at every order of derivative.
Integration is \emph{smooth} in the sense that if its input is smooth, its output will be smooth as well.
Note the similarity between the above AD tower and the result of postcomposing a linear function $\ell$ after a function $f$ arising from Faà di Bruno's formula described previously.
The reader may wonder how a semantics invoking integration might be computable; we discuss this in \Cref{smooth:implementation}.

\subsubsection{Smoothish Root Finding}
\label{root-finding}
The primitive \ESHInline{\ESHBol{}cutRoot\ESHSpace{}:\ESHSpace{}(\textcolor[HTML]{204A87}{\ESHUnicodeSubstitution{\ESHMathSymbol{\Re}}}\ESHSpace{}\ESHUnicodeSubstitution{\ESHMathSymbol{\rightarrow}}\ESHSpace{}\textcolor[HTML]{204A87}{\ESHUnicodeSubstitution{\ESHMathSymbol{\Re}}})\ESHSpace{}\ESHUnicodeSubstitution{\ESHMathSymbol{\rightarrow}}\ESHSpace{}\textcolor[HTML]{204A87}{\ESHUnicodeSubstitution{\ESHMathSymbol{\Re}}}} smoothly finds the root of any function with a single isolated root that is positive to its left and negative to its right.

Equivalently, \ESHInline{\ESHBol{}cutRoot} is a map $\cont{(\Gamma \stimes \R \sto \R) \to (\Gamma \sto \R)}$.
We will define \ESHInline{\ESHBol{}cutRoot} by using the stream characterization of smooth maps,
defining it with a function for its evaluation map and a smooth map for its derivative:
\begin{align*}
\begin{contc}
\begin{qsmooth}\ESHInline{\ESHBol{}cutRoot}\end{qsmooth} (f)^{(0)} \end{contc}
&
\triangleq
\begin{contc}
\lambda \gamma.\
[\sup \{ x : \R \mid f^{(0)}(\gamma, x) > 0 \}
,\inf \{x : \R \mid f^{(0)}(\gamma, x) < 0 \}]
\end{contc}
\\
\begin{contc}
\begin{qsmooth}\ESHInline{\ESHBol{}cutRoot}\end{qsmooth}(f)'
\end{contc} &
\triangleq
\begin{contc}
\begin{qvect}
\gamma, d\gamma \vdash
\text{let } y \triangleq \begin{uqcont}\begin{qsmooth}\ESHInline{\ESHBol{}cutRoot}\end{qsmooth}(f)\end{uqcont}(\gamma) \text{ in }
-\frac{\begin{uqcont}f'\end{uqcont}((\gamma, y),(d\gamma, 0))}{\begin{uqcont}f'\end{uqcont}((\gamma, y), (0, 1))}
\end{qvect}
\end{contc}
\end{align*}
The formula for the derivative is a simple application of the \emph{implicit function theorem}.\fTBD{Mike will want a citation?}
Note that we have a well-founded recursive reference following the same pattern as with multiplication.

\ESHInline{\ESHBol{}cutRoot} enables root-finding only for functions that have only one root.
In graphics, for ray tracing of implicit surfaces, it is useful to be able to find for a function \ESHInline{\ESHBol{}f\ESHSpace{}:\ESHSpace{}\textcolor[HTML]{204A87}{\ESHUnicodeSubstitution{\ESHMathSymbol{\Re}}}\ESHSpace{}\ESHUnicodeSubstitution{\ESHMathSymbol{\rightarrow}}\ESHSpace{}\textcolor[HTML]{204A87}{\ESHUnicodeSubstitution{\ESHMathSymbol{\Re}}}} the least root $x \in [0, 1]$ such that $f$ switches from positive for values just less than $x$ to negative for values just greater than $x$.
\ESHInline{\ESHBol{}firstRoot\ESHSpace{}:\ESHSpace{}(\textcolor[HTML]{204A87}{\ESHUnicodeSubstitution{\ESHMathSymbol{\Re}}}\ESHSpace{}\ESHUnicodeSubstitution{\ESHMathSymbol{\rightarrow}}\ESHSpace{}\textcolor[HTML]{204A87}{\ESHUnicodeSubstitution{\ESHMathSymbol{\Re}}})\ESHSpace{}\ESHUnicodeSubstitution{\ESHMathSymbol{\rightarrow}}\ESHSpace{}\textcolor[HTML]{204A87}{\ESHUnicodeSubstitution{\ESHMathSymbol{\Re}}}} accomplishes this:
\begin{align*}
\begin{contc}
\begin{qsmooth}\ESHInline{\ESHBol{}firstRoot}\end{qsmooth} (f)^{(0)} \end{contc}
&\triangleq
\begin{contc}
\lambda \gamma.\
\begin{matrix}
[\sup \{ x \in [0, 1] \mid \forall q \in [0, x].\ f^{(0)}(\gamma, q) > 0 \}
\\
,\inf \{x \in [0, 1] \mid \exists q \in [0, x].\ f^{(0)}(\gamma, x) < 0 \}]
\end{matrix}
\end{contc}
\\
\begin{contc}
\begin{qsmooth}\ESHInline{\ESHBol{}firstRoot}\end{qsmooth}(f)'
\end{contc}
&\triangleq
\begin{contc}
\begin{qvect}
\gamma, d\gamma \vdash
\begin{matrix}
\text{let } y \triangleq \begin{uqcont}\begin{qsmooth}\ESHInline{\ESHBol{}firstRoot}\end{qsmooth}(f)\end{uqcont}(\gamma) \text{ in }
\\
-\frac{\begin{uqcont}f'\end{uqcont}((\gamma, y),(d\gamma, 0))}{\begin{uqcont}f'\end{uqcont}((\gamma, y),(0, 1))}
\end{matrix}
\end{qvect}
\end{contc}
\end{align*}
Like with \ESHInline{\ESHBol{}cutRoot}, its derivatives are determined by the implicit function theorem; the only difference is in the definition of the value of the root.

\vspace{-.25em}
\subsubsection{Smoothish Optimization}
\label{smoothopt}

\dlang{} admits primitives \ESHInline{\ESHBol{}argmax01,\ESHSpace{}max01\ESHSpace{}:\ESHSpace{}(\textcolor[HTML]{204A87}{\ESHUnicodeSubstitution{\ESHMathSymbol{\Re}}}\ESHSpace{}\ESHUnicodeSubstitution{\ESHMathSymbol{\rightarrow}}\ESHSpace{}\textcolor[HTML]{204A87}{\ESHUnicodeSubstitution{\ESHMathSymbol{\Re}}})\ESHSpace{}\ESHUnicodeSubstitution{\ESHMathSymbol{\rightarrow}}\ESHSpace{}\textcolor[HTML]{204A87}{\ESHUnicodeSubstitution{\ESHMathSymbol{\Re}}}} that find the maximizing argument and the maximum, respectively, of a function \ESHInline{\ESHBol{}f\ESHSpace{}:\ESHSpace{}\textcolor[HTML]{204A87}{\ESHUnicodeSubstitution{\ESHMathSymbol{\Re}}}\ESHSpace{}\ESHUnicodeSubstitution{\ESHMathSymbol{\rightarrow}}\ESHSpace{}\textcolor[HTML]{204A87}{\ESHUnicodeSubstitution{\ESHMathSymbol{\Re}}}} over the unit interval.
Equivalently, each of \ESHInline{\ESHBol{}argmax01} and \ESHInline{\ESHBol{}max01} are maps $\begin{contc}(\Gamma \stimes \R \sto \R) \to (\Gamma \sto \R)\end{contc}$.

We first describe \ESHInline{\ESHBol{}argmax01}, which is defined as follows:
\begin{align*}
&\begin{contc}
\begin{qsmooth}\ESHInline{\ESHBol{}argmax01}\end{qsmooth} (f)^{(0)} \end{contc}
\triangleq
\begin{contc}
\lambda \gamma.\
\text{hull}\left( \{ x \in [0, 1] \mid f(\gamma, x) = \max_{z \in [0, 1]} f(\gamma, z) \} \right)
\end{contc}
\\&
\begin{contc}
\begin{qsmooth}\ESHInline{\ESHBol{}argmax01}\end{qsmooth}(f)'
\end{contc}
\triangleq
\begin{contc}
\begin{qvect}
\gamma, d\gamma \vdash
\begin{matrix}
\text{let } y \triangleq \begin{uqcont}\begin{qsmooth}\ESHInline{\ESHBol{}argmax01}\end{qsmooth}(f)\end{uqcont}(\gamma) \text{ in }
\\
\text{let } f'_y \triangleq \begin{uqcont}f'\end{uqcont}((\gamma, y), (0, 1)) \text{ in }
\\
\begin{cases}
-\frac{\begin{uqcont}f''\end{uqcont}(((\gamma, y),(0, 1)),((d\gamma, 0), (0, 0)))}{\begin{uqcont}f''\end{uqcont}(((\gamma, y),(0, 1)),((0, 1),(0,0)))}
& 0 < y < 1
\\
0 & y = 0 \wedge f'_y < 0
\\
0 & y = 1 \wedge f'_y > 0
\\
\bot & \text{otherwise}
\end{cases}
\end{matrix}
\end{qvect}
\end{contc}
\end{align*}
The input is a smooth map $f : \Gamma \times \R \sto \R$.
In general, for a $\gamma \in \Gamma$, there may be many values of $x$ achieving the same maximum $f(\gamma, x)$, so the value-level definition takes the convex hull of the set of those maximizing arguments.
The derivative of \ESHInline{\ESHBol{}argmax01} is not $\bot$ only when its value is maximal, i.e., there is only one maximizing argument, which we will call $y$.
There are three possibilities for $y$: either $0 < y < 1$, or $y = 0$, or $y = 1$.
In the case that $0 < y < 1$, then if $f''$ is defined at $(\gamma, y)$, then we know that $f'_y(\gamma, y) = 0$ and that this argmax is an isolated root of $f'_y$, where $f'_y$ is the derivative of $f$ with respect to its latter argument.
Any infinitesimal perturbation $d\gamma$ to $\gamma$ results in an infinitesimal perturbation to the root of $f'_y$, so the implicit function theorem defines how the root changes.
If the maximizing argument $y$ is on the boundary, i.e., $y = 0$ or $y = 1$, then if we additionally know that either $f'_y(\gamma, y) < 0$ or $f'_y(\gamma, y) > 0$, respectively, then it must be the case that the derivative of the argmax is 0, because the argmax will be stuck at the boundary no matter how $\gamma$ might be infinitesimally perturbed.
\TBD{Haven't defined cases within \cat{AD}}
\TBD{Discuss computability?}

We can now proceed to describe \ESHInline{\ESHBol{}max01}, whose derivative is defined in terms of \ESHInline{\ESHBol{}argmax01}:
\begin{align*}
\begin{contc}
\begin{qsmooth}\ESHInline{\ESHBol{}max01}\end{qsmooth} (f)^{(0)} \end{contc}
&
\triangleq
\begin{contc}
\lambda \gamma.\ \max_{x \in [0, 1]} f(\gamma, x)
\end{contc}
\\
\begin{contc}
\begin{qsmooth}\ESHInline{\ESHBol{}max01}\end{qsmooth}(f)'
\end{contc} &
\triangleq
\begin{contc}
\left(f \circ \begin{qsmooth}\ESHInline{\ESHBol{}argmax01}\end{qsmooth}(f) \right)'
\end{contc}
\end{align*}
Just as the derivative of \ESHInline{\ESHBol{}max} depends on which argument results in the max, similarly the derivative of \ESHInline{\ESHBol{}max01} is a function of the maximizing argument.
If we can isolate a single argmax, then \ESHInline{\ESHBol{}max01\ESHSpace{}f\ESHSpace{}=\ESHSpace{}f\ESHSpace{}(argmax01\ESHSpace{}f)},
and thus all the derivatives of \ESHInline{\ESHBol{}max01\ESHSpace{}f} follow from the chain rule and the smooth derivatives of \ESHInline{\ESHBol{}f} and \ESHInline{\ESHBol{}argmax01\ESHSpace{}f}.
\fTBD{What about a non-analytic smooth function, whose derivative may not have isolated roots? e.g., imagine minimizing the function here: \url{https://en.wikipedia.org/wiki/Non-analytic_smooth_function}}

\subsection{Internal Derivatives of Functions at All Types}
\label{smooth:internal-deriv}

The primitive \ESHInline{\ESHBol{}tangent\ESHSpace{}A\ESHSpace{}B\ESHSpace{}:\ESHSpace{}(A\ESHSpace{}\ESHUnicodeSubstitution{\ESHMathSymbol{\rightarrow}}\ESHSpace{}B)\ESHSpace{}\ESHUnicodeSubstitution{\ESHMathSymbol{\rightarrow}}\ESHSpace{}Tan\ESHSpace{}A\ESHSpace{}\ESHUnicodeSubstitution{\ESHMathSymbol{\rightarrow}}\ESHSpace{}Tan\ESHSpace{}B} permits the expression of the derivative of any function in \dlang{}, with any input type \ESHInline{\ESHBol{}A} and output type \ESHInline{\ESHBol{}B}.
These types are much more general than those on which differentiation in classically defined in mathematics.
In this section, we will explain the semantics of \ESHInline{\ESHBol{}tangent} and \ESHInline{\ESHBol{}Tan}, which generalize the notion of differentiation from \cat{AD} to apply to all objects in \cat{HAD} (i.e., all types in \dlang{}).

We need to systematically generalize the derivative of \cat{AD}, as expressed with the postfix $'$ operator, to apply to \cat{HAD}.
Following \citet{vakar}, we can apply the categorical technique of left Kan extensions, which extend a functor on a base category to one that acts on presheaves over that category.
Our definition of generalized tangent spaces and its properties will also be similar to the dvs diffeology on internal tangent bundles as described by \citet{difftangent}.
Accordingly, we can lift the operation of forward-mode differentiation from the first-order language \cat{AD} to the higher-order language \cat{HAD}.
Defining
\begin{align*}
\begin{contc}\mathsf{valueWithDer}\end{contc}
&\begin{contc}: (A \sto B) \to (A \stimes A \sto B \stimes B)\end{contc}
\\
\begin{contc}\mathsf{valueWithDer}(f)\end{contc}
&\triangleq
\begin{qvect} x, dx \vdash (\begin{uqcont}f\end{uqcont}(x), \begin{uqcont}f'\end{uqcont}(x, dx))\end{qvect},
\end{align*}
we find that $\cont{\mathsf{valueWithDer}}$ defines a functor on \cat{AD} acting on objects by $\cont{X \mapsto X \stimes X}$ from a space $X$ to its tangent bundle $X \stimes X$, where the tangent bundle $X \stimes X$ represents a point of $X$ together with an infinitesimal perturbation of that point.
The functoriality of $\mathsf{valueWithDer}$ follows from the chain rule of differentiation (and that $\begin{uqcont}\mathrm{id}'\end{uqcont}(x, dx) = dx$).

This functor can be extended to \cat{HAD} via a left Kan extension to produce a functor \ESHInline{\ESHBol{}Tan} and its functorial map \ESHInline{\ESHBol{}tangent\ESHSpace{}A\ESHSpace{}B\ESHSpace{}:\ESHSpace{}(A\ESHSpace{}\ESHUnicodeSubstitution{\ESHMathSymbol{\rightarrow}}\ESHSpace{}B)\ESHSpace{}\ESHUnicodeSubstitution{\ESHMathSymbol{\rightarrow}}\ESHSpace{}Tan\ESHSpace{}A\ESHSpace{}\ESHUnicodeSubstitution{\ESHMathSymbol{\rightarrow}}\ESHSpace{}Tan\ESHSpace{}B}, which runs generalized forward-mode derivatives, interpreted geometrically as a pushforward of the tangent bundles.
Concretely, we define the \emph{tangent bundle} functor \ESHInline{\ESHBol{}Tan}, as a left Kan extension, corresponds to a \emph{coend}:
\[
\ESHInline{\ESHBol{}Tan}(F)(\Gamma) \cong \coend{\Delta}{(\Gamma \sto \Delta^2) \times F(\Delta)}.
\]
Informally, the tangent bundle over the \dlang{} type $F$ in a context $\Gamma$ is represented by a pair of a value and infinitesimal perturbation $\Gamma \sto \Delta^2$ for some Cartesian space $\Delta$ (i.e., $\Delta = \R^n$ for some $n \in \nat$), together with a map from the space $\Delta$ into the type $F$.
Thus, if we wish to define an infinitesimal perturbation into a complicated type $F$,
we are able to do it by choosing a Cartesian space $\Delta$ to express that infinitesimal perturbation, and then we construct a map from $\Delta$ to $F$.
All elements of the tangent bundle of $F$ arise in that way.

We now explain how these tangent bundles work with an example.
Suppose $F = \R^2$ and we want to represent the tangent bundle $((0, 1), (1, 0)) \in \R^2 \times \R^2$, i.e., the vector $(0, 1)$ moving infinitesimally in the $(1, 0)$ direction.
Since there are no variables in the context, we can define the tangent bundle at once for all $\Gamma$.
The type of generalized tangent bundles is
\[
\ESHInline{\ESHBol{}Tan}(\y{\R^2})(\Gamma) \cong \coend{\Delta}{(\ast \sto \Delta^2) \times (\Delta \sto \R^2)}.
\]
We can represent the tangent bundle $((0, 1), (1, 0)) \in \R^2 \times \R^2$ in two equivalent ways.
The straightforward way is to take $\Delta = \R^2$ and put the point and its perturbation in the first component and the identity map in the second,
\[
(\R^2, (\langle (0, 1), (1, 0) \rangle, \mathrm{id})).
\]
Alternatively, we can represent it with a parametric function $f : \R \to \R^2$ defined by $f(t) = (0, 1) + t \cdot (1, 0)$, describing a point that moves from $(0, 1)$ at $t = 0$ in the direction of $(1, 0)$ as $t$ increases:
\[
(\R, (\langle 0, 1 \rangle, \lambda t.\ (0, 1) + t \cdot (1, 0))).
\]
Two members $(\tau_1, (f_1, g_1))$ and $(\tau_2, (f_2, g_2))$ of $\ESHInline{\ESHBol{}Tan}(\y{\R^2})(\Gamma)$ are equivalent if
\[
\mathsf{valueWithDer}(g_1) \circ f_1 = \mathsf{valueWithDer}(g_2) \circ f_2.
\]
Indeed, this is the case for the two examples above, as both compositions yield $\langle (0, 1), (1, 0) \rangle$.
This criterion for equivalence is for representable types such as $\y{\R^2}$ but generalizes for tangent bundles over types that are not representable.
It intuitively captures the notion that the first component of the tuple represents a tangent bundle of a representable space, whereas the second is a map that applies to that output but is yet to be differentiated.
This is the justification for applying $\mathsf{valueWithDer}$ above.

The Kan extension is genuinely an extension of the underlying functor $\mathsf{valueWithDer}$.
That is, we have the equivalence $\ESHInline{\ESHBol{}Tan}(\y{A}) \cong \y{A^2}$, where $\y{\cdot}$ is the Yoneda embedding (\Cref{tan-yoneda}).
This means that the generalized tangent bundle for Cartesian spaces $\R^n$ is indeed $\R^n \times \R^n$: one $\R^n$ for the point and one $\R^n$ for the infinitesimal perturbation.

The generalized tangent bundle functor supports other operations as well.
A polymorphic function \ESHInline{\ESHBol{}tangentValue\ESHSpace{}A\ESHSpace{}:\ESHSpace{}Tan\ESHSpace{}A\ESHSpace{}\ESHUnicodeSubstitution{\ESHMathSymbol{\rightarrow}}\ESHSpace{}A} projects out the base point.
The primitive \ESHInline{\ESHBol{}tangentProd\ESHSpace{}A\ESHSpace{}B\ESHSpace{}:\ESHSpace{}Tan\ESHSpace{}(A\ESHSpace{}*\ESHSpace{}B)\ESHSpace{}\ESHUnicodeSubstitution{\ESHMathSymbol{\cong}}\ESHSpace{}Tan\ESHSpace{}A\ESHSpace{}*\ESHSpace{}Tan\ESHSpace{}B} implements the following isomorphism:
\begin{proposition}
\label{tan-product}
Tangent bundles commute with products, i.e., $\ESHInline{\ESHBol{}Tan}(F \times G) \cong \ESHInline{\ESHBol{}Tan}(F) \times \ESHInline{\ESHBol{}Tan}(G)$.
\end{proposition}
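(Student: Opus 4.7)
My plan is to construct the isomorphism directly on the coend formula $\mathsf{Tan}(F)(\Gamma) \cong \coend{\Delta}{(\Gamma \sto \Delta^2) \times F(\Delta)}$, exploiting the fact that the functor $\Delta \mapsto \Delta^2$ on $\cat{AD}$ preserves binary products (since $(\Delta_1 \times \Delta_2)^2 \cong \Delta_1^2 \times \Delta_2^2$) and that products in the presheaf category $\cat{HAD}$ are computed pointwise. I will exhibit maps in both directions on representatives $(\Delta, (f, g))$ with $f : \Gamma \sto \Delta^2$ and $g : \y{\Delta} \to F$, check that each respects the coend equivalence, and verify they are mutually inverse.

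First I would define the forward map $\Phi : \mathsf{Tan}(F \times G) \to \mathsf{Tan}(F) \times \mathsf{Tan}(G)$. Since $(F \times G)(\Delta) = F(\Delta) \times G(\Delta)$, a representative is $(\Delta, (f, (g_1, g_2)))$, and $\Phi$ sends it to the pair $((\Delta, (f, g_1)), (\Delta, (f, g_2)))$, reusing the same tangent space $\Delta$ and the same $f$ on both sides. This is clearly natural in $\Gamma$ and respects the coend identifications, since a reparameterization map $h : \Delta \to \Delta'$ acts on both components simultaneously.

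Next I would define the backward map $\Psi$. Given representatives $(\Delta_1, (f_1, g_1))$ and $(\Delta_2, (f_2, g_2))$ for the two factors, I take $\Delta \triangleq \Delta_1 \times \Delta_2$, form $f : \Gamma \sto \Delta^2$ by pairing $f_1$ and $f_2$ and rearranging via the canonical isomorphism $(\Delta_1 \times \Delta_2)^2 \cong \Delta_1^2 \times \Delta_2^2$, and take $g \triangleq \langle g_1 \circ \y{\pi_1}, g_2 \circ \y{\pi_2} \rangle : \y{\Delta} \to F \times G$. Well-definedness on the coend requires showing that replacing $(\Delta_1, (f_1, g_1))$ by an equivalent representative (via some $h_1 : \Delta_1 \to \Delta_1'$) produces an equivalent output; the witness is the product map $h_1 \times \mathrm{id} : \Delta_1 \times \Delta_2 \to \Delta_1' \times \Delta_2$, and symmetrically on the other side.

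Finally I would verify $\Phi \circ \Psi = \mathrm{id}$ and $\Psi \circ \Phi = \mathrm{id}$ at the level of coend equivalence classes. For $\Psi \circ \Phi$, starting from $(\Delta, (f, (g_1, g_2)))$, we land at $(\Delta \times \Delta, (\langle f, f \rangle_{\text{rearranged}}, \langle g_1 \circ \y{\pi_1}, g_2 \circ \y{\pi_2} \rangle))$, and the diagonal $\delta : \Delta \to \Delta \times \Delta$ furnishes a coend equivalence back to the original (using $\y{\pi_i} \circ \y{\delta} = \mathrm{id}$ and that $(f,f) = \Delta^2(\delta) \circ f$ after rearrangement). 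For $\Phi \circ \Psi$, starting from $((\Delta_1, (f_1, g_1)), (\Delta_2, (f_2, g_2)))$, the two projections $\pi_i : \Delta_1 \times \Delta_2 \to \Delta_i$ witness each component as equivalent to its original form.

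The main obstacle will be carefully managing the coend equivalences in the round-trip identities, which amount to diagram chases using the universal properties of products in $\cat{AD}$ together with functoriality of $\y{\cdot}$ (in particular, $\y{\Delta_1 \times \Delta_2} \cong \y{\Delta_1} \times \y{\Delta_2}$). The construction itself is a direct consequence of the fact that the Kan extension of a product-preserving functor out of a category with finite products into a presheaf category preserves binary products, so the bookkeeping, rather than any genuine analytic content, is where the care is required.
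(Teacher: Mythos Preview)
Your proposal is correct and follows essentially the same approach as the paper: the forward map is obtained from the product projections (applying $\mathsf{Tan}$ functorially), the backward map takes $\Delta \triangleq \Delta_1 \times \Delta_2$ and uses the projection pullbacks $\pi_i^* : F(\Delta_i) \to F(\Delta)$, and the round-trip identities are verified via the diagonal and projections as coend witnesses. The paper phrases the inverse check in terms of the identity $\mathsf{valueWithDer}(\lambda (x,y).\ (f(x),g(y)))((x,y),(dx,dy)) = ((f(x),g(y)),(f'(x,dx),g'(y,dy)))$, which is precisely the statement that the functor $\Delta \mapsto \Delta^2$ preserves binary products that you invoke abstractly; the two presentations are equivalent.
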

\begin{proof}[Proof sketch]
First, we construct mappings in both directions:
We easily have the product projections
$\ESHInline{\ESHBol{}Tan}(F \times G) \to \ESHInline{\ESHBol{}Tan}(F)$
and
$\ESHInline{\ESHBol{}Tan}(F \times G) \to \ESHInline{\ESHBol{}Tan}(G)$.
Conversely, given $(\ESHInline{\ESHBol{}Tan}(F) \times \ESHInline{\ESHBol{}Tan}(G))(\Gamma)$,
we get $\Delta_1$ and $\Delta_2$ with
$f : \Gamma \sto \Delta_1^2$
and
$g : \Gamma \sto \Delta_2^2$
and $F(\Delta_1)$ and $G(\Delta_2)$.
Taking $\Delta \triangleq \Delta_1 \times \Delta_2$, we can define
$
h(\gamma) \triangleq ((x, y), (dx, dy))
$
where $(x, dx) = f(\gamma)$
and $(y, dy) = g(\gamma)$.
Using the pullbacks $\pi_1^* : F(\Delta_1) \to F(\Delta_1 \times \Delta_2)$
and $\pi_2^* : G(\Delta_2) \to G(\Delta_1 \times \Delta_2)$,
we can produce $\ESHInline{\ESHBol{}Tan}(F \times G)(\Gamma)$.

Next, it is possible to confirm that these mappings are mutually inverse,
using the fact that
\[\mathsf{valueWithDer}(\lambda (x, y). (f(x), g(y)))((x, y), (dx, dy)) = ((f(x), g(x)), (f'(x, dx),g'(y, dy))),\] together with general properties of limits and functoriality of \ESHInline{\ESHBol{}Tan}, $F$, and $G$.
\end{proof}

The primitive \ESHInline{\ESHBol{}tangent\_R\ESHSpace{}:\ESHSpace{}Tan\ESHSpace{}\textcolor[HTML]{204A87}{\ESHUnicodeSubstitution{\ESHMathSymbol{\Re}}}\ESHSpace{}\ESHUnicodeSubstitution{\ESHMathSymbol{\cong}}\ESHSpace{}\textcolor[HTML]{204A87}{\ESHUnicodeSubstitution{\ESHMathSymbol{\Re}}}\ESHSpace{}*\ESHSpace{}\textcolor[HTML]{204A87}{\ESHUnicodeSubstitution{\ESHMathSymbol{\Re}}}} that implements the following isomorphism (for the special case of \ESHInline{\ESHBol{}\textcolor[HTML]{204A87}{\ESHUnicodeSubstitution{\ESHMathSymbol{\Re}}}}):
\begin{proposition}
\label{tan-yoneda}
We have the equivalence $\ESHInline{\ESHBol{}Tan}(\y{A}) \cong \y{A^2}$, where $\y{\cdot}$ is the Yoneda embedding.
\end{proposition}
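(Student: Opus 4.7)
My plan is to reduce the claim to a single application of the co-Yoneda (ninja Yoneda) lemma. By the coend formula already asserted for $\ESHInline{\ESHBol{}Tan}$, we have
\[
\ESHInline{\ESHBol{}Tan}(\y{A})(\Gamma) \;\cong\; \coend{\Delta}{(\Gamma \sto \Delta^2) \times \y{A}(\Delta)} \;=\; \coend{\Delta}{(\Gamma \sto \Delta^2) \times (\Delta \sto A)}.
\]
Define $X \colon \cat{AD} \to \cat{Set}$ by $X(\Delta) \triangleq (\Gamma \sto \Delta^2)$; this is covariant in $\Delta$ because the tangent functor $T(\Delta) = \Delta^2$ is covariant on $\cat{AD}$ (its action on a morphism $\phi$ is $\phi \stimes \phi$) and postcomposition is covariant. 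The co-Yoneda lemma then gives $\coend{\Delta}{X(\Delta) \times (\Delta \sto A)} \cong X(A) = (\Gamma \sto A^2) = \y{A^2}(\Gamma)$, which is exactly what we want.

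To make the argument concrete and self-contained I would exhibit the isomorphism explicitly in both directions. In one direction, send a representative $[\Delta,\,(f \colon \Gamma \sto \Delta^2,\; g \colon \Delta \sto A)]$ to $(g \stimes g) \circ f \colon \Gamma \sto A^2$; this is well-defined on coend classes because the dinatural identification generated by a morphism $\phi \colon \Delta \to \Delta'$ matches $(f,\, g \circ \phi)$ with $((\phi \stimes \phi) \circ f,\, g)$, and both map to the same composite in $\Gamma \sto A^2$. In the other direction, send $h \colon \Gamma \sto A^2$ to the class $[A,\,(h,\, \mathrm{id}_A)]$.

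The two roundtrips are then straightforward to verify. Starting from $h$ one gets $(\mathrm{id}_A \stimes \mathrm{id}_A) \circ h = h$. Starting from $[\Delta,\,(f, g)]$ one obtains $[A,\,((g \stimes g) \circ f,\, \mathrm{id}_A)]$, which coincides with $[\Delta,\,(f, g)]$ via the dinaturality relation induced by $g \colon \Delta \to A$, since $(f,\, \mathrm{id}_A \circ g) = (f, g)$ and $((g \stimes g) \circ f,\, \mathrm{id}_A)$ are identified. Naturality in $\Gamma$ is immediate from the definition of both maps by pre- and postcomposition.

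The only mild subtlety is bookkeeping the variances: confirming that $(\Gamma \sto \Delta^2)$ is covariant in $\Delta$ (through $T$) while $(\Delta \sto A)$ is contravariant in $\Delta$, so that the coend is of the form to which the co-Yoneda lemma applies. Once this is pinned down, the rest is routine, so I do not expect a serious obstacle beyond carefully writing out the dinaturality check.
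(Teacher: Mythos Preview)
Your overall plan is sound and in fact coincides with the paper's argument: both directions of the isomorphism you write down are the same as the paper's, with $h \mapsto [A,(h,\mathrm{id}_A)]$ in one direction and an evaluation of the functorial action in the other. The co-Yoneda framing is a clean way to package this.

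However, there is a genuine error in your identification of the functorial action. The covariant functor $T : \cat{AD} \to \cat{AD}$ underlying $\ESHInline{\ESHBol{}Tan}$ does send objects by $\Delta \mapsto \Delta \stimes \Delta$, but its action on a morphism $\phi$ is \emph{not} $\phi \stimes \phi$; it is $\mathsf{valueWithDer}(\phi) = \langle \phi \circ \pi_1,\ \phi' \rangle$, i.e., $(x,dx) \mapsto (\phi(x),\, \phi'(x;dx))$. This is precisely what makes $T$ the \emph{tangent} functor rather than the diagonal product functor, and it is the action with respect to which the coend defining $\ESHInline{\ESHBol{}Tan}$ is taken. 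Consequently the dinaturality relation identifies $(f,\, g \circ \phi)$ with $(\mathsf{valueWithDer}(\phi) \circ f,\, g)$, not with $((\phi \stimes \phi) \circ f,\, g)$.

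With the action you wrote, your forward map $[\Delta,(f,g)] \mapsto (g \stimes g) \circ f$ is not well-defined on coend classes: applying your map to the two identified representatives gives $(g \stimes g)\circ(\phi \stimes \phi)\circ f$ versus $(g \stimes g)\circ \mathsf{valueWithDer}(\phi)\circ f$, which differ whenever $\phi$ is nonlinear. The fix is exactly what the paper does: replace $(g \stimes g)$ by $\mathsf{valueWithDer}(g)$ throughout. Your roundtrip checks then go through using functoriality of $\mathsf{valueWithDer}$ (the chain rule) in place of functoriality of $(-) \stimes (-)$.
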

\begin{proof}
\begin{align*}
\ESHInline{\ESHBol{}Tan}(\y{A})(\Gamma)
\cong  \coend{\Delta}{ (\Gamma \sto \Delta^2) \times (\Delta \sto A) }.
\end{align*}
Given $f : \y{A^2}(\Gamma) = \Gamma \sto A^2$, we can take $\Delta = A$ and use $(f, \text{id})$.
Given an element of $\ESHInline{\ESHBol{}Tan}(\y{A})(\Gamma)$, i.e., some $\Delta$ and $f : \Gamma \sto \Delta^2$ and $g : \Delta \sto A$, then $\mathsf{valueWithDer}(g) \circ f : \y{A^2}(\Gamma)$.
\end{proof}
Note that we have $\ESHInline{\ESHBol{}tangentValue} \circ \ESHInline{\ESHBol{}tangent\_R} = \ESHInline{\ESHBol{}fst}$, i.e., the first component is the base point and the second is the infinitesimal perturbation.

Note that the types to represent isomorphisms of tangent bundles are not necessarily isomorphisms in \dlang{}:
the type \ESHInline{\ESHBol{}\ESHUnicodeSubstitution{\ESHMathSymbol{\cong}}} just corresponds to pairs of maps back and forth.
The primitive \ESHInline{\ESHBol{}tangentTo\_R\ESHSpace{}A\ESHSpace{}:\ESHSpace{}Tan\ESHSpace{}(A\ESHSpace{}\ESHUnicodeSubstitution{\ESHMathSymbol{\rightarrow}}\ESHSpace{}\textcolor[HTML]{204A87}{\ESHUnicodeSubstitution{\ESHMathSymbol{\Re}}})\ESHSpace{}\ESHUnicodeSubstitution{\ESHMathSymbol{\cong}}\ESHSpace{}(A\ESHSpace{}\ESHUnicodeSubstitution{\ESHMathSymbol{\rightarrow}}\ESHSpace{}\textcolor[HTML]{204A87}{\ESHUnicodeSubstitution{\ESHMathSymbol{\Re}}})\ESHSpace{}*\ESHSpace{}(A\ESHSpace{}\ESHUnicodeSubstitution{\ESHMathSymbol{\rightarrow}}\ESHSpace{}\textcolor[HTML]{204A87}{\ESHUnicodeSubstitution{\ESHMathSymbol{\Re}}})}, in which tangent bundles distribute over functions into \ESHInline{\ESHBol{}\textcolor[HTML]{204A87}{\ESHUnicodeSubstitution{\ESHMathSymbol{\Re}}}}, implements mappings that are an isomorphism only when we restrict \ESHInline{\ESHBol{}\textcolor[HTML]{204A87}{\ESHUnicodeSubstitution{\ESHMathSymbol{\Re}}}} to $\R$ (rather than all of $\IR$):

\begin{proposition}
\label{tan-arrow-real}
There is an isomorphism
$\ESHInline{\ESHBol{}Tan}(A \Rightarrow \y{\R}) \cong A \Rightarrow \ESHInline{\ESHBol{}Tan}(\y{\R}) \cong A \Rightarrow \y{\R^2}$.
\end{proposition}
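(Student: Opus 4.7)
I plan to prove the two isomorphisms separately. The second one, $A \Rightarrow \ESHInline{\ESHBol{}Tan}(\y{\R}) \cong A \Rightarrow \y{\R^2}$, follows immediately from \Cref{tan-yoneda} applied to $\R$ together with functoriality of the exponential $A \Rightarrow (-)$. The substance lies in the first isomorphism $\ESHInline{\ESHBol{}Tan}(A \Rightarrow \y{\R}) \cong A \Rightarrow \y{\R^2}$, which I will establish by producing explicit forward and backward maps on the coend presentation of $\ESHInline{\ESHBol{}Tan}$ and then verifying they are mutually inverse.

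For the forward map, I apply the functor $\ESHInline{\ESHBol{}Tan}$ to the evaluation $\mathrm{ev} : (A \Rightarrow \y{\R}) \times A \to \y{\R}$, use \Cref{tan-product} and \Cref{tan-yoneda} to rewrite this as $\ESHInline{\ESHBol{}Tan}(A \Rightarrow \y{\R}) \times \ESHInline{\ESHBol{}Tan}(A) \to \y{\R^2}$, precompose with the canonical zero-section $A \to \ESHInline{\ESHBol{}Tan}(A)$ (sending $a \in A(\Gamma)$ to the coend class of $(\Gamma, \langle \mathrm{id}, 0 \rangle, a)$), and curry. On a coend representative $(\Delta, p, h)$ with $p = (p_v, p_d) : \Gamma \sto \Delta^2$ and $h \in (A \Rightarrow \y{\R})(\Delta)$, this yields the natural transformation $A \times \y{\Gamma} \to \y{\R^2}$ whose two components are the value of $h$ at the base point $p_v$ and the directional derivative of $h$ in the direction $p_d$; well-definedness on equivalence classes follows from the chain rule.

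For the backward map, I exploit the smooth $\R$-algebra structure on $\y{\R}$. Given $k \in (A \Rightarrow \y{\R^2})(\Gamma)$, split $k = (k_0, k_1) \in ((A \Rightarrow \y{\R})(\Gamma))^2$ via $\y{\R^2} \cong \y{\R} \times \y{\R}$, and return the coend class of $(\Gamma \times \R, p', h')$, where $p'(\gamma) = ((\gamma, 0), (0, 1))$ and $h'(a, \gamma', t) = k_0(a, \gamma') + t \cdot k_1(a, \gamma')$. Backward followed by forward is a direct calculation: because $h'$ is linear in $t$, evaluating at $t = 0$ recovers $k_0$ and the $t$-derivative recovers $k_1$.

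The main obstacle is showing forward followed by backward is the identity in the coend. The natural dinaturality witness $\alpha : \Gamma \times \R \to \Delta$ defined by $\alpha(\gamma, t) = p_v(\gamma) + t \cdot p_d(\gamma)$ satisfies $\mathsf{valueWithDer}(\alpha) \circ p' = p$, but the pullback $F(\alpha)(h)$ generally differs from the linearized $h'$ produced by backward in its Taylor coefficients of order $\geq 2$ in $t$. The resolution is that the coend quotient identifies any two representatives whose underlying functions agree to first order along $p$, since the functor $\mathsf{valueWithDer}$ retains only 1-jet information. Each higher-order Taylor discrepancy can be absorbed by a chain of dinaturalities through intermediate higher-dimensional objects, modeled on the identification $(\R, (0, 1), x^2) \sim (\R^2, ((0, 0), (1, 0)), t) \sim (\R^2, ((0, 0), (1, 0)), 0) \sim (\R, (0, 1), 0)$ inside $\ESHInline{\ESHBol{}Tan}(\y{\R})$, which uses the embedding $s \mapsto (s, s^2)$ and the projection $(s, t) \mapsto (s, 0)$ to eliminate a quadratic discrepancy while preserving the base-and-direction data.
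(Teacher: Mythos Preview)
Your approach matches the paper's: the two explicit maps you build are the same as the paper's (your ``backward'' is the paper's first construction with $\Delta = \Gamma \times \R$ and the linear-in-$t$ function, and your ``forward'' is the paper's value-plus-directional-derivative map), and both arguments rest on the identity $f'(x; v) = \partial_t\big|_{t=0} f(x + t v)$. You actually go further than the paper on the hard round-trip: the paper dispatches it in one line via that identity, while you correctly isolate the issue that the two coend representatives agree only to first order in $t$ and sketch the dinaturality zig-zag through a higher-dimensional intermediate that absorbs the discrepancy---this is the right idea, and it can be made uniform by embedding along $(\gamma,t)\mapsto(\gamma,t,\,g_1(\gamma,t)-g_2(\gamma,t))$ and projecting out the extra coordinate.
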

\begin{proof}
First, we construct the mappings in each direction. Note that these types are:
\begin{align*}
\ESHInline{\ESHBol{}Tan}(A \Rightarrow \y{\R})(\Gamma) &\cong \coend{\Delta}{(\Gamma \sto \Delta^2) \times \cend{X}{ (X \sto \Delta) \to A(X) \to (X \sto \R)}}
\\
(A \Rightarrow \y{\R^2})(\Gamma) &\cong \cend{X}{ (X \sto \Gamma) \to A(X) \to (X \sto \R^2)}
\end{align*}

Given $f : (A \Rightarrow \y{\R^2})$, we take $\Delta = \Gamma \times \R$, and use
\begin{align*}
\exists \Gamma \times \R.\ (\lambda \gamma.\ ((\gamma, 0), (0, 1)),
\Lambda X.\ \lambda (e : X \sto \Gamma \times \R).\ \lambda (a : A(X)).\
\\
\text{let }(g, dg) = f(X,\pi_1 \circ e, a)\text{ in } \lambda x : X.\ g(x) + \pi_2(e(x)) \cdot dg(x))
\end{align*}

Conversely, given a member of
$\ESHInline{\ESHBol{}Tan}(A \Rightarrow \y{\R})(\Gamma)$,
i.e.,
a $\Delta$ with $d : \Gamma \sto \Delta^2$ and $f : \int_X (X \sto \Delta) \to A(X) \to (X \sto \R)$,
we can provide\fTBD{I believe this is incorrect.}
\[
\Lambda X.\ \lambda (e : X \sto \Gamma).\ \lambda (a : A(X)).\
\mathsf{valueWithDer}(f(X,\pi_1
\langle f(X,\pi_1 \circ d \circ e,a), f(X,\pi_2 \circ d \circ e,a) \rangle)).
\]

Next, we must confirm that these mappings are mutually inverse.
This boils down to the basic identity
$
f'(x; v) = \frac{\partial (f(x + t \cdot v))}{\partial t} \mid_{t=0}.
$
\end{proof}
Note that it is not an isomorphism for all of \ESHInline{\ESHBol{}\textcolor[HTML]{204A87}{\ESHUnicodeSubstitution{\ESHMathSymbol{\Re}}}}, because we rely on the algebraic law $x + 0 \cdot y = x$ for all $y$, but if we allow $y \in \IR \setminus \R$, there is the counterexample $x + 0 \cdot \bot = \bot$.

\subsection{Consistency}

\begin{proposition}[Consistency of differentiation in the higher-order language]
\label{ho-soundness}
Given any term $\Gamma \vdash e : \tau$ in \dlang{} where $\Gamma$ is a context of all ground types and $\tau$ is a ground type, then $\begin{qsmooth}e\end{qsmooth}$ is equivalent to some first-order smoothish map $f$, i.e., successively consistent derivative tower.
\end{proposition}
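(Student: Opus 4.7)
The plan is to exploit full faithfulness and product preservation of the Yoneda embedding $\y{\cdot} : \cat{AD} \to \cat{HAD}$ to descend $\begin{qsmooth}e\end{qsmooth}$ to a morphism in \cat{AD}, which is a successively consistent derivative tower by definition of the category.

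First, I would establish by induction on the structure of ground types that $\denote{\tau}_\cat{HAD} \cong \y{\denote{\tau}_\cat{AD}}$ for any ground type $\tau$. The base case $\denote{\R}_\cat{HAD} = \y{\IR}$ is a definition, and the product case follows because the Yoneda embedding preserves limits (in particular finite products). The analogous statement then holds for contexts of ground types: $\denote{\Gamma}_\cat{HAD} \cong \y{\denote{\Gamma}_\cat{AD}}$, regarding $\denote{\Gamma}_\cat{AD}$ as the iterated product of the component objects of \cat{AD}.

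Next, the semantics of $e$ in \Cref{smooth:fig:ho-semantics} gives $\begin{qsmooth}e\end{qsmooth} : \denote{\Gamma}_\cat{HAD} \to \denote{\tau}_\cat{HAD}$, which under the isomorphisms above is a morphism $\y{\denote{\Gamma}_\cat{AD}} \to \y{\denote{\tau}_\cat{AD}}$ in \cat{HAD}. Because $\y{\cdot}$ is full and faithful, this morphism is the image under $\y{\cdot}$ of a \emph{unique} morphism $f : \denote{\Gamma}_\cat{AD} \sto \denote{\tau}_\cat{AD}$ in \cat{AD}. By definition of \cat{AD}, $f$ is a smoothish map, that is, a successively consistent derivative tower, which is the desired equivalent first-order interpretation.

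The main obstacle is that the typing derivation of $e$ may pass through subterms of higher type (for example applications of \ESHInline{integral01} or uses of $\lambda$-abstraction introduced by let-bindings), and \emph{a priori} the semantics of such subterms lives genuinely in the presheaf category and is not representable. The point that makes the argument go through is that we do not need to trace the structure of the derivation: full faithfulness of the Yoneda embedding lets us extract $f$ from the global morphism $\begin{qsmooth}e\end{qsmooth}$ between representables, regardless of how many non-representable intermediate objects are used to construct it. What must be checked carefully is that each higher-order primitive from \Cref{dlang-constants} really is defined as a morphism in \cat{HAD} (rather than only a family of components), so that the compositional semantics produces a bona fide \cat{HAD}-morphism to which Yoneda may be inverted; this is exactly what the definitions in \Cref{higher-order-primitives} and \Cref{smooth:internal-deriv} ensure.
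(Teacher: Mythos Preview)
Your proposal is correct and follows essentially the same approach as the paper: both arguments invoke full faithfulness (and, implicitly in the paper, product preservation) of the Yoneda embedding to identify a \cat{HAD}-morphism between representables with a unique morphism of \cat{AD}, which is a smoothish map by definition. Your write-up is simply a more explicit unpacking of the paper's one-line proof, including the useful observation that higher-order subterms pose no obstacle because full faithfulness is applied only to the composite morphism.
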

\begin{proof}
Since the Yoneda embedding is full and faithful, first-order terms in \cat{HAD} correspond to morphisms in \cat{AD}, so this statement reduces to \Cref{fo-soundness}.
\end{proof}

\section{Computability and Numerically-Sound Implementation}
\label{smooth:implementation}

It is not obvious that the categorical semantics of \dlang{} we present in \Crefrange{fo-semantics}{ho-semantics} is actually implementable (in a sound manner).
The semantics critically uses reals and real arithmetic, rather than some approximation like floating point (which would fail to give even the most basic equalities such as $1/5+2/5=3/5$).
And value-level definitions of higher-order primitives in \dlang{} are expressed in terms of mathematical operations for integration, optimization, and root finding applied to arbitrary continuous maps.
In fact, our semantic development is computable, and we have implemented it in a numerically sound manner as an embedded DSL in Haskell.


Our semantics can be developed constructively and interpreted within the \emph{internal language} of another topos, which we call \emph{\clang}, in order to provide a computable interpretation.
We base \clang{} on MarshallB \citep{icfp19}.
Our implementation of \dlang{} more-or-less directly follows interpreting the semantics of \dlang{} within \clang{} and in turn implementing \clang{} in Haskell.

\clang{} is a topos of sheaves over a Cartesian monoidal category that we call \cat{CTop}.
\cat{CTop} is a category of computably presented topological spaces and computable continuous maps.
\clang{} is the topos of sheaves over \cat{CTop} with the open cover topology (along the lines of \cite{continuoustruth}).

What results is a stack of languages: \dlang{} reducing to \cat{AD}, implemented in \clang{}, which reduces to \cat{CTop}, which carries the final executable content of ground terms.
We can view it like a stack of metaprogramming languages on top of \cat{CTop}:
ultimately, when a closed term of \dlang{} (or any other language in the stack) of ground type is evaluated and displayed as a sequence of improving approximations, it is in fact a closed term of \cat{CTop}, i.e., a computable point of a topological space.

\Comment{
Each higher layer in the stack provides additional expressiveness.
In \cat{CTop} we can represent a continuous map such as the cube root function, but we need the expressivity of \clang{} to represent a higher-order function for general root-finding, such as the \ESHInline{\ESHBol{}cutRoot} function described.
Ultimately, when \ESHInline{\ESHBol{}cutRoot} is applied to find the roots of a particular function $f$, it essentially reduces to a term in \cat{CTop}.
However, functions in \cat{CTop} and \clang{} do not admit differentiation; we can represent the cube root function in \cat{AD} as a smooth function, with implementations of all its derivatives.
However, we cannot \emph{internally} represent the differentiation function \ESHInline{\ESHBol{}deriv}, nor the more general function for root-finding \ESHInline{\ESHBol{}cutRoot}, since these are higher-order functions.
These can only be represented in \dlang{}.

\begin{figure}
\begin{tikzpicture}
\tikzset{block/.style= {draw, minimum height=2em,minimum width=4em}}
\node[block, fill=smoothcolor!20] (lambdas) {$\dlang{}$ implements $\cat{HAD} = \cat{Sh}_J(\cat{AD})$};
\node[block, below=1em of lambdas] (lambdasfo) {\cat{AD} (\cont{$\sto$})};
\node[block, below=1em of lambdasfo, fill=contcolor!20] (lambdac) {$\clang{}$ implements $\cat{Sh}_{J'}(\cat{CTop})$};
\node[block, below=1em of lambdac] (lambdacfo) {\cat{CTop}};
\draw[->] (lambdas.south) -- (lambdasfo.north);
\draw[->] (lambdasfo.south) -- (lambdac.north);
\draw[->] (lambdac.south) -- (lambdacfo.north);
\end{tikzpicture}
\caption{The stack of languages by which \dlang{} is defined.}
\label{system-diagram}
\vspace{-0.8em}
\end{figure}
}

\paragraph{Semantics of \clang{} and Implications for \dlang{}.}

\clang{} is a language whose types are (generalized) topological spaces with computable structure and whose functions are (generalized) computable continuous maps.
\clang{} permits all the higher-order functions and higher-order types that we will seek to define in \dlang{} and enables their computation to arbitrary precision.
This section describes \clang{} by example.
In \clang{}, the type \ESHInline{\ESHBol{}\textcolor[HTML]{204A87}{\ESHUnicodeSubstitution{\ESHMathSymbol{\Re}}}} in \clang{} represents the interval reals $\IR$.
One closed term, or value, of type \ESHInline{\ESHBol{}\textcolor[HTML]{204A87}{\ESHUnicodeSubstitution{\ESHMathSymbol{\Re}}}} is \ESHInline{\ESHBol{}sqrt\ESHSpace{}2}.
A value of \ESHInline{\ESHBol{}\textcolor[HTML]{204A87}{\ESHUnicodeSubstitution{\ESHMathSymbol{\Re}}}} represents a point of the space $\cont{\IR}$
and is computationally represented by streams of increasingly precise approximations (i.e., monotone with respect to $\sqsubseteq$):
\vspace{-0.7em}
\begin{center}
\begin{minipage}[t]{0.7\textwidth}
\begin{prompt}
\ESHInline{\ESHBol{}{>}\ESHSpace{}sqrt\ESHSpace{}2\ESHSpace{}:\ESHSpace{}\textcolor[HTML]{204A87}{\ESHUnicodeSubstitution{\ESHMathSymbol{\Re}}}}
\end{prompt}
\begin{repl}
\small
\ESHInline{\ESHBol{}[1.4142135619,\ESHSpace{}1.4142135624]}
\\ \ESHInline{\ESHBol{}[1.414213562370,\ESHSpace{}1.414213562384]}
\\ \ESHInline{\ESHBol{}[1.4142135623729,\ESHSpace{}1.4142135623733]}
\\ ...
\end{repl}
\end{minipage}
\end{center}

Note that these streams of increasingly precise approximations can be used to provide the arbitrary-precision interface where one asks for a precision tolerance and gets a result.
Each interval $\kint{\underline{x}}{\overline{x}}$, where $\underline{x} \in \{ - \infty \} \cup \mathbb{D}, \overline{x} \in \mathbb{D} \cup \{ \infty \}$, has either infinite or dyadic-rational ($\mathbb{D} = \{ k / 2^n \mid k \in \mathbb{Z}, n \in \mathbb{N} \}$) endpoints
and represents partial information about \ESHInline{\ESHBol{}sqrt\ESHSpace{}2}:
the first component represents a rational lower bound (with $-\infty$ being a vacuous bound) and the second an upper bound (with $\infty$ vacuous).
\clang{} is \emph{sound} in the sense that these bounds are guaranteed to hold of the true value.
Two closed terms of \ESHInline{\ESHBol{}\textcolor[HTML]{204A87}{\ESHUnicodeSubstitution{\ESHMathSymbol{\Re}}}} in \clang{} are considered equivalent if their streams always overlap, even if the streams are not identical.
For instance, $\ESHInline{\ESHBol{}(sqrt\ESHSpace{}2)\ESHRaise{0.30}{2}} = \ESHInline{\ESHBol{}2}$:
\vspace{-0.7em}
\begin{center}
\begin{minipage}[t]{0.48\textwidth}
\begin{prompt}
\ESHInline{\ESHBol{}{>}\ESHSpace{}(sqrt\ESHSpace{}2)\ESHRaise{0.30}{2}}
\end{prompt}
\begin{repl}
\small
\ESHInline{\ESHBol{}[1.9999999986,\ESHSpace{}2.0000000009]}
\\ \ESHInline{\ESHBol{}[1.999999999985,\ESHSpace{}2.000000000058]}
\\ \ESHInline{\ESHBol{}[1.9999999999991,\ESHSpace{}2.0000000000009]}
\\ ...
\end{repl}
\end{minipage}
\begin{minipage}[t]{0.48\textwidth}
\begin{prompt}
\ESHInline{\ESHBol{}{>}\ESHSpace{}2}
\end{prompt}
\begin{repl}
\small
\ESHInline{\ESHBol{}[2.0000000000,\ESHSpace{}2.0000000000]}
\\ \ESHInline{\ESHBol{}[2.000000000000,\ESHSpace{}2.000000000000]}
\\ \ESHInline{\ESHBol{}[2.0000000000000,\ESHSpace{}2.0000000000000]}
\\ ...
\end{repl}
\end{minipage}
\end{center}
The equivalence means that one can substitute \ESHInline{\ESHBol{}(sqrt\ESHSpace{}2)\ESHRaise{0.30}{2}} for \ESHInline{\ESHBol{}2} within any program without affecting its meaning.
In contrast, the floating-point computation for many languages and CPUs returns 2.0000000000000004, which is not 2 and does
not itself indicate a larger range of possible values that includes 2,
and would not validate the equation $\texttt{(sqrt 2)\^{}2} = \texttt{2}$.

First-order functions in \clang{} are stream transformers of their approximations.
For instance, applying the squaring function \ESHInline{\ESHBol{}(\ESHDash{})\ESHRaise{0.30}{2}\ESHSpace{}:\ESHSpace{}\textcolor[HTML]{204A87}{\ESHUnicodeSubstitution{\ESHMathSymbol{\Re}}}\ESHSpace{}\ESHUnicodeSubstitution{\ESHMathSymbol{\rightarrow}}\ESHSpace{}\textcolor[HTML]{204A87}{\ESHUnicodeSubstitution{\ESHMathSymbol{\Re}}}} to \ESHInline{\ESHBol{}sqrt\ESHSpace{}2} yields the following result:
\vspace{-0.7em}
\begin{center}
\begin{minipage}[t]{0.48\textwidth}
\begin{prompt}
\ESHInline{\ESHBol{}{>}\ESHSpace{}sqrt\ESHSpace{}2\ESHSpace{}:\ESHSpace{}\textcolor[HTML]{204A87}{\ESHUnicodeSubstitution{\ESHMathSymbol{\Re}}}}
\end{prompt}
\begin{repl}
\small
\ESHInline{\ESHBol{}[1.4142135619,\ESHSpace{}1.4142135624]}
\\ \ESHInline{\ESHBol{}[1.414213562370,\ESHSpace{}1.414213562384]}
\\ \ESHInline{\ESHBol{}[1.4142135623729,\ESHSpace{}1.4142135623733]}
\\ ...
\end{repl}
\end{minipage}
\begin{minipage}[t]{0.48\textwidth}
\begin{prompt}
\ESHInline{\ESHBol{}{>}\ESHSpace{}(sqrt\ESHSpace{}2)\ESHRaise{0.30}{2}\ESHSpace{}:\ESHSpace{}\textcolor[HTML]{204A87}{\ESHUnicodeSubstitution{\ESHMathSymbol{\Re}}}}
\end{prompt}
\begin{repl}
\small
\ESHInline{\ESHBol{}[1.9999999986,\ESHSpace{}2.0000000009]}
\\ \ESHInline{\ESHBol{}[1.999999999985,\ESHSpace{}2.000000000058]}
\\ \ESHInline{\ESHBol{}[1.9999999999991,\ESHSpace{}2.0000000000009]}
\\ ...
\end{repl}
\end{minipage}
\end{center}
\noindent
In this case, the squaring function squares each input interval to produce output intervals.
The computation is continuous in the sense that the computation of each interval result of \ESHInline{\ESHBol{}(sqrt\ESHSpace{}2)\ESHRaise{0.30}{2}} needs only an interval approximation of \ESHInline{\ESHBol{}sqrt\ESHSpace{}2}.
First-order functions such as \ESHInline{\ESHBol{}(\ESHDash{})\ESHRaise{0.30}{2}} are \emph{continuous maps},
meaning that in order to approximate the output to any finite level of precision,
it suffices to inspect the input to only a finite level of precision.

\paragraph{Implementing Higher-Order Primitives.}

The value-level definitions of higher-order primitives in \dlang{} are expressed in terms of mathematical operations for integration, optimization, and root finding.
It's not obvious that these are computable.
However, MarshallB \citep{icfp19} demonstrates how to endow a language with computable
implementations of Riemannian integration, maximization over compact sets, as well as a \emph{Dedekind cut} primitive that is essentially equivalent to the root finding of \ESHInline{\ESHBol{}cutRoot} and can be used to implement the root finding of \ESHInline{\ESHBol{}firstRoot}.
We were able to implement these MarshallB primitives in \clang{} and use them to implement the higher-order primitives in \dlang{}.

\vspace{-.15cm}
\paragraph{Haskell Implementation.}

We implemented \dlang{} as an embedded language within Haskell.
Because $\R^n$ and $\IR^n$ are representable within \cat{CTop},
we actually implement \cat{AD} directly using \cat{CTop} within Haskell,
rather than working internally to \clang{}.
We implement \cat{CTop} using an interval-arithmetic library that in turn uses MPFR \cite{mpfr}, a library for multi-precision floating-point arithmetic.
We include this implementation and all the code examples as supplementary material,
\paperonly{and will make it publicly available.}
\thesisonly{found at \url{github.com/bmsherman/phd-thesis-supplemental}.}
See the readme file for more information about the code.

\vspace{-.15cm}
\paragraph{Computability and Numerical Soundness.}

The semantics for \dlang{} supports a realistic machine model for computing real-valued results to arbitrary precision.
This is in contrast to semantics that permit Boolean-valued comparison of real numbers, and computational models like Real RAM, in which a machine can compare real numbers in constant time.
When algorithms are designed based on such models but implemented with floating-point arithmetic,
those implementations may fail to be robust to floating-point error (e.g., \citep{examplesrobustness}).
In contrast, the continuity inherent in \dlang{}'s semantics provides a robustness guarantee:
arbitrary-precision approximations of the output can be produced by inspecting only finite-precision approximations of the input.

\section{Higher-Order Datatypes and Libraries}
\label{smooth:applications}

This section demonstrates the unique expressivity and computability of \dlang{}.
We use the novel higher-order primitives available in \dlang{} -- including integration, optimization, and root-finding -- to build libraries for constructing and computing with three different higher-order datatypes: probability distributions (and measures), implicit surfaces, and generalized parametric surfaces.
Since these libraries are implemented in \dlang{}, they are differentiable (arbitrarily many times).
For each library, we compute an example differentiation task.
\Cref{fig:examples} shows a high-level overview of each example.
We now detail the implementation of each of the libraries and provide the implementations for each of the corresponding examples.
\TBD{Additional novelty/genericity claim about the libraries?}

\begin{figure}

\begin{subfigure}[t]{0.29\textwidth}
\begin{center}
  \begin{tikzpicture}
    \fill [black, opacity=0.2, domain=0:2, variable=\x]
      (0, 0)
      -- plot ({\x}, {1})
      -- (2, 0)
      -- cycle;

    \fill[black, opacity=0.1, domain=0:2, variable=\x]
      (0, 0)
      -- plot ({\x}, {1 + 0.4 * (\x - 1)})
      -- (2, 0)
      -- cycle;

    \draw [thick, ->] (-0,0)--(2.2,0) node[right, below] {$x$};
     \foreach \x in {0,...,1}
       \draw[xshift={2*\x cm}, thick] (0pt,1pt)--(0pt,-1pt) node[below] {$\x$};

    \draw [thick] (0,-0)--(0,1.6);
    \draw [thick] (2,-0)--(2,1.6);
     \foreach \y in {0,...,1}
       \draw[yshift=\y cm, thick] (-1pt,0pt)--(1pt,0pt) node[left] {$\y$};

    \draw [domain=0:2, variable=\x]
      plot ({\x}, {1});
    \draw[domain=0:2, variable=\x, opacity=0.3]
      plot ({\x}, {1 + 0.4 * (\x - 1)});

     \draw[thick, ->] (1.7, 1) -- (1.7, 1.28);
     \draw[thick, ->] (0.3, 1) -- (0.3, 0.72);
  \end{tikzpicture}
\end{center}
\caption{\emph{Probability distributions}: How does the mean and variance of the uniform distribution change as you weight its mass to tilt more towards higher values and away from lower values?}
\label{fig:probability}
\end{subfigure}
\hfill
\begin{subfigure}[t]{0.29\textwidth}
\begin{center}
\begin{tikzpicture}
\draw[fill=black] (1.3, 1) circle(0.2em);
\draw[fill=black] (0, 0) circle(0.2em);
\draw[thick, fill=black, fill opacity=0.1] (1.3, -3/4) circle (1);
\draw[dashed, thick, red] (0, 0) -- (0.63856, 0) -- (1.3, 1);
\draw[dashed, thick, red, opacity=0.3] (0, 0) -- (0.45, 0) -- (1.3, 1);
\draw[thick, fill=black, opacity=0.3, fill opacity=0.03] (1.3, -0.55) circle (1);
\draw[thick,->] (1.3, -3/4) -- (1.3, -0.55);
\end{tikzpicture}
\end{center}
\caption{\emph{Implicit surfaces}: A ray of light from a source above bounces off a circle before hitting a camera. How does the brightness change when the circle is moved up?}
\label{fig:ray-tracing}
\end{subfigure}
\hfill
\begin{subfigure}[t]{0.29\textwidth}
\begin{center}
\begin{tikzpicture}
\draw[thick, fill=none, fill opacity=0.4]  (1, -1) -- (1, 1) -- (-1, 1);
\draw (-1, -1) node[below] {0};
\draw (1, -1) node[below] {1};
\draw (-1, -1) node[left] {0};
\draw (-1, 1) node [left] {1};
\draw[thick, blue, fill=none, fill opacity=0.4] (1, -1) arc[start angle=0, end angle=90, radius=2];
\draw[thick, blue, opacity=0.3] (1, -0.8) arc[start angle=0, end angle=90, radius=2];
\draw[thick,->] (-0.8, 0.98) -- (-0.8, 1.18);
\draw[red, thick, dashed] (0.4142, 0.4142) -- (1,1);;
\draw[red, thick, dashed, opacity=0.3] (0.47, 0.555) -- (1,1);;
\end{tikzpicture}
\end{center}
\caption{\emph{Generalized parametric surfaces}: How does the Hausdorff distance between the quarter circle and the ``L'' shape change as the quarter circle is moved up?}
\label{fig:disk-square}
\end{subfigure}
\vspace{-.15cm}
\caption{Three example differentiation problems we will express and compute with libraries in \dlang{}.}
\label{fig:examples}
\vspace{-0.8em}
\end{figure}

\subsection{Probability Distributions (and Measures)}

Probability is central to many machine-learning applications.
Loss functions for Bayesian neural networks, GANs, \NA{etc.} involve expectations over probability distributions.\fTBD{Cite \cite{probability-functional-descent}?}
However, no previous work on the semantics of AD supports probability distributions\footnote{
While other works can represent expectations over distributions with finite support as sums, this would not work for distributions with infinite support. Loss functions frequently involve expectations over distributions with infinite support.
}.
The interaction between probabilistic choice and differentiation is nontrivial,
and the lack of a semantic treatment of their interaction has real consequences for machine-learning practitioners using AD libraries who seek to combine them.
Practitioners often use Monte Carlo sampling to approximate expectations,
but because derivatives cannot be propagated through the samplers in common frameworks such as PyTorch and TensorFlow,
code that \emph{looks} correct and produces appropriate approximations of its value-level output can end up producing incorrect derivatives when AD is applied (as mentioned in the introduction).
This common pitfall, which can be difficult to detect, necessitates \emph{the reparameterization trick}, where code is rewritten such that samplers do not depend on any parameters that are to be differentiated.

\dlang{} can represent a monad of probability distributions $\smooth{\mathcal{P}}$, making it the first language semantics to support differentiation through probabilistic choice, including through distributions such as the uniform distribution on the unit interval.\fTBD{MC: buried claim. This needs to be in introduction.}
Supporting probability distributions is hard because they must involve higher-order functions:
expectations are higher-order functions $\begin{smoothc}\mathcal{P}(A) \times (A \to \R) \to \R\end{smoothc}$, as is the monadic bind operator $\begin{smoothc}\mathcal{P}(A) \times (A \to \mathcal{P}(B)) \to \mathcal{P}(B)\end{smoothc}$ that supports compositional construction of complex probability distributions from simple ones.

\paragraph{A \dlang{} Library for Probability Distributions and Measures.}

Probability distributions, measures, and distributions (in the sense of generalized functions) can all be described as integrals,
\[
\ESHInline{\ESHBol{}\textcolor[HTML]{204A87}{type}\ESHSpace{}Integral\ESHSpace{}A\ESHSpace{}=\ESHSpace{}(A\ESHSpace{}\ESHUnicodeSubstitution{\ESHMathSymbol{\rightarrow}}\ESHSpace{}\textcolor[HTML]{204A87}{\ESHUnicodeSubstitution{\ESHMathSymbol{\Re}}})\ESHSpace{}\ESHUnicodeSubstitution{\ESHMathSymbol{\rightarrow}}\ESHSpace{}\textcolor[HTML]{204A87}{\ESHUnicodeSubstitution{\ESHMathSymbol{\Re}}}},
\]
detailed in \Cref{code:expecter}.
Integrals are functions $\smooth{i : (A \to \R) \to \R}$ which are linear in their arguments.
Measures are those integrals $\smooth{i}$ satisfying $\smooth{i(f) \ge 0}$ whenever $\smooth{f(x) \ge 0}$ for all $x \in \R$.
Probability distributions are those measures $\smooth{i}$ satisfying $\smooth{i(\lambda x.\ 1) = 1}$; the integral for a probability distribution computes the expectation of a real-valued function under that distribution.

\begin{figure}[htbp]
\small
\begin{ESHBlock}
\ESHBol{}\textcolor[HTML]{204A87}{type}\ESHSpace{}Integral\ESHSpace{}A\ESHSpace{}=\ESHSpace{}(A\ESHSpace{}\ESHUnicodeSubstitution{\ESHMathSymbol{\rightarrow}}\ESHSpace{}\textcolor[HTML]{204A87}{\ESHUnicodeSubstitution{\ESHMathSymbol{\Re}}})\ESHSpace{}\ESHUnicodeSubstitution{\ESHMathSymbol{\rightarrow}}\ESHSpace{}\textcolor[HTML]{204A87}{\ESHUnicodeSubstitution{\ESHMathSymbol{\Re}}}\ESHEol
\ESHBol{}\ESHEmptyLine{}\ESHEol
\ESHBol{}\textcolor[HTML]{346604}{let}\ESHSpace{}dirac\ESHSpace{}A\ESHSpace{}(x\ESHSpace{}:\ESHSpace{}A)\ESHSpace{}:\ESHSpace{}Integral\ESHSpace{}A\ESHSpace{}=\ESHSpace{}\textcolor[HTML]{346604}{\ESHUnicodeSubstitution{\ESHMathSymbol{\lambda}}}\ESHSpace{}f\ESHSpace{}:\ESHSpace{}A\ESHSpace{}\ESHUnicodeSubstitution{\ESHMathSymbol{\rightarrow}}\ESHSpace{}\textcolor[HTML]{204A87}{\ESHUnicodeSubstitution{\ESHMathSymbol{\Re}}}\ESHSpace{}\ESHUnicodeSubstitution{\ESHMathSymbol{\Rightarrow}}\ESHSpace{}f\ESHSpace{}x\ESHEol
\ESHBol{}\textcolor[HTML]{346604}{let}\ESHSpace{}bind\ESHSpace{}A\ESHSpace{}B\ESHSpace{}(x\ESHSpace{}:\ESHSpace{}Integral\ESHSpace{}A)\ESHSpace{}(f\ESHSpace{}:\ESHSpace{}A\ESHSpace{}\ESHUnicodeSubstitution{\ESHMathSymbol{\rightarrow}}\ESHSpace{}Integral\ESHSpace{}B)\ESHSpace{}:\ESHSpace{}Integral\ESHSpace{}B\ESHEol
\ESHBol{}\ESHSpace{}\ESHSpace{}=\ESHSpace{}\textcolor[HTML]{346604}{\ESHUnicodeSubstitution{\ESHMathSymbol{\lambda}}}\ESHSpace{}k\ESHSpace{}:\ESHSpace{}B\ESHSpace{}\ESHUnicodeSubstitution{\ESHMathSymbol{\rightarrow}}\ESHSpace{}\textcolor[HTML]{204A87}{\ESHUnicodeSubstitution{\ESHMathSymbol{\Re}}}\ESHSpace{}\ESHUnicodeSubstitution{\ESHMathSymbol{\Rightarrow}}\ESHSpace{}x\ESHSpace{}(\textcolor[HTML]{346604}{\ESHUnicodeSubstitution{\ESHMathSymbol{\lambda}}}\ESHSpace{}a\ESHSpace{}:\ESHSpace{}A\ESHSpace{}\ESHUnicodeSubstitution{\ESHMathSymbol{\Rightarrow}}\ESHSpace{}f\ESHSpace{}a\ESHSpace{}k)\ESHEol
\ESHBol{}\textcolor[HTML]{346604}{let}\ESHSpace{}zero\ESHSpace{}A\ESHSpace{}:\ESHSpace{}Integral\ESHSpace{}A\ESHSpace{}=\ESHSpace{}\textcolor[HTML]{346604}{\ESHUnicodeSubstitution{\ESHMathSymbol{\lambda}}}\ESHSpace{}f\ESHSpace{}:\ESHSpace{}A\ESHSpace{}\ESHUnicodeSubstitution{\ESHMathSymbol{\rightarrow}}\ESHSpace{}Real\ESHSpace{}\ESHUnicodeSubstitution{\ESHMathSymbol{\Rightarrow}}\ESHSpace{}0\ESHEol
\ESHBol{}\textcolor[HTML]{346604}{let}\ESHSpace{}add\ESHSpace{}A\ESHSpace{}(x\ESHSpace{}y\ESHSpace{}:\ESHSpace{}Integral\ESHSpace{}A)\ESHSpace{}:\ESHSpace{}Integral\ESHSpace{}A\ESHSpace{}=\ESHSpace{}\textcolor[HTML]{346604}{\ESHUnicodeSubstitution{\ESHMathSymbol{\lambda}}}\ESHSpace{}f\ESHSpace{}:\ESHSpace{}A\ESHSpace{}\ESHUnicodeSubstitution{\ESHMathSymbol{\rightarrow}}\ESHSpace{}\textcolor[HTML]{204A87}{\ESHUnicodeSubstitution{\ESHMathSymbol{\Re}}}\ESHSpace{}\ESHUnicodeSubstitution{\ESHMathSymbol{\Rightarrow}}\ESHSpace{}x\ESHSpace{}f\ESHSpace{}+\ESHSpace{}y\ESHSpace{}f\ESHEol
\ESHBol{}\textcolor[HTML]{346604}{let}\ESHSpace{}map\ESHSpace{}A\ESHSpace{}B\ESHSpace{}(f\ESHSpace{}:\ESHSpace{}A\ESHSpace{}\ESHUnicodeSubstitution{\ESHMathSymbol{\rightarrow}}\ESHSpace{}B)\ESHSpace{}(e\ESHSpace{}:\ESHSpace{}Integral\ESHSpace{}A)\ESHSpace{}:\ESHSpace{}Integral\ESHSpace{}B\ESHSpace{}=\ESHEol
\ESHBol{}\ESHSpace{}\ESHSpace{}\textcolor[HTML]{346604}{\ESHUnicodeSubstitution{\ESHMathSymbol{\lambda}}}\ESHSpace{}k\ESHSpace{}:\ESHSpace{}B\ESHSpace{}\ESHUnicodeSubstitution{\ESHMathSymbol{\rightarrow}}\ESHSpace{}\textcolor[HTML]{204A87}{\ESHUnicodeSubstitution{\ESHMathSymbol{\Re}}}\ESHSpace{}\ESHUnicodeSubstitution{\ESHMathSymbol{\Rightarrow}}\ESHSpace{}e\ESHSpace{}(\textcolor[HTML]{346604}{\ESHUnicodeSubstitution{\ESHMathSymbol{\lambda}}}\ESHSpace{}x\ESHSpace{}:\ESHSpace{}A\ESHSpace{}\ESHUnicodeSubstitution{\ESHMathSymbol{\Rightarrow}}\ESHSpace{}k\ESHSpace{}(f\ESHSpace{}x))\ESHEol
\ESHBol{}\textcolor[HTML]{346604}{let}\ESHSpace{}factor\ESHSpace{}(x\ESHSpace{}:\ESHSpace{}\textcolor[HTML]{204A87}{\ESHUnicodeSubstitution{\ESHMathSymbol{\Re}}})\ESHSpace{}:\ESHSpace{}Integral\ESHSpace{}unit\ESHSpace{}=\ESHSpace{}\textcolor[HTML]{346604}{\ESHUnicodeSubstitution{\ESHMathSymbol{\lambda}}}\ESHSpace{}f\ESHSpace{}:\ESHSpace{}unit\ESHSpace{}\ESHUnicodeSubstitution{\ESHMathSymbol{\rightarrow}}\ESHSpace{}\textcolor[HTML]{204A87}{\ESHUnicodeSubstitution{\ESHMathSymbol{\Re}}}\ESHSpace{}\ESHUnicodeSubstitution{\ESHMathSymbol{\Rightarrow}}\ESHSpace{}f\ESHSpace{}()\ESHSpace{}*\ESHSpace{}x\ESHEol
\ESHBol{}\textcolor[HTML]{346604}{let}\ESHSpace{}measToProb\ESHSpace{}A\ESHSpace{}(e\ESHSpace{}:\ESHSpace{}Integral\ESHSpace{}A)\ESHSpace{}:\ESHSpace{}Integral\ESHSpace{}A\ESHSpace{}=\ESHSpace{}\textcolor[HTML]{346604}{\ESHUnicodeSubstitution{\ESHMathSymbol{\lambda}}}\ESHSpace{}f\ESHSpace{}:\ESHSpace{}A\ESHSpace{}\ESHUnicodeSubstitution{\ESHMathSymbol{\rightarrow}}\ESHSpace{}\textcolor[HTML]{204A87}{\ESHUnicodeSubstitution{\ESHMathSymbol{\Re}}}\ESHSpace{}\ESHUnicodeSubstitution{\ESHMathSymbol{\Rightarrow}}\ESHSpace{}e\ESHSpace{}f\ESHSpace{}/\ESHSpace{}e\ESHSpace{}(\textcolor[HTML]{346604}{\ESHUnicodeSubstitution{\ESHMathSymbol{\lambda}}}\ESHSpace{}x\ESHSpace{}:\ESHSpace{}A\ESHSpace{}\ESHUnicodeSubstitution{\ESHMathSymbol{\Rightarrow}}\ESHSpace{}1)\ESHEol
\ESHBol{}\textcolor[HTML]{346604}{let}\ESHSpace{}bernoulli\ESHSpace{}(p\ESHSpace{}:\ESHSpace{}\textcolor[HTML]{204A87}{\ESHUnicodeSubstitution{\ESHMathSymbol{\Re}}})\ESHSpace{}:\ESHSpace{}Integral\ESHSpace{}\textcolor[HTML]{204A87}{\ESHUnicodeSubstitution{\ESHMathSymbol{\mathfrak{B}}}}\ESHSpace{}=\ESHSpace{}\textcolor[HTML]{346604}{\ESHUnicodeSubstitution{\ESHMathSymbol{\lambda}}}\ESHSpace{}f\ESHSpace{}:\ESHSpace{}\textcolor[HTML]{204A87}{\ESHUnicodeSubstitution{\ESHMathSymbol{\mathfrak{B}}}}\ESHSpace{}\ESHUnicodeSubstitution{\ESHMathSymbol{\rightarrow}}\ESHSpace{}\textcolor[HTML]{204A87}{\ESHUnicodeSubstitution{\ESHMathSymbol{\Re}}}\ESHSpace{}\ESHUnicodeSubstitution{\ESHMathSymbol{\Rightarrow}}\ESHSpace{}p\ESHSpace{}*\ESHSpace{}f\ESHSpace{}tt\ESHSpace{}+\ESHSpace{}(1\ESHSpace{}\ESHDash{}\ESHSpace{}p)\ESHSpace{}*\ESHSpace{}f\ESHSpace{}ff\ESHEol
\ESHBol{}\textcolor[HTML]{346604}{let}\ESHSpace{}uniform\ESHSpace{}:\ESHSpace{}Integral\ESHSpace{}\textcolor[HTML]{204A87}{\ESHUnicodeSubstitution{\ESHMathSymbol{\Re}}}\ESHSpace{}=\ESHSpace{}integral01\ESHEol
\ESHBol{}\ESHEmptyLine{}\ESHEol
\ESHBol{}\textcolor[HTML]{346604}{let}\ESHSpace{}total\_mass\ESHSpace{}A\ESHSpace{}(mu\ESHSpace{}:\ESHSpace{}Integral\ESHSpace{}A)\ESHSpace{}=\ESHSpace{}mu\ESHSpace{}(\textcolor[HTML]{346604}{\ESHUnicodeSubstitution{\ESHMathSymbol{\lambda}}}\ESHSpace{}x\ESHSpace{}:\ESHSpace{}A\ESHSpace{}\ESHUnicodeSubstitution{\ESHMathSymbol{\Rightarrow}}\ESHSpace{}1)\ESHEol
\ESHBol{}\textcolor[HTML]{346604}{let}\ESHSpace{}mean\ESHSpace{}(mu\ESHSpace{}:\ESHSpace{}Integral\ESHSpace{}\textcolor[HTML]{204A87}{\ESHUnicodeSubstitution{\ESHMathSymbol{\Re}}})\ESHSpace{}=\ESHSpace{}mu\ESHSpace{}(\textcolor[HTML]{346604}{\ESHUnicodeSubstitution{\ESHMathSymbol{\lambda}}}\ESHSpace{}x\ESHSpace{}:\ESHSpace{}\textcolor[HTML]{204A87}{\ESHUnicodeSubstitution{\ESHMathSymbol{\Re}}}\ESHSpace{}\ESHUnicodeSubstitution{\ESHMathSymbol{\Rightarrow}}\ESHSpace{}x)\ESHEol
\ESHBol{}\textcolor[HTML]{346604}{let}\ESHSpace{}variance\ESHSpace{}(mu\ESHSpace{}:\ESHSpace{}Integral\ESHSpace{}\textcolor[HTML]{204A87}{\ESHUnicodeSubstitution{\ESHMathSymbol{\Re}}})\ESHSpace{}=\ESHSpace{}mu\ESHSpace{}(\textcolor[HTML]{346604}{\ESHUnicodeSubstitution{\ESHMathSymbol{\lambda}}}\ESHSpace{}x\ESHSpace{}:\ESHSpace{}\textcolor[HTML]{204A87}{\ESHUnicodeSubstitution{\ESHMathSymbol{\Re}}}\ESHSpace{}\ESHUnicodeSubstitution{\ESHMathSymbol{\Rightarrow}}\ESHSpace{}(x\ESHSpace{}\ESHDash{}\ESHSpace{}mean\ESHSpace{}mu)\ESHRaise{0.30}{2})
\end{ESHBlock}
\codethensec
\caption{Integrals and \dlang{} programs that manipulate them.}
\label{code:expecter}
\vspace{-0.8em}
\end{figure}

\paragraph{Example.}

What happens if we make an infinitesimal perturbation to the uniform distribution as in \Cref{fig:probability}?
How will its mean and variance change?
Differentiation answers these questions.

The uniform distribution over the interval $[0, 1]$ is equivalent to the integral of $[0, 1]$, namely \ESHInline{\ESHBol{}uniform\ESHSpace{}:\ESHSpace{}Integral\ESHSpace{}\textcolor[HTML]{204A87}{\ESHUnicodeSubstitution{\ESHMathSymbol{\Re}}}\ESHSpace{}=\ESHSpace{}integral01}.
\NA{It satisfies $\int_0^1 1 dx = 1$ (as any probability distribution must),} and has mean $\int_0^1 x dx = 1/2$ and variance
$\int_0^1 (x - 1/2)^2 dx = 1/12$.

Next, we must craft a perturbation to consider. There is an isomorphism \ESHInline{\ESHBol{}Tan\ESHSpace{}(Integral\ESHSpace{}A)\ESHSpace{}\ESHUnicodeSubstitution{\ESHMathSymbol{\cong}}\ESHSpace{}Integral\ESHSpace{}A\ESHSpace{}*\ESHSpace{}Integral\ESHSpace{}A}, which says that a perturbation to an integral itself has the form of an integral as well. Hence, our \NA{perturbation must also be an integral}.
In addition, because we are perturbing a probability distribution, whose total mass must sum to 1, the total mass of our perturbation must be 0: if we are to increase mass somewhere, we must decrease it elsewhere.
Given these design considerations, consider the following perturbation to the uniform distribution that makes 1 more likely, 0 less likely, 1/2 equally likely as before, and interpolates between these:\footnote{\Cref{fig:probability} shows a schematic of this perturbation.}
\begin{ESHBlock}
\ESHBol{}\textcolor[HTML]{346604}{let}\ESHSpace{}change\ESHSpace{}:\ESHSpace{}Integral\ESHSpace{}\textcolor[HTML]{204A87}{\ESHUnicodeSubstitution{\ESHMathSymbol{\Re}}}\ESHSpace{}=\ESHSpace{}\textcolor[HTML]{346604}{\ESHUnicodeSubstitution{\ESHMathSymbol{\lambda}}}\ESHSpace{}f\ESHSpace{}:\ESHSpace{}\textcolor[HTML]{204A87}{\ESHUnicodeSubstitution{\ESHMathSymbol{\Re}}}\ESHSpace{}\ESHUnicodeSubstitution{\ESHMathSymbol{\rightarrow}}\ESHSpace{}\textcolor[HTML]{204A87}{\ESHUnicodeSubstitution{\ESHMathSymbol{\Re}}}\ESHSpace{}\ESHUnicodeSubstitution{\ESHMathSymbol{\Rightarrow}}\ESHSpace{}integral01\ESHSpace{}(\textcolor[HTML]{346604}{\ESHUnicodeSubstitution{\ESHMathSymbol{\lambda}}}\ESHSpace{}x\ESHSpace{}:\ESHSpace{}\textcolor[HTML]{204A87}{\ESHUnicodeSubstitution{\ESHMathSymbol{\Re}}}\ESHSpace{}\ESHUnicodeSubstitution{\ESHMathSymbol{\Rightarrow}}\ESHSpace{}(x\ESHSpace{}\ESHDash{}\ESHSpace{}1/2)\ESHSpace{}*\ESHSpace{}f\ESHSpace{}x)
\end{ESHBlock}
The perturbation is an integral with total mass 0: $\int_0^1 (x-1/2) dx = 0$.

Returning to our question of how this perturbation changes the mean and variance of \ESHInline{\ESHBol{}uniform}, for convenience let \ESHInline{\ESHBol{}der\ESHSpace{}:\ESHSpace{}(Integral\ESHSpace{}A\ESHSpace{}\ESHUnicodeSubstitution{\ESHMathSymbol{\rightarrow}}\ESHSpace{}\textcolor[HTML]{204A87}{\ESHUnicodeSubstitution{\ESHMathSymbol{\Re}}})\ESHSpace{}\ESHUnicodeSubstitution{\ESHMathSymbol{\rightarrow}}\ESHSpace{}Integral\ESHSpace{}A\ESHSpace{}\ESHUnicodeSubstitution{\ESHMathSymbol{\rightarrow}}\ESHSpace{}Integral\ESHSpace{}A\ESHSpace{}\ESHUnicodeSubstitution{\ESHMathSymbol{\rightarrow}}\ESHSpace{}\textcolor[HTML]{204A87}{\ESHUnicodeSubstitution{\ESHMathSymbol{\Re}}}} compute the derivative of its argument at a point and infinitesimal perturbation, using the appropriate coercions and projections to and from tangent spaces.\footnote{\ESHInline{\ESHBol{}\textcolor[HTML]{346604}{let}\ESHSpace{}der\ESHSpace{}f\ESHSpace{}x\ESHSpace{}dx\ESHSpace{}=\ESHSpace{}snd\ESHSpace{}(tangetTo\_R.to\ESHSpace{}(tangent\ESHSpace{}f\ESHSpace{}(tangetTo\_R.from\ESHSpace{}(x,\ESHSpace{}dx))))}}
Since \ESHInline{\ESHBol{}mean} is linear, its derivative is independent of the current value and is just the original \ESHInline{\ESHBol{}mean} function applied to the infinitesimal perturbation:
\begin{ESHBlock}
\ESHBol{}der\ESHSpace{}mean\ESHSpace{}uniform\ESHSpace{}change\ESHEol
\ESHBol{}\ESHSpace{}\ESHSpace{}=\ESHSpace{}mean\ESHSpace{}change\ESHEol
\ESHBol{}\ESHSpace{}\ESHSpace{}=\ESHSpace{}integral01\ESHSpace{}(\textcolor[HTML]{346604}{\ESHUnicodeSubstitution{\ESHMathSymbol{\lambda}}}\ESHSpace{}x\ESHSpace{}:\ESHSpace{}\textcolor[HTML]{204A87}{\ESHUnicodeSubstitution{\ESHMathSymbol{\Re}}}\ESHSpace{}\ESHUnicodeSubstitution{\ESHMathSymbol{\Rightarrow}}\ESHSpace{}(x\ESHSpace{}\ESHDash{}\ESHSpace{}1/2)\ESHSpace{}*\ESHSpace{}x)\ESHEol
\ESHBol{}\ESHSpace{}\ESHSpace{}=\ESHSpace{}1/12
\end{ESHBlock}
And indeed, that's what we compute:

\vspace{-.5em}
\begin{center}
\begin{minipage}[t]{0.88\textwidth}
\begin{prompt}
\ESHInline{\ESHBol{}eps=1e\ESHDash{}3{>}\ESHSpace{}der\ESHSpace{}mean\ESHSpace{}uniform\ESHSpace{}change}
\end{prompt}
\begin{repl}
\ESHInline{\ESHBol{}[0.0829,\ESHSpace{}0.0837]}
\end{repl}
\end{minipage}
\end{center}
\vspace{.25em}
However, \ESHInline{\ESHBol{}variance} is nonlinear, so its derivative does depend on the current point.
Let's compute it and then reason about the answer:
\vspace{-.5em}
\begin{center}
\begin{minipage}[t]{0.88\textwidth}
\begin{prompt}
\ESHInline{\ESHBol{}eps=1e\ESHDash{}2{>}\ESHSpace{}der\ESHSpace{}variance\ESHSpace{}uniform\ESHSpace{}change}
\end{prompt}
\begin{repl}
\ESHInline{\ESHBol{}[\ESHDash{}0.005,\ESHSpace{}0.004]}
\end{repl}
\end{minipage}
\end{center}
\vspace{.25em}
We can reason about the change in the variance with the laws about derivatives, just as we would in first-order cases:
\begin{ESHBlock}
\ESHBol{}der\ESHSpace{}variance\ESHSpace{}uniform\ESHSpace{}change\ESHEol
\ESHBol{}\ESHSpace{}\ESHSpace{}=\ESHSpace{}der\ESHSpace{}(\textcolor[HTML]{346604}{\ESHUnicodeSubstitution{\ESHMathSymbol{\lambda}}}\ESHSpace{}mu\ESHSpace{}:\ESHSpace{}Integral\ESHSpace{}\textcolor[HTML]{204A87}{\ESHUnicodeSubstitution{\ESHMathSymbol{\Re}}}\ESHSpace{}\ESHUnicodeSubstitution{\ESHMathSymbol{\Rightarrow}}\ESHSpace{}mu\ESHSpace{}(\textcolor[HTML]{346604}{\ESHUnicodeSubstitution{\ESHMathSymbol{\lambda}}}\ESHSpace{}x\ESHSpace{}:\ESHSpace{}\textcolor[HTML]{204A87}{\ESHUnicodeSubstitution{\ESHMathSymbol{\Re}}}\ESHSpace{}\ESHUnicodeSubstitution{\ESHMathSymbol{\Rightarrow}}\ESHSpace{}x\ESHRaise{0.30}{2})\ESHSpace{}\ESHDash{}\ESHSpace{}(mean\ESHSpace{}mu)\ESHRaise{0.30}{2})\ESHSpace{}uniform\ESHSpace{}change\ESHEol
\ESHBol{}\ESHSpace{}\ESHSpace{}=\ESHSpace{}change\ESHSpace{}(\textcolor[HTML]{346604}{\ESHUnicodeSubstitution{\ESHMathSymbol{\lambda}}}\ESHSpace{}x\ESHSpace{}:\ESHSpace{}\textcolor[HTML]{204A87}{\ESHUnicodeSubstitution{\ESHMathSymbol{\Re}}}\ESHSpace{}\ESHUnicodeSubstitution{\ESHMathSymbol{\Rightarrow}}\ESHSpace{}x\ESHRaise{0.30}{2})\ESHSpace{}\ESHDash{}\ESHSpace{}2\ESHSpace{}*\ESHSpace{}mean\ESHSpace{}uniform\ESHSpace{}*\ESHSpace{}mean\ESHSpace{}change\ESHEol
\ESHBol{}\ESHSpace{}\ESHSpace{}=\ESHSpace{}integral01\ESHSpace{}(\textcolor[HTML]{346604}{\ESHUnicodeSubstitution{\ESHMathSymbol{\lambda}}}\ESHSpace{}x\ESHSpace{}:\ESHSpace{}\textcolor[HTML]{204A87}{\ESHUnicodeSubstitution{\ESHMathSymbol{\Re}}}\ESHSpace{}\ESHUnicodeSubstitution{\ESHMathSymbol{\Rightarrow}}\ESHSpace{}(x\ESHDash{}1/2)*x\ESHRaise{0.30}{2})\ESHSpace{}\ESHDash{}\ESHSpace{}2\ESHSpace{}*\ESHSpace{}1/2\ESHSpace{}*\ESHSpace{}1/12\ESHEol
\ESHBol{}\ESHSpace{}\ESHSpace{}=\ESHSpace{}1/12\ESHSpace{}\ESHDash{}\ESHSpace{}1/12\ESHEol
\ESHBol{}\ESHSpace{}\ESHSpace{}=\ESHSpace{}0
\end{ESHBlock}
So it turns out that this infinitesimal perturbation will actually not change the variance.

\Comment{
In the previous case, we chose an infinitesimal perturbation \ESHInline{\ESHBol{}change} that can be viewed as a \emph{signed measure}, meaning that it is the difference of two measures.
However, not all perturbations of measures or probability distributions are of this form.
Let's consider the derivative of the Dirac delta function \ESHInline{\ESHBol{}dirac}.
Letting \ESHInline{\ESHBol{}der\ESHSpace{}:\ESHSpace{}(A\ESHSpace{}\ESHUnicodeSubstitution{\ESHMathSymbol{\rightarrow}}\ESHSpace{}Integral\ESHSpace{}A)\ESHSpace{}\ESHUnicodeSubstitution{\ESHMathSymbol{\rightarrow}}\ESHSpace{}Tan\ESHSpace{}A\ESHSpace{}\ESHUnicodeSubstitution{\ESHMathSymbol{\rightarrow}}\ESHSpace{}Integral\ESHSpace{}A} produce the derivative by the appropriate coercion and projection, we have
\begin{ESHBlock}
\ESHBol{}der\ESHSpace{}dirac\ESHSpace{}(xdx\ESHSpace{}:\ESHSpace{}Tan\ESHSpace{}A)\ESHSpace{}:\ESHSpace{}Integral\ESHSpace{}A\ESHSpace{}=\ESHEol
\ESHBol{}\ESHSpace{}\ESHSpace{}\textcolor[HTML]{346604}{\ESHUnicodeSubstitution{\ESHMathSymbol{\lambda}}}\ESHSpace{}f\ESHSpace{}:\ESHSpace{}A\ESHSpace{}\ESHUnicodeSubstitution{\ESHMathSymbol{\rightarrow}}\ESHSpace{}\textcolor[HTML]{204A87}{\ESHUnicodeSubstitution{\ESHMathSymbol{\Re}}}\ESHSpace{}\ESHUnicodeSubstitution{\ESHMathSymbol{\Rightarrow}}\ESHSpace{}\textcolor[HTML]{346604}{let}\ESHSpace{}(y,\ESHSpace{}dy)\ESHSpace{}=\ESHSpace{}tangent\ESHSpace{}f\ESHSpace{}xdx\ESHSpace{}\textcolor[HTML]{346604}{in}\ESHSpace{}dy
\end{ESHBlock}
Unlike previous infinitesimal perturbations we've seen, this one is a linear operator that differentiates its argument \ESHInline{\ESHBol{}f}.
}

\TBD{Conclusion?}

\subsection{Implicit Surfaces and Root-Finding}
\label{smooth-implicit-surfaces}

\begin{figure}
\small
\begin{ESHBlock}
\ESHBol{}\textcolor[HTML]{204A87}{type}\ESHSpace{}Surface\ESHSpace{}A\ESHSpace{}=\ESHSpace{}A\ESHSpace{}\ESHUnicodeSubstitution{\ESHMathSymbol{\rightarrow}}\ESHSpace{}\textcolor[HTML]{204A87}{\ESHUnicodeSubstitution{\ESHMathSymbol{\Re}}}\ESHEol
\ESHBol{}\ESHEmptyLine{}\ESHEol
\ESHBol{}\textcolor[HTML]{346604}{let}\ESHSpace{}circle\ESHSpace{}(c\ESHSpace{}:\ESHSpace{}\textcolor[HTML]{204A87}{\ESHUnicodeSubstitution{\ESHMathSymbol{\Re}}}\ESHRaise{0.30}{2})\ESHSpace{}(r\ESHSpace{}:\ESHSpace{}\textcolor[HTML]{204A87}{\ESHUnicodeSubstitution{\ESHMathSymbol{\Re}}})\ESHSpace{}:\ESHSpace{}Surface\ESHSpace{}(\textcolor[HTML]{204A87}{\ESHUnicodeSubstitution{\ESHMathSymbol{\Re}}}\ESHRaise{0.30}{2})\ESHSpace{}=\ESHEol
\ESHBol{}\ESHSpace{}\ESHSpace{}\textcolor[HTML]{346604}{\ESHUnicodeSubstitution{\ESHMathSymbol{\lambda}}}\ESHSpace{}x\ESHSpace{}:\ESHSpace{}\textcolor[HTML]{204A87}{\ESHUnicodeSubstitution{\ESHMathSymbol{\Re}}}\ESHRaise{0.30}{2}\ESHSpace{}\ESHUnicodeSubstitution{\ESHMathSymbol{\Rightarrow}}\ESHSpace{}r\ESHRaise{0.30}{2}\ESHSpace{}\ESHDash{}\ESHSpace{}(x[0]\ESHSpace{}\ESHDash{}\ESHSpace{}c[0])\ESHRaise{0.30}{2}\ESHSpace{}\ESHDash{}\ESHSpace{}(x[1]\ESHSpace{}\ESHDash{}\ESHSpace{}c[1])\ESHRaise{0.30}{2}\ESHEol
\ESHBol{}\textcolor[HTML]{346604}{let}\ESHSpace{}halfplane\ESHSpace{}A\ESHSpace{}(normal\ESHSpace{}:\ESHSpace{}\textcolor[HTML]{204A87}{\ESHUnicodeSubstitution{\ESHMathSymbol{\Re}}}\ESHRaise{0.30}{2})\ESHSpace{}:\ESHSpace{}Surface\ESHSpace{}(\textcolor[HTML]{204A87}{\ESHUnicodeSubstitution{\ESHMathSymbol{\Re}}}\ESHRaise{0.30}{2})\ESHSpace{}=\ESHSpace{}\textcolor[HTML]{346604}{\ESHUnicodeSubstitution{\ESHMathSymbol{\lambda}}}\ESHSpace{}x\ESHSpace{}:\ESHSpace{}\textcolor[HTML]{204A87}{\ESHUnicodeSubstitution{\ESHMathSymbol{\Re}}}\ESHRaise{0.30}{2}\ESHSpace{}\ESHUnicodeSubstitution{\ESHMathSymbol{\Rightarrow}}\ESHSpace{}dot\ESHSpace{}normal\ESHSpace{}x\ESHEol
\ESHBol{}\textcolor[HTML]{346604}{let}\ESHSpace{}union\ESHSpace{}A\ESHSpace{}(s\ESHSpace{}s{'}\ESHSpace{}:\ESHSpace{}Surface\ESHSpace{}A)\ESHSpace{}:\ESHSpace{}Surface\ESHSpace{}A\ESHSpace{}=\ESHSpace{}\textcolor[HTML]{346604}{\ESHUnicodeSubstitution{\ESHMathSymbol{\lambda}}}\ESHSpace{}x\ESHSpace{}:\ESHSpace{}A\ESHSpace{}\ESHUnicodeSubstitution{\ESHMathSymbol{\Rightarrow}}\ESHSpace{}max\ESHSpace{}(s\ESHSpace{}x)\ESHSpace{}(s{'}\ESHSpace{}x)\ESHEol
\ESHBol{}\textcolor[HTML]{346604}{let}\ESHSpace{}intersection\ESHSpace{}A\ESHSpace{}(s\ESHSpace{}s{'}\ESHSpace{}:\ESHSpace{}Surface\ESHSpace{}A)\ESHSpace{}:\ESHSpace{}Surface\ESHSpace{}A\ESHSpace{}=\ESHSpace{}\ESHSpace{}\textcolor[HTML]{346604}{\ESHUnicodeSubstitution{\ESHMathSymbol{\lambda}}}\ESHSpace{}x\ESHSpace{}:\ESHSpace{}A\ESHSpace{}\ESHUnicodeSubstitution{\ESHMathSymbol{\Rightarrow}}\ESHSpace{}min\ESHSpace{}(s\ESHSpace{}x)\ESHSpace{}(s{'}\ESHSpace{}x)\ESHEol
\ESHBol{}\textcolor[HTML]{346604}{let}\ESHSpace{}complement\ESHSpace{}A\ESHSpace{}(s\ESHSpace{}:\ESHSpace{}Surface\ESHSpace{}A)\ESHSpace{}:\ESHSpace{}Surface\ESHSpace{}A\ESHSpace{}=\ESHSpace{}\textcolor[HTML]{346604}{\ESHUnicodeSubstitution{\ESHMathSymbol{\lambda}}}\ESHSpace{}x\ESHSpace{}:\ESHSpace{}A\ESHSpace{}\ESHUnicodeSubstitution{\ESHMathSymbol{\Rightarrow}}\ESHSpace{}\ESHDash{}\ESHSpace{}(s\ESHSpace{}x)
\end{ESHBlock}
\codethensec
\caption{A \dlang{} library for implicit surfaces.}
\label{code:surfaces}
\vspace{-0.8em}
\end{figure}

\Cref{dlang-intro} and \Cref{fig:diff-ray-tracing} presented a library for implicit surfaces and a function for performing ray tracing on scenes represented by implicit surfaces.

\Cref{code:surfaces} presents a library for constructing implicit surfaces.
An \emph{implicit surface} is a representation of a surface (such as a sphere or plane) with the zero-set of a differentiable function $\smooth{f : \R^n \to \R}$ (where usually we consider $n=3$ for 3-dimensional space).
Whether $f(x, y)$ is positive, negative, or zero indicates whether $(x, y)$ is inside, outside, or on the border of the surface, respectively.
The angle at which a ray deflects is determined by the \emph{surface normal} at the location where the ray hits the surface, which is the vector that is orthogonal to the plane that is tangent to the surface.

In \dlang{}, we can represent implicit surfaces as \ESHInline{\ESHBol{}\textcolor[HTML]{204A87}{type}\ESHSpace{}Surface\ESHSpace{}A\ESHSpace{}=\ESHSpace{}A\ESHSpace{}\ESHUnicodeSubstitution{\ESHMathSymbol{\rightarrow}}\ESHSpace{}\textcolor[HTML]{204A87}{\ESHUnicodeSubstitution{\ESHMathSymbol{\Re}}}}.
\Cref{code:surfaces} presents a small library for constructing implicit surfaces.
The Boolean operations of Constructive Solid Geometry (CSG) -- union, intersection, and complement -- are available for these implicit surfaces.
Because \dlang{} permits nonsmooth functions, it is able to represent implicit surfaces that don't necessarily correspond to manifolds, such as the union of two spheres that are offset and equally sized.
Where they touch, there is a corner, and thus there is no (unique) surface normal.

Our smooth ray tracer, shown in \Cref{code:raytracer}, renders the image of an implicit surface with a single light source and a Lambertian reflectance model, computing the angle at which light reflects off of the surface using automatic differentiation.
The code in \Cref{code:raytracer} reflects the contributions of \citet{niemeyer2019differentiable}, who use a differentiable ray-tracing renderer to learn implicit 3D representations of surfaces, noting their ``key insight is that depth gradients can be derived analytically using the concept of implicit differentiation.''

We can implement a smooth (and thus differentiable) ray tracer for implicit surfaces in \dlang{} in just a few lines of code, and the use of implicit differentiation automatically falls out.

\subsection{Generalized Parametric Surfaces and Optimization}
\label{parametric-surfaces}

We now build a library within \dlang{} for constructing shapes and computing operations on them.
For instance, we can represent the quarter disk and unit square in \Cref{fig:disk-square} as shapes and compute the Hausdorff distance between them, which equals $\sqrt{2} - 1$, as:
\vspace{-.5em}
\begin{center}
\begin{minipage}[t]{0.88\textwidth}
\begin{prompt}
\ESHInline{\ESHBol{}eps=1e\ESHDash{}3{>}\ESHSpace{}hausdorffDist\ESHSpace{}R2Dist\ESHSpace{}lShape\ESHSpace{}(quarterCircle\ESHSpace{}0)}
\end{prompt}
\begin{repl}
\ESHInline{\ESHBol{}[0.4138,\ESHSpace{}0.4145]}
\end{repl}
\end{minipage}
\end{center}
\vspace{.25em}

We can also compute derivatives, such as the infinitesimal perturbation in the Hausdorff distance that would result if the quarter circle were to infinitesimally move up by a unit magnitude:
\begin{center}
\begin{prompt}
\ESHInline{\ESHBol{}eps=1e\ESHDash{}1{>}\ESHSpace{}deriv\ESHSpace{}(\textcolor[HTML]{346604}{\ESHUnicodeSubstitution{\ESHMathSymbol{\lambda}}}\ESHSpace{}y\ESHSpace{}:\ESHSpace{}\textcolor[HTML]{204A87}{\ESHUnicodeSubstitution{\ESHMathSymbol{\Re}}}\ESHSpace{}\ESHUnicodeSubstitution{\ESHMathSymbol{\Rightarrow}}\ESHSpace{}hausdorffDist\ESHSpace{}R2Dist\ESHSpace{}lShape\ESHSpace{}(quarterCircle\ESHSpace{}y))\ESHSpace{}0}
\end{prompt}
\begin{repl}
\ESHInline{\ESHBol{}[\ESHDash{}0.752,\ESHSpace{}\ESHDash{}0.664]}
\end{repl}
\end{center}

This application is admittedly more speculative in its practical applications, but it demonstrates a novel domain in which we can define and compute derivatives.
We will now explain how this library for shapes works.

We represent these generalized parametric surfaces as \emph{maximizers}, represented in $\lambda_S$ as
\[
\ESHInline{\ESHBol{}\textcolor[HTML]{204A87}{type}\ESHSpace{}Maximizer\ESHSpace{}A\ESHSpace{}=\ESHSpace{}(A\ESHSpace{}\ESHUnicodeSubstitution{\ESHMathSymbol{\rightarrow}}\ESHSpace{}\textcolor[HTML]{204A87}{\ESHUnicodeSubstitution{\ESHMathSymbol{\Re}}})\ESHSpace{}\ESHUnicodeSubstitution{\ESHMathSymbol{\rightarrow}}\ESHSpace{}\textcolor[HTML]{204A87}{\ESHUnicodeSubstitution{\ESHMathSymbol{\Re}}}}.
\]
Maximizers are functions $F : (A \to \R) \to \R$ that satisfy the algebraic laws $F(\lambda x : A.\ k) = k$ for all $k \in \R$ and $F(\lambda x : A.\ \max(f(x), g(x))) = \max(F(f), F(g))$ (analogously to how integrals are functions that satisfy the algebraic laws of linearity).
A generalized parametric surface \ESHInline{\ESHBol{}k\ESHSpace{}:\ESHSpace{}Maximizer\ESHSpace{}A}, when applied to a function \ESHInline{\ESHBol{}f\ESHSpace{}:\ESHSpace{}A\ESHSpace{}\ESHUnicodeSubstitution{\ESHMathSymbol{\rightarrow}}\ESHSpace{}\textcolor[HTML]{204A87}{\ESHUnicodeSubstitution{\ESHMathSymbol{\Re}}}}, returns the maximum value that \ESHInline{\ESHBol{}f} attains on the region represented by \ESHInline{\ESHBol{}k}.

\begin{figure}[htbp]
\small
\begin{ESHBlock}
\ESHBol{}\textcolor[HTML]{204A87}{type}\ESHSpace{}Maximizer\ESHSpace{}A\ESHSpace{}=\ESHSpace{}(A\ESHSpace{}\ESHUnicodeSubstitution{\ESHMathSymbol{\rightarrow}}\ESHSpace{}\textcolor[HTML]{204A87}{\ESHUnicodeSubstitution{\ESHMathSymbol{\Re}}})\ESHSpace{}\ESHUnicodeSubstitution{\ESHMathSymbol{\rightarrow}}\ESHSpace{}\textcolor[HTML]{204A87}{\ESHUnicodeSubstitution{\ESHMathSymbol{\Re}}}\ESHEol
\ESHBol{}\textcolor[HTML]{346604}{let}\ESHSpace{}point\ESHSpace{}A\ESHSpace{}(x\ESHSpace{}:\ESHSpace{}A)\ESHSpace{}:\ESHSpace{}Maximizer\ESHSpace{}A\ESHSpace{}=\ESHSpace{}\textcolor[HTML]{346604}{\ESHUnicodeSubstitution{\ESHMathSymbol{\lambda}}}\ESHSpace{}f\ESHSpace{}:\ESHSpace{}A\ESHSpace{}\ESHUnicodeSubstitution{\ESHMathSymbol{\rightarrow}}\ESHSpace{}\textcolor[HTML]{204A87}{\ESHUnicodeSubstitution{\ESHMathSymbol{\Re}}}\ESHSpace{}\ESHUnicodeSubstitution{\ESHMathSymbol{\Rightarrow}}\ESHSpace{}f\ESHSpace{}x\ESHEol
\ESHBol{}\textcolor[HTML]{346604}{let}\ESHSpace{}indexedUnion\ESHSpace{}A\ESHSpace{}B\ESHSpace{}(ka\ESHSpace{}:\ESHSpace{}Maximizer\ESHSpace{}A)\ESHSpace{}(kb\ESHSpace{}:\ESHSpace{}A\ESHSpace{}\ESHUnicodeSubstitution{\ESHMathSymbol{\rightarrow}}\ESHSpace{}Maximizer\ESHSpace{}B)\ESHSpace{}:\ESHSpace{}Maximizer\ESHSpace{}B\ESHSpace{}=\ESHEol
\ESHBol{}\ESHSpace{}\ESHSpace{}\textcolor[HTML]{346604}{\ESHUnicodeSubstitution{\ESHMathSymbol{\lambda}}}\ESHSpace{}f\ESHSpace{}:\ESHSpace{}B\ESHSpace{}\ESHUnicodeSubstitution{\ESHMathSymbol{\rightarrow}}\ESHSpace{}\textcolor[HTML]{204A87}{\ESHUnicodeSubstitution{\ESHMathSymbol{\Re}}}\ESHSpace{}\ESHUnicodeSubstitution{\ESHMathSymbol{\Rightarrow}}\ESHSpace{}ka\ESHSpace{}(\textcolor[HTML]{346604}{\ESHUnicodeSubstitution{\ESHMathSymbol{\lambda}}}\ESHSpace{}a\ESHSpace{}:\ESHSpace{}A\ESHSpace{}\ESHUnicodeSubstitution{\ESHMathSymbol{\Rightarrow}}\ESHSpace{}kb\ESHSpace{}a\ESHSpace{}f)\ESHEol
\ESHBol{}\textcolor[HTML]{346604}{let}\ESHSpace{}union\ESHSpace{}A\ESHSpace{}(k1\ESHSpace{}k2\ESHSpace{}:\ESHSpace{}Maximizer\ESHSpace{}A)\ESHSpace{}:\ESHSpace{}Maximizer\ESHSpace{}A\ESHSpace{}=\ESHEol
\ESHBol{}\ESHSpace{}\ESHSpace{}\textcolor[HTML]{346604}{\ESHUnicodeSubstitution{\ESHMathSymbol{\lambda}}}\ESHSpace{}f\ESHSpace{}:\ESHSpace{}A\ESHSpace{}\ESHUnicodeSubstitution{\ESHMathSymbol{\rightarrow}}\ESHSpace{}\textcolor[HTML]{204A87}{\ESHUnicodeSubstitution{\ESHMathSymbol{\Re}}}\ESHSpace{}\ESHUnicodeSubstitution{\ESHMathSymbol{\Rightarrow}}\ESHSpace{}max\ESHSpace{}(k1\ESHSpace{}f)\ESHSpace{}(k2\ESHSpace{}f)\ESHEol
\ESHBol{}\textcolor[HTML]{346604}{let}\ESHSpace{}map\ESHSpace{}A\ESHSpace{}B\ESHSpace{}(g\ESHSpace{}:\ESHSpace{}A\ESHSpace{}\ESHUnicodeSubstitution{\ESHMathSymbol{\rightarrow}}\ESHSpace{}B)\ESHSpace{}(k\ESHSpace{}:\ESHSpace{}Maximizer\ESHSpace{}A)\ESHSpace{}:\ESHSpace{}Maximizer\ESHSpace{}B\ESHSpace{}=\ESHEol
\ESHBol{}\ESHSpace{}\ESHSpace{}\textcolor[HTML]{346604}{\ESHUnicodeSubstitution{\ESHMathSymbol{\lambda}}}\ESHSpace{}f\ESHSpace{}:\ESHSpace{}B\ESHSpace{}\ESHUnicodeSubstitution{\ESHMathSymbol{\rightarrow}}\ESHSpace{}\textcolor[HTML]{204A87}{\ESHUnicodeSubstitution{\ESHMathSymbol{\Re}}}\ESHSpace{}\ESHUnicodeSubstitution{\ESHMathSymbol{\Rightarrow}}\ESHSpace{}k\ESHSpace{}(\textcolor[HTML]{346604}{\ESHUnicodeSubstitution{\ESHMathSymbol{\lambda}}}\ESHSpace{}a\ESHSpace{}:\ESHSpace{}\textcolor[HTML]{204A87}{\ESHUnicodeSubstitution{\ESHMathSymbol{\Re}}}\ESHSpace{}\ESHUnicodeSubstitution{\ESHMathSymbol{\Rightarrow}}\ESHSpace{}f\ESHSpace{}(g\ESHSpace{}a))\ESHEol
\ESHBol{}\textcolor[HTML]{346604}{let}\ESHSpace{}sup\ESHSpace{}A\ESHSpace{}(k\ESHSpace{}:\ESHSpace{}Maximizer\ESHSpace{}A)\ESHSpace{}(f\ESHSpace{}:\ESHSpace{}A\ESHSpace{}\ESHUnicodeSubstitution{\ESHMathSymbol{\rightarrow}}\ESHSpace{}\textcolor[HTML]{204A87}{\ESHUnicodeSubstitution{\ESHMathSymbol{\Re}}})\ESHSpace{}:\ESHSpace{}\textcolor[HTML]{204A87}{\ESHUnicodeSubstitution{\ESHMathSymbol{\Re}}}\ESHSpace{}=\ESHSpace{}k\ESHSpace{}f\ESHEol
\ESHBol{}\textcolor[HTML]{346604}{let}\ESHSpace{}inf\ESHSpace{}A\ESHSpace{}(k\ESHSpace{}:\ESHSpace{}Maximizer\ESHSpace{}A)\ESHSpace{}(f\ESHSpace{}:\ESHSpace{}A\ESHSpace{}\ESHUnicodeSubstitution{\ESHMathSymbol{\rightarrow}}\ESHSpace{}\textcolor[HTML]{204A87}{\ESHUnicodeSubstitution{\ESHMathSymbol{\Re}}})\ESHSpace{}:\ESHSpace{}\textcolor[HTML]{204A87}{\ESHUnicodeSubstitution{\ESHMathSymbol{\Re}}}\ESHSpace{}=\ESHSpace{}\ESHDash{}\ESHSpace{}k\ESHSpace{}(\textcolor[HTML]{346604}{\ESHUnicodeSubstitution{\ESHMathSymbol{\lambda}}}\ESHSpace{}x\ESHSpace{}:\ESHSpace{}A\ESHSpace{}\ESHUnicodeSubstitution{\ESHMathSymbol{\Rightarrow}}\ESHSpace{}\ESHDash{}\ESHSpace{}(f\ESHSpace{}x))\ESHEol
\ESHBol{}\textcolor[HTML]{346604}{let}\ESHSpace{}hausdorffDist\ESHSpace{}A\ESHSpace{}(d\ESHSpace{}:\ESHSpace{}A\ESHSpace{}\ESHUnicodeSubstitution{\ESHMathSymbol{\rightarrow}}\ESHSpace{}A\ESHSpace{}\ESHUnicodeSubstitution{\ESHMathSymbol{\rightarrow}}\ESHSpace{}\textcolor[HTML]{204A87}{\ESHUnicodeSubstitution{\ESHMathSymbol{\Re}}})\ESHSpace{}(k1\ESHSpace{}k2\ESHSpace{}:\ESHSpace{}Maximizer\ESHSpace{}A)\ESHSpace{}:\ESHSpace{}\textcolor[HTML]{204A87}{\ESHUnicodeSubstitution{\ESHMathSymbol{\Re}}}\ESHSpace{}=\ESHEol
\ESHBol{}\ESHSpace{}\ESHSpace{}max\ESHSpace{}(sup\ESHSpace{}k1\ESHSpace{}(\textcolor[HTML]{346604}{\ESHUnicodeSubstitution{\ESHMathSymbol{\lambda}}}\ESHSpace{}x1\ESHSpace{}:\ESHSpace{}A\ESHSpace{}\ESHUnicodeSubstitution{\ESHMathSymbol{\Rightarrow}}\ESHSpace{}inf\ESHSpace{}k2\ESHSpace{}(\textcolor[HTML]{346604}{\ESHUnicodeSubstitution{\ESHMathSymbol{\lambda}}}\ESHSpace{}x2\ESHSpace{}:\ESHSpace{}A\ESHSpace{}\ESHUnicodeSubstitution{\ESHMathSymbol{\Rightarrow}}\ESHSpace{}d\ESHSpace{}x1\ESHSpace{}x2)))\ESHEol
\ESHBol{}\ESHSpace{}\ESHSpace{}\ESHSpace{}\ESHSpace{}\ESHSpace{}\ESHSpace{}(sup\ESHSpace{}k2\ESHSpace{}(\textcolor[HTML]{346604}{\ESHUnicodeSubstitution{\ESHMathSymbol{\lambda}}}\ESHSpace{}x2\ESHSpace{}:\ESHSpace{}A\ESHSpace{}\ESHUnicodeSubstitution{\ESHMathSymbol{\Rightarrow}}\ESHSpace{}inf\ESHSpace{}k1\ESHSpace{}(\textcolor[HTML]{346604}{\ESHUnicodeSubstitution{\ESHMathSymbol{\lambda}}}\ESHSpace{}x1\ESHSpace{}:\ESHSpace{}A\ESHSpace{}\ESHUnicodeSubstitution{\ESHMathSymbol{\Rightarrow}}\ESHSpace{}d\ESHSpace{}x1\ESHSpace{}x2)))\ESHEol
\ESHBol{}\ESHEmptyLine{}\ESHEol
\ESHBol{}\textcolor[HTML]{346604}{let}\ESHSpace{}unitInterval\ESHSpace{}:\ESHSpace{}Maximizer\ESHSpace{}\textcolor[HTML]{204A87}{\ESHUnicodeSubstitution{\ESHMathSymbol{\Re}}}\ESHSpace{}=\ESHSpace{}max01\ESHEol
\ESHBol{}\textcolor[HTML]{346604}{let}\ESHSpace{}quarterCircle\ESHSpace{}(y\ESHSpace{}:\ESHSpace{}\textcolor[HTML]{204A87}{\ESHUnicodeSubstitution{\ESHMathSymbol{\Re}}})\ESHSpace{}:\ESHSpace{}Maximizer\ESHSpace{}(\textcolor[HTML]{204A87}{\ESHUnicodeSubstitution{\ESHMathSymbol{\Re}}}\ESHRaise{0.30}{2})\ESHSpace{}=\ESHSpace{}map\ESHEol
\ESHBol{}\ESHSpace{}\ESHSpace{}(\textcolor[HTML]{346604}{\ESHUnicodeSubstitution{\ESHMathSymbol{\lambda}}}\ESHSpace{}theta\ESHSpace{}:\ESHSpace{}\textcolor[HTML]{204A87}{\ESHUnicodeSubstitution{\ESHMathSymbol{\Re}}}\ESHSpace{}\ESHUnicodeSubstitution{\ESHMathSymbol{\Rightarrow}}\ESHSpace{}(cos\ESHSpace{}(pi\ESHSpace{}/\ESHSpace{}2\ESHSpace{}*\ESHSpace{}theta),\ESHSpace{}sin\ESHSpace{}(pi\ESHSpace{}/\ESHSpace{}2\ESHSpace{}*\ESHSpace{}theta)\ESHSpace{}+\ESHSpace{}y))\ESHEol
\ESHBol{}\ESHSpace{}\ESHSpace{}unitInterval\ESHEol
\ESHBol{}\textcolor[HTML]{346604}{let}\ESHSpace{}lShape\ESHSpace{}:\ESHSpace{}Maximizer\ESHSpace{}(\textcolor[HTML]{204A87}{\ESHUnicodeSubstitution{\ESHMathSymbol{\Re}}}\ESHRaise{0.30}{2})\ESHSpace{}=\ESHEol
\ESHBol{}\ESHSpace{}\ESHSpace{}union\ESHSpace{}(map\ESHSpace{}(\textcolor[HTML]{346604}{\ESHUnicodeSubstitution{\ESHMathSymbol{\lambda}}}\ESHSpace{}x\ESHSpace{}:\ESHSpace{}\textcolor[HTML]{204A87}{\ESHUnicodeSubstitution{\ESHMathSymbol{\Re}}}\ESHSpace{}\ESHUnicodeSubstitution{\ESHMathSymbol{\rightarrow}}\ESHSpace{}(x,\ESHSpace{}1))\ESHSpace{}unitInterval)\ESHEol
\ESHBol{}\ESHSpace{}\ESHSpace{}\ESHSpace{}\ESHSpace{}\ESHSpace{}\ESHSpace{}\ESHSpace{}\ESHSpace{}(map\ESHSpace{}(\textcolor[HTML]{346604}{\ESHUnicodeSubstitution{\ESHMathSymbol{\lambda}}}\ESHSpace{}y\ESHSpace{}:\ESHSpace{}\textcolor[HTML]{204A87}{\ESHUnicodeSubstitution{\ESHMathSymbol{\Re}}}\ESHSpace{}\ESHUnicodeSubstitution{\ESHMathSymbol{\Rightarrow}}\ESHSpace{}(1,\ESHSpace{}y))\ESHSpace{}unitInterval)\ESHEol
\ESHBol{}\textcolor[HTML]{346604}{let}\ESHSpace{}R2Dist\ESHSpace{}(a\ESHSpace{}b\ESHSpace{}:\ESHSpace{}\textcolor[HTML]{204A87}{\ESHUnicodeSubstitution{\ESHMathSymbol{\Re}}}\ESHRaise{0.30}{2})\ESHSpace{}:\ESHSpace{}\textcolor[HTML]{204A87}{\ESHUnicodeSubstitution{\ESHMathSymbol{\Re}}}\ESHSpace{}=\ESHSpace{}sqrt\ESHSpace{}((a[0]\ESHSpace{}\ESHDash{}\ESHSpace{}b[0])\ESHRaise{0.30}{2}\ESHSpace{}+\ESHSpace{}(a[1]\ESHSpace{}\ESHDash{}\ESHSpace{}b[1])\ESHRaise{0.30}{2})
\end{ESHBlock}
\codethensec
\caption{Generalized parametric surfaces and \dlang{} programs that manipulate them.}
\label{code:maximizers}
\vspace{-0.8em}
\end{figure}

\Cref{code:maximizers} shows an excerpt of the library for generalized parametric surfaces.
Note that generalized parametric surfaces shapes form a monad (representing nondeterminism),
with \ESHInline{\ESHBol{}point} and \ESHInline{\ESHBol{}indexedUnion} as \emph{return} and \emph{bind},
yielding a programming model for constructing shapes.

\TBD{Explain more}

\NA{Returning to the earlier Hausdorff-distance example,
note that the maximal distance on the ``L'' shape occurs at the corner point, which is represented twice, as the endpoint of each line; thus, a maximum is taken over two equal distances.
In \cite{plotkin-paper}, because the maximum operator is defined with a partial conditional statement, the result --- not to mention the derivative --- would be undefined.
Because both the values and derivatives are the same for the two representations of this corner point, the derivative is a maximal element.
Also note that we need second derivatives to compute the derivative of the Hausdorff distance, due to the use of \ESHInline{\ESHBol{}max01}.}

\section{Discussion}
\label{smooth:discussion}

In this section, we discuss the capability of \dlang{} to represent control flow as well as the opportunity to soundly speed up execution of higher-order primitives using derivative information.

\subsection{Control Flow: Conditionals and Recursion}
\label{discrete}

\dlang{} supports discrete spaces, including in particular the Booleans $\bool$ and any well-founded set (such as the natural numbers).
The recursion principles for these yield, respectively, if-then-else expressions and well-founded recursion.
These control-flow expressions must be independent of ``continuous data'': all maps from \emph{connected} spaces to discrete spaces are constant\thesisonly{, just as in \clang{}}.
\paperonly{This property defines connected spaces.}
Connected spaces include all vector spaces, such as $\R^n$.
\citet{edalat2013} explain some particular issues that demonstrate why implementing piecewise-differentiable functions with branching is problematic.

\subsection{Optimizing Higher-Order Primitives with Derivative Information}

We can also use the fact that functions in \dlang{} come equipped with all their derivatives to opportunistically speed up some operations.
For instance, consider applying \ESHInline{\ESHBol{}cut\_root} to some function $f$.
Its value-level definition naturally maps to a bisection-like algorithm on the values of $f$.
However, since we have access to $f^{(1)}$, we can use a variation of Newton's method generalized to interval arithmetic to speed up the convergence drastically, and indeed we do this in our implementation.
Note that we are guaranteed that this optimization is sound, because consistency of differentiation ensures that $f^{(1)}$ appropriately reflects $f^{(0)}$.
It may be the case that $f^{(1)}$ returns $\bot$ at some points, or even everywhere, in which case the algorithm falls back on bisection to ensure progress.

Similarly, the literal interpretation of the value-level definition of Riemannian integration in \Cref{ho-semantics} maps to a quadrature method that uses only the values $f^{(0)}$ of $f$.
However, the availability of higher derivatives of $f$ makes it possible to use interval-based versions of higher-order integration methods, which can also drastically speed up the convergence.
We do not use these higher-order methods by default in our actual implementation.

\Comment{
\subsection{Future Work}

\citet{neural-odes} introduce deep learning models that incorporate \emph{neural ODEs}, i.e., ordinary differential equations (ODEs) to be differentiated through.
It should be possible to implement an arbitrary-precision ODE solver in \clang{}, following \cite{edalat2003}.
This could then be lifted to a higher-order function in \dlang{}, since its derivatives are themselves ODEs.
}

\section{Related Work}
\label{smooth:related-work}

\Cref{ad-semantics} illustrates the unique set of features that \dlang{} provides and their relationship to other approaches to AD semantics.
We note that these other approaches also have features that \dlang{} lacks.

\TBD{JM: I think that the related work section starts off too defensive and without enough motivation. I think that addressing Mike's comment on the value/growth of differentiable programming could be helpful.}

\TBD{Explain why so many people in PL semantics seem to care about differentiable programming semantics.}

\begin{table}[]
\caption[Summary of approaches to semantics of differentiable programming and their properties.]{Summary of other approaches to semantics of differentiable programming and their properties.
\\
\emph{Higher-order derivatives}: The differentiation operator can be iterated arbitrarily many times (when applied to smooth functions).
\emph{Higher-order functions}: A concrete test: is $\mathsf{twice}(f : \R \to \R)(x : \R) : \R \triangleq f(f(x))$ admitted?
\emph{Non-differentiable functions}: Some nondifferentiable functions are admitted. A concrete test: is $\max : \R^2 \to \R$ admitted? ``\emph{Clarke derivative}'' indicates that locally Lipschitz functions support derivatives in the sense of Clarke derivatives or L-derivatives \cite{edalat2004}, whereas ``\emph{partiality}'' indicates that nondifferentiable maps are supported by considering them to be partial at their discontinuities.
}
\begin{tabular}{l | p{2.1cm} p{2.1cm} p{3.2cm}}
 & higher-order functions & higher-order derivatives & nondifferentiable functions \\ \hline
\citet{vakar} & \Yes{} & \Yes{} & \No{} \\ \hline
\citet{edalat2013} & \Yes{} & \No{} & \Yes{} (Clarke derivative) \\ \hline
\citet{elliott-ad} & \No{} & \No{} & \No{} \\ \hline
\citet{plotkin-paper} & \No{} & \Yes{} & \Yes{} (partiality) \\ \hline
\citet{sigal} & \No{} & \Yes{} & \Yes{} (partiality) \\ \hline
\citet{diffcurry} & \Yes{} & \No{} & \No{} \\ \hline
\citet{huot} & \Yes{} & \Yes{} & \No{} \\ \hline
\citet{differential} & \Yes{} & \Yes{} & \No{} \\ \hline
\textbf{\dlang{} (this work)} & \Yes{} & \Yes{} & \Yes{} (Clarke derivative) \\ \hline
\end{tabular}
\label{ad-semantics}
\vspace{-0.8em}
\end{table}

\citet{edalat2013} describe a programming language for nonexpansive (i.e., Lipschitz constant 1) functions on the interval $[-1, 1]$ with a differentiation operator that applies to functions from $[-1, 1]$ to $[-1, 1]$.
The semantics of this differentiation operator are that of the L-derivative \citep{edalat2004, edalat2008}, which is closely related to the Clarke Jacobian definition we use.
Their domain-theoretic account ensures computability: in theory, results can be computed to arbitrary precision.
Their semantics is fundamentally limited to first-order derivatives: their interval type denotes $[-1,1] \times [-1, 1]$, corresponding to a \emph{dual-number} representation, baking in that limited capability.
It is unclear how that representation could be generalized directly to permit higher-order differentiation and appropriately handle nested differentiation (without the \emph{perturbation confusion} \citep{siskind2005perturbation} that may arise with nested differentiation).

\citet{elliott-higher-ad} presents a data type for representing smooth maps, where a smooth map $f$ is represented by the collection of its $k$th derivatives for all $k$.
\citet{elliott-higher-ad} defines the derivatives of some arithmetic functions as well as some categorical operations,
though the definition of composition of smooth maps is incorrect.
We support higher-order derivatives by adapting this representation for the Clarke derivative.

\citet{vakar} presents the semantics of a differentiable programming language that supports higher-order functions and higher-order derivatives using the quasitopos of diffeological spaces.
As a quasitopos, the semantics supports higher-order functions and quotient types.
\citet{vakar} show an internal derivative operator that can be applied to any function of any type, and thus can be applied repeatedly for higher-order derivatives.
We based our internal derivative operator on theirs.
Functions such as \ESHInline{\ESHBol{}max} that are not smooth are not admissible.
It is not made clear how one could implement a differentiable programming language supporting the expressive possibilities suggested by the semantics.

None of the works in \Cref{ad-semantics} describe higher-order functions for root-finding, optimization, or integration, nor do they describe datatypes for implicit surfaces, compact shapes, or probability distributions.
\citet{edalat2004} describe an integration operator in a domain-theoretic framework for differential calculus, but it does not handle higher-order derivatives.
\citet{icfp19} describe computable higher-order functions and libraries for root-finding, optimization, and integration, but does not admit differentiation of any sort.

We follow \citet{icfp19} in our approach to computability. We are unaware of any system that computes arbitrary-precision derivatives (given the definition of the function) in any capacity.

\section{Conclusion}
\label{smooth:conclusion}

This \paperonly{paper}\thesisonly{chapter} demonstrates how to compute and make sense of derivatives of higher-order functions, such as integration, optimization, and root-finding and at higher-order types, such as probability distributions, implicit surfaces, and generalized parametric surfaces.
Our libraries and case studies model existing differentiable algorithms, for instance, a differentiable ray tracer for implicit surfaces, without requiring any user-level differentiation code, in addition to demonstrating new differentiable algorithms, such as computing derivatives of the Hausdorff distance of generalized parametric surfaces.
Ideally, the ideas \dlang{} demonstrates may enable differentiable programming frameworks to support the new abstractions and expressivity suggested by this paper.


\begin{acks}
We thank Eric Atkinson, Tej Chajed, Alexander Lew, Alex Renda, and David Spivak, as well as the anonymous reviewers, for their helpful feedback and discussions.
This work was supported in part by the Office of Naval Research (ONR-N00014-17-1-2699). Any opinions, findings, and conclusions or recommendations expressed in this material are those of the author and do not necessarily reflect the views of the Office of Naval Research.
\end{acks}

\smoothpaperend

\end{document}